\setlist[enumerate]{itemsep=0mm}
\newcommand{\ketbra}[2]{\left|#1\right\rangle\!\left\langle #2\right|}
\newcommand{\braketb}[2]{\left\langle #1 | #2 \right\rangle}
\newcommand{\suppress}[1]{}
\def\squareforqed{\hbox{\rlap{$\sqcap$}$\sqcup$}}
\def\qed{\ifmmode\squareforqed\else{\unskip\nobreak\hfil
\penalty50\hskip1em\null\nobreak\hfil\squareforqed
\parfillskip=0pt\finalhyphendemerits=0\endgraf}\fi}
\newtheorem{theorem}{Theorem}
\newtheorem{lemma}[theorem]{Lemma}
\newtheorem{corollary}[theorem]{Corollary}
\newenvironment{proofof}[1]{\begin{proof}[Proof of~#1.]}{\end{proof}}
\newcommand{\eq}[1]{\hyperref[eq:#1]{(\ref*{eq:#1})}}
\renewcommand{\sec}[1]{\hyperref[sec:#1]{Section~\ref*{sec:#1}}}
\newcommand{\app}[1]{\hyperref[app:#1]{Appendix~\ref*{app:#1}}}
\newcommand{\fig}[1]{\hyperref[fig:#1]{Figure~\ref*{fig:#1}}}
\newcommand{\thm}[1]{\hyperref[thm:#1]{Theorem~\ref*{thm:#1}}}
\newcommand{\lem}[1]{\hyperref[lem:#1]{Lemma~\ref*{lem:#1}}}
\newcommand{\cor}[1]{\hyperref[cor:#1]{Corollary~\ref*{cor:#1}}}
\newcommand{\defn}[1]{\hyperref[def:#1]{Definition~\ref*{def:#1}}}
\newcommand{\alg}[1]{\hyperref[alg:#1]{Algorithm~\ref*{alg:#1}}}
\newcommand{\tab}[1]{\hyperref[tab:#1]{Algorithm~\ref*{tab:#1}}}
\newcommand{\qw}[1][-1]{\ar @{-} [0,#1]}
\newcommand{\qwx}[1][-1]{\ar @{-} [#1,0]}
\newcommand{\gate}[1]{*+<.6em>{#1} \POS ="i","i"+UR;"i"+UL **\dir{-};"i"+DL **\dir{-};"i"+DR **\dir{-};"i"+UR **\dir{-},"i" \qw}
\newcommand{\meter}{*=<1.8em,1.4em>{\xy ="j","j"-<.778em,.322em>;{"j"+<.778em,-.322em> \ellipse ur,_{}},"j"-<0em,.4em>;p+<.5em,.9em> **\dir{-},"j"+<2.2em,2.2em>*{},"j"-<2.2em,2.2em>*{} \endxy} \POS ="i","i"+UR;"i"+UL **\dir{-};"i"+DL **\dir{-};"i"+DR **\dir{-};"i"+UR **\dir{-},"i" \qw}
\newcommand{\control}{*!<0em,.025em>-=-<.2em>{\bullet}}
\newcommand{\controlo}{*+<.01em>{\xy -<.095em>*\xycircle<.19em>{} \endxy}}
\newcommand{\ctrl}[1]{\control \qwx[#1] \qw}
\newcommand{\ctrlo}[1]{\controlo \qwx[#1] \qw}
\newcommand{\targ}{*+<.02em,.02em>{\xy ="i","i"-<.39em,0em>;"i"+<.39em,0em> **\dir{-}, "i"-<0em,.39em>;"i"+<0em,.39em> **\dir{-},"i"*\xycircle<.4em>{} \endxy} \qw}
\newcommand{\qswap}{*=<0em>{\times} \qw}
\newcommand{\multigate}[2]{*+<1em,.9em>{\hphantom{#2}} \POS [0,0]="i",[0,0].[#1,0]="e",!C *{#2},"e"+UR;"e"+UL **\dir{-};"e"+DL **\dir{-};"e"+DR **\dir{-};"e"+UR **\dir{-},"i" \qw}
\newcommand{\ghost}[1]{*+<1em,.9em>{\hphantom{#1}} \qw}
\newcommand{\rstick}[1]{*!L!<-.5em,0em>=<0em>{#1}}
\newcommand{\lstick}[1]{*!R!<.5em,0em>=<0em>{#1}}
\newcommand{\Qcircuit}{\xymatrix @*=<0em>}
\begin{document}

\title{Quantum  Algorithms for Nearest-Neighbor Methods for Supervised and Unsupervised Learning}
\author{Nathan Wiebe$^\dagger$}
\author{Ashish Kapoor$^*$}
\author{Krysta M.~Svore$^\dagger$}

\affiliation{$^\dagger$Quantum Architectures and Computation Group, Microsoft Research, Redmond, WA (USA)\\
$^*$Adaptive Systems and Interaction Group, Microsoft Research, Redmond, WA (USA)}
\pacs{03.67.Lx}

\begin{abstract}
We present quantum algorithms for performing nearest-neighbor learning and $k$--means clustering.
At the core of our algorithms are fast and coherent quantum methods for computing the Euclidean distance both directly and via the inner product which we couple with
methods for performing amplitude estimation that do not require measurement.
We prove upper bounds on the number of queries to the input data required to compute such distances and find the nearest vector to a given test example.
In the worst case, our quantum algorithms lead to polynomial reductions in query complexity relative to Monte Carlo algorithms.
We also study the performance of our quantum nearest-neighbor algorithms on several real-world binary classification tasks and find that the classification accuracy is competitive with classical methods.  
\end{abstract}

\maketitle

Quantum speedups have long been known for problems such as factoring, quantum simulation and optimization.  Only recently have quantum algorithms begun to emerge that promise quantum advantages for solving problems in data processing, machine learning and classification.  Practical algorithms for these problems promise to provide applications outside of the physical sciences where quantum computation may be of great importance.  Our work provides an important step towards the goal of understanding the value of quantum computing to machine learning by rigorously showing that quantum speedups can be achieved for nearest--neighbor classification.

Consider the task faced by the U.S. postal service of routing over 150 billion pieces of mail annually~\cite{USPS}.
The sheer magnitude of this problem necessitates the use of software to automatically recognize the handwritten digits and letters that form the address of a recipient.
The nearest-neighbor algorithm is commonly used to solve tasks such as handwriting recognition due to its simplicity and high performance accuracy \cite{CH67}.
Nearest--neighbor classification algorithms also have found extensive use in detecting distant quasars using massive data sets from galactic surveys~\cite{BBM+06,BB10,GPK+10}.  These searches can be very computationally intensive as they often require analyzing over $60$ terabytes worth of data in some cases.  Hence finding distant quasars (which are luminous enough to be detected billions of light years away) requires very accurate classifiers that will not frequently make a false positive assignment.  Nearest--neighbor methods are often used because high assignment accuracy is needed given such massive data sets.  However, a major drawback of using classical nearest-neighbor classification is that it can be computationally expensive.

We show quantum computation can be used to polynomially reduce the query complexity of nearest--neighbor classification via fast quantum methods for computing distances between vectors and the use of amplitude estimation in concert with Grover's search.  This is no easy task as both of these methods require measurement but cannot be applied to a subroutine that exploits measurement in a non--trivial way. We address this problem by providing methods for removing measurements from the distance calculations and amplitude estimation so that the two can be used together coherently.  By combining these techniques, we achieve near--quadratic advantages over Monte--Carlo algorithms.  These improvements can help mitigate the main drawback of nearest--neighbor classification: its computational expense relative to other less exact methods.

Our quantum algorithms also reveal new possibilities for quantum machine learning that would not be practical for classical machine learning algorithms.  Our algorithm can be used to classify data that is output from a quantum simulator that cannot be efficiently simulated classically, similar to~\cite{WGFC14}.  This would allow fast algorithms to be constructed that rapidly classify a chemical using a large database of chemicals whose features can be computed on the fly using quantum simulation algorithms.  Unlike classical examples, the features (i.e. results from simulated experiments on the molecule) used for the classification do not need to be fixed but can vary from instance to instance.  This form of classification is not practical in a classical setting because it would require a prohibitively large database of features.  This is application is especially compelling for unsupervised machine learning wherein algorithms such as $k$--means clustering can be used to find appropriate classifications without human intervention.  We will see that quantum $k$--means clustering follows as a direct consequence of the methods that we develop for nearest--neighbor classification.

The paper is laid out as follows.  We review nearest-neighbor classification in~\sec{nn}.  We then discuss existing approaches to quantum classification and outline our quantum algorithms in and the oracles that they use in~\sec{oracle}.  We give our main results for the costs of performing nearest-neighbor classification using the ``inner product method'' and the ``Euclidean method'' in~\sec{ip} and~\sec{euclid} respectively.  We then provide numerical results in~\sec{numerics} that show that if our algorithms were to be applied to real world data sets methods then substantial errors in the estimates of the distance can be tolerated without degrading the assignment accuracy.  We compare our results to Monte--Carlo nearest--neighbor classification in~\sec{MC} and apply our results to $k$--means clustering in~\sec{kmeans} before concluding.  Proofs for all of our theorems are given in~\app{proofs}.

\section{Nearest-neighbor classification}\label{sec:nn}
The idea of nearest--neighbor classification is to classify based on experience.  The classifier attempts to classify a piece of data (known as a test vector) by comparing it to a set of training data that has already been classified by an expert (usually a human).  The test example is then assigned to the class of the training example that has the most similar features.  These features are often expressed as real--valued vectors, and can represent nearly any parameters of the data that the user wants to use as a basis for the classification.  The most similar vector is then calculated to be the training vector that is closest to the test vector using some appropriate distance metric (such as the Euclidean distance).  The test vector is then given the same classification that the expert gave to the closest training vector.

A major advantage of nearest--neighbor classification is that it tends to be very accurate given a large training set because as the data set gets larger.  This is because it becomes more likely that the test vector will be similar to a training vector in the database.  Both the accuracy and the simplicity of the method have caused nearest--neighbor classification to become a mainstay in classification and machine learning.  The three main drawbacks of nearest--neighbor approaches are that statistical outliers can lead to false positives, it is sensitive to errors on the training labels and  the algorithm can become prohibitively expensive for classical computers in cases where the database or the dimension of the training vectors  is large.  The former two problems can be dealt with by using $k$--nearest--neighbors classification, which is a straight forward generalization of nearest--neighbor classification, where the classification is generalized to a function of the $k$-nearest-neighbors.~\cite{FN75}.  Addressing the latter problem using quantum strategies is the focus of this paper.

As a concrete example of nearest--neighbor classification, consider the binary classification task of determining if a given unlabeled \emph{test} digit is \emph{even} or \emph{odd}.
The \emph{training} data consists of handwritten digits expressed as multidimensional feature vectors, each with a human-assigned label of either even or odd.
The entries of each feature vector $\vec{v}\in \mathbb{R}^N$, where $N$ is the number of features used to characterize the digit, are the pixel values that comprise the image.  \fig{example} shows an example of $25$ digits, each of which is represented by a $256$-dimensional feature vector of pixel values.

\begin{figure}[t]
\begin{minipage}{0.45\linewidth}
\includegraphics[width=0.5\columnwidth]{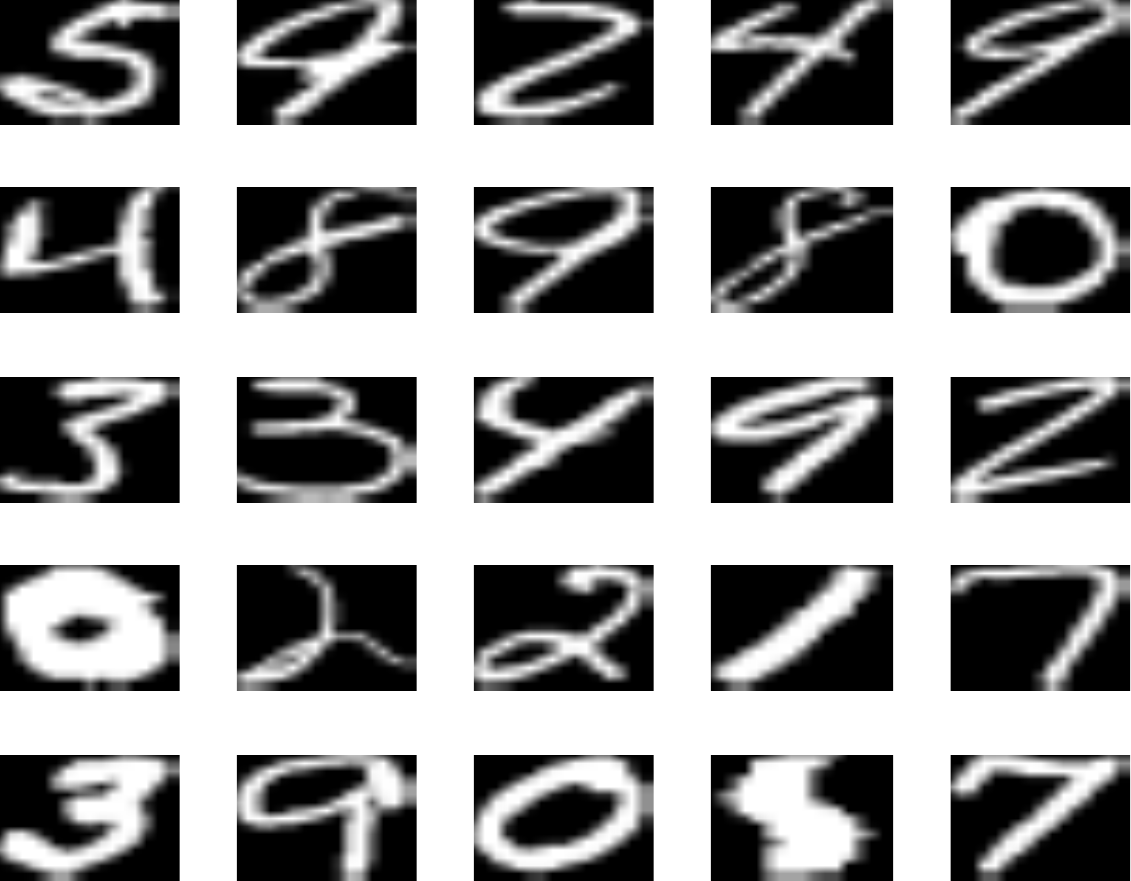}
\caption{An example of $25$ handwritten digits.  Each digit is stored as a $256$-pixel greyscale image and represented as a unit vector with $N=256$ features.\label{fig:example}}
\end{minipage}
\hspace{1mm}
\begin{minipage}{0.45\linewidth}
\includegraphics[width=\columnwidth]{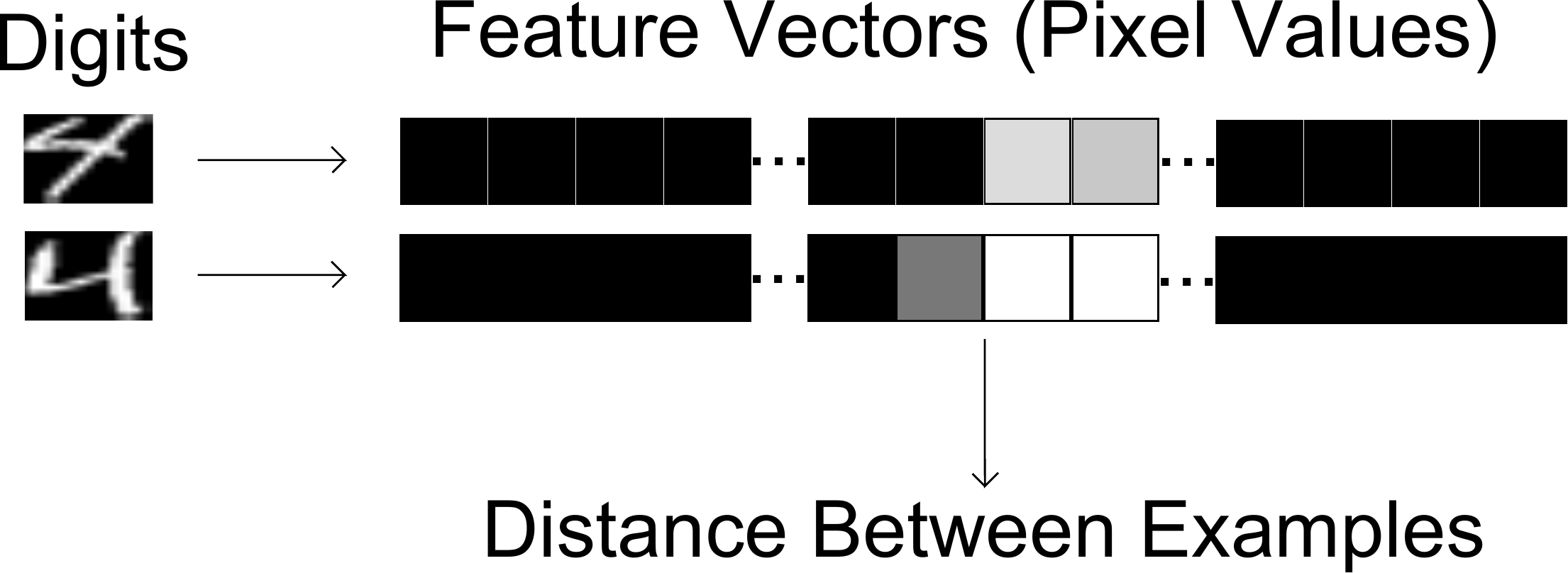}
\caption{Schematic process for processing raw data into feature vectors into distances for handwriting recognition.}
\end{minipage}
\end{figure}

We can divide the training set into two sets, or clusters, of vectors in $\mathbb{R}^N$, $\{A\}$ and $\{B\}$, such that $\{A\}$ contains only odd examples, $\{B\}$ contains only even examples,  and $|\{A\}|+|\{B\}|=M_A + M_B=M$.
The goal is to classify, or label, a given unlabeled \emph{test} point  $\vec{u}\in \mathbb{R}^N$ as $\vec{u}\in \{A\}$ or $\vec{u}\in \{B\}$.  More generally, vectors in $\mathbb{C}^N$ can also be used.
Here we take $N$ and $M$ to be large and cost the algorithm by the number of times the components of the vectors must be accessed during the classification procedure. 

The nearest-neighbor algorithm first computes the distance between the test vector $\vec{u}$ and each training vector $\vec{v}$ and then assigns $\vec{u}$ to the cluster that contains the closest vector to $\vec{u}$.
Specifically, it assigns $\vec{u}$ to $\{A\}$ if
$$
\min_{\vec{a}\in \{A\}}|\vec{u} -\vec{a}| \le \min_{\vec{b}\in \{B\}}|\vec{u} - \vec{b}|,
$$
for an appropriate distance metric $|\vec{u}-\vec{v}|$.  
%Extensions of the nearest-neighbor algorithm include $k$-nearest-neighbors, where a function, such as majority, of the labels of the $k$ nearest training vectors is used to determine the label of the test point~\cite{FN75}.
Directly computing these quantities using the classical nearest-neighbor algorithm requires $O(NM)$ accesses to the components of these vectors..

\section{Quantum Nearest--Neighbor Classification}\label{sec:oracle}
Quantum computation shows promise as a powerful resource for accelerating certain classical machine learning algorithms~\cite{ABS06,DCH+08,LMR13,RML13} so it is natural to suspect that it may also be useful for quantum nearest--neighbor clasisfication.  
However, a major challenge facing the development of practical quantum machine learning algorithms  is the need for an oracle to return, for example, the distances between elements in the test and training sets.
Lloyd, Mohseni, and Rebentrost~\cite{LMR13} recently proposed a quantum algorithm that addresses this problem.
Namely, their algorithm computes a representative vector for each set, referred to as a \emph{centroid}, by averaging the vectors in $\{A\}$ and $\{B\}$, respectively.  The test point $\vec{u}$ is assigned to $\{A\}$ if the distance to the centroid of $\{A\}$, written as ${\rm mean}(\{A\})$, is smallest,
$$
|\vec{u} - {\rm mean}(\{A\})| \le |\vec{u} - {\rm mean}(\{B\})|,
$$
and $\{B\}$ otherwise.
%We refer to this approach as \emph{centroid--based} classification.
%It requires time COMPLEXITY on a quantum computer, in contrast to time $O(NM)$ on a classical computer (assuming that the feature vectors are not sparse).

Note that this algorithm, which we refer to as \emph{nearest--centroid classification} is a form of nearest-neighbor classification, where the nearest \emph{centroid} is used to determine the label, as opposed to the nearest training point.  If the number of clusters equals the number of points, then it reduces to nearest--neighbor classification. 
In practice, nearest--centroid classification can perform poorly because $\{A\}$ and $\{B\}$ are often embedded in a complicated manifold where the mean values of the sets are not within the manifold \cite{Lev05}.  
In contrast, nearest--neighbor classification tends work well in practice and often outperforms centroid--based classification but can be prohibitively expensive on classical computers~\cite{SCT02}.
%A natural question arises: can we harness quantum processing for a classical machine learning algorithm that performs well in practice?
%In practice, centroid--based assignment results in low classification accuracy in naturally occurring machine learning problems, and hence the assignment given by such quantum algorithms may not necessarily be enough for it be the basis for an important decision.

Therefore, we present a quantum nearest--neighbor algorithm that assigns a point $\vec{u}$ to either cluster $\{A\}$ or $\{B\}$ such that both the probability of a faulty assignment and the number of quantum oracle queries is minimized.  
We consider two different ways of computing the distance within our algorithm: 
\begin{enumerate}
\itemsep 1pt
\parskip 0pt
\parsep 0pt
\item the inner product, $|\vec{u} - \vec{v}| = |\vec{u}||\vec{v}|-\vec{u}\cdot \vec{v}$, % = 1 - \sum_{i=1}^N u_iv_i$,
\item the Euclidean distance, $|\vec{u} - \vec{v}| = \sqrt{\sum_{i=1}^N (u_i - v_i)^2}$.
\end{enumerate}
%\comment{KMS: Do we need to be more specific that we present a nearest-neighbor algorithm, and  then a nearest-centroid algorithm that overcomes some of the drawbacks of Seth's?}
It is easy to verify that these two measures are equivalent up to constants of proportionality for unit vectors, and so ideally both algorithms will provide exactly the same classification when applied to nearest--neighbor classification.
Our quantum algorithm overcomes the main drawback of the nearest-centroid approach in~\cite{LMR13}: low assignment accuracy in many real--world problems (as seen in~\sec{numerics} and~\app{numerics}).

Throughout, the test point is set to $\vec{v}_0:=\vec{u}$ and the training set consists of $\vec{v}_j$, for $j=1,\ldots,M$.
We assume the following: 
\begin{enumerate}
\item The input vectors are $d$--sparse, i.e., contain no more than $d$ non--zero entries.  Here $d$ can be as large as $N$, but if $d$ is chosen to be much larger than the true sparsity of the vectors then the performance of our algorithms will suffer.
\item Quantum oracles are provided in the form
\begin{align}
%\mathcal{O}\ket{j}\ket{i} \ket{0}&:=\begin{cases}j>0&\ket{j}\ket{i}\ket{v_{ji}}\\ j=0& \ket{0}\ket{i}\ket{v_{0i}}\end{cases}\nonumber\\
\mathcal{O}\ket{j}\ket{i} \ket{0}&:=\ket{j}\ket{i}\ket{v_{ji}},\nonumber\\
\mathcal{F}\ket{j}\ket{\ell}&:= \ket{j}\ket{f(j,\ell)},
\end{align}
where $v_{ji}$ is the $i^{\rm th}$ element of the $j^{\rm th}$ vector and $f(j,\ell)$ gives the location of the $\ell^{\rm th}$ non--zero entry in $\vec{v}_j$. 
\item The user knows an upper bound $r_{\max}$ on the absolute value of any component of $v_{ji}$.  
\item Each vector is normalized to $1$, for convenience (this is not necessary).  Non--unit vectors can be accommodated by rescaling the inner products computed using the inner product method, or by replacing $\vec{u}$ by the weighted average of $\vec{u}$ and $-\vec{u}$ adding fictitious vectors to $\vec{u}$ and $\vec{v}$ to make $\vec{u}$ and $\vec{v}$. 
\item The run time of the algorithm is dominated by the number of queries made to oracles $\mathcal{O}$ and $\mathcal{F}$.
\end{enumerate}

The main results that we find for performing nearest--neighbor classification under these assumptions are
\begin{enumerate}
\itemsep 1pt
\parskip 0pt
\parsep 0pt
\item The number of queries depends on $d r_{max}^2$ rather than on the feature dimension $N$ or the sparsity $d$ alone.
Thus, for practical applications the query complexity is typically independent of the number of features (i.e., $r_{\rm max}\propto 1/\sqrt{d}$).
\item The number of queries scales as $O(\sqrt{M}\log(M))$ rather than $M$ for nearest--neighbor classification.  The query complexity of our nearest--centroid algorithm (which uses the Euclidean method) can be independent of $M$.
\item Our algorithm can tolerate relatively large errors in distance calculations when applied to real--world classification problems.
\end{enumerate}
%Our results suggest that quantum methods are invaluable for speeding-up classical machine learning techniques.

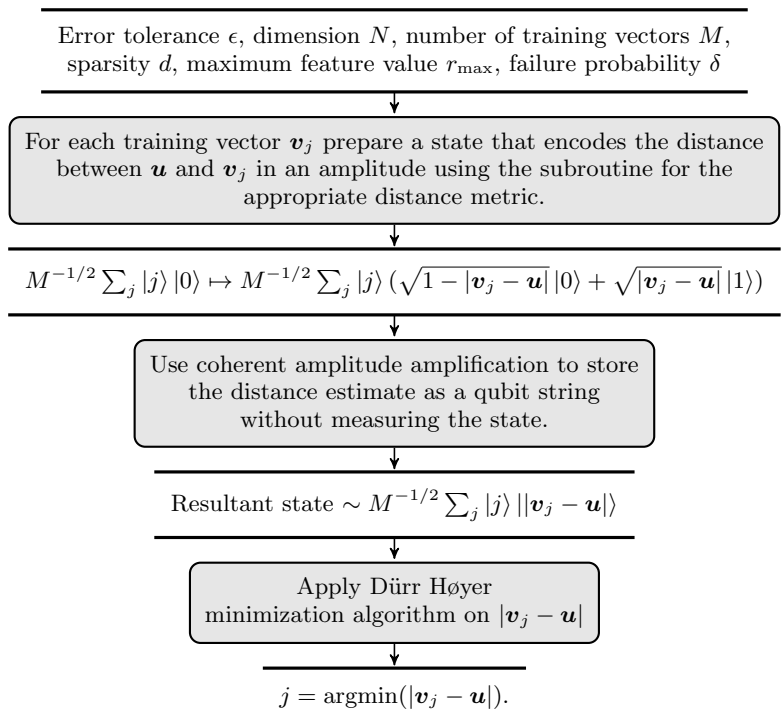
\begin{figure}[tb]
\centering
\makeatletter
\pgfdeclareshape{datastore}{
\inheritsavedanchors[from=rectangle]
\inheritanchorborder[from=rectangle]
\inheritanchor[from=rectangle]{center}
\inheritanchor[from=rectangle]{base}
\inheritanchor[from=rectangle]{north}
\inheritanchor[from=rectangle]{north east}
\inheritanchor[from=rectangle]{east}
\inheritanchor[from=rectangle]{south east}
\inheritanchor[from=rectangle]{south}
\inheritanchor[from=rectangle]{south west}
\inheritanchor[from=rectangle]{west}
\inheritanchor[from=rectangle]{north west}
\backgroundpath{
% store lower right in xa/ya and upper right in xb/yb
\southwest \pgf@xa=\pgf@x \pgf@ya=\pgf@y
\northeast \pgf@xb=\pgf@x \pgf@yb=\pgf@y
\pgfpathmoveto{\pgfpoint{\pgf@xa}{\pgf@ya}}
\pgfpathlineto{\pgfpoint{\pgf@xb}{\pgf@ya}}
\pgfpathmoveto{\pgfpoint{\pgf@xa}{\pgf@yb}}
\pgfpathlineto{\pgfpoint{\pgf@xb}{\pgf@yb}}
}
}
\makeatother
\usetikzlibrary{arrows}
\begin{tikzpicture}[
every matrix/.style={ampersand replacement=\&,column sep=1cm,row sep=0.3cm},
sink/.style={draw,thick,rounded corners,fill=gray!20,inner sep=.2cm},
datastore/.style={draw,very thick,shape=datastore,inner sep=.2cm},
to/.style={->,>=stealth',shorten >=1pt,semithick},
every node/.style={align=center}]
% Position the nodes using a matrix layout
\matrix{
\& \node[datastore] (ang) {Error tolerance $\epsilon$, dimension $N$, number of training vectors $M$,\\
 sparsity $d$, maximum feature value $r_{\rm max}$, failure probability $\delta$}; \& \\
\& \node[sink] (appr) {For each training vector $\vec v_j$ prepare a state that encodes the distance\\  between $\vec u$ and $\vec v_j$ in an amplitude using the subroutine for the \\appropriate distance metric.}; \& \\
\& \node[datastore] (inia) {$M^{-1/2}\sum_j \ket{j}\ket{0} \mapsto M^{-1/2}\sum_j\ket{j}(\sqrt{1-|\vec v_j- \vec u|}\ket{0}+\sqrt{|\vec v_j- \vec u|}\ket{1})$}; \& \\
\& \node[sink] (rsea) {Use coherent amplitude amplification to store\\ the distance estimate as a qubit string \\ without measuring the state.}; \& \\
\& \node[datastore] (moda) {Resultant state $\sim M^{-1/2} \sum_j \ket{j}\ket{|\vec v_j- \vec u|}$}; \& \\
\& \node[sink] (design) {Apply D\"urr H\o yer\\ minimization algorithm on $|\vec v_j- \vec u|$}; \& \\
\& \node[datastore] (matrix) {$j={\rm argmin}( |\vec v_j- \vec u|)$.}; \& \\
};
\draw[to] (ang) --(appr);
\draw[to] (appr) --(inia);
\draw[to] (inia) --(rsea);
\draw[to] (rsea) --(moda);
\draw[to] (moda) --(design);
\draw[to] (design) --(matrix);
\end{tikzpicture}
\caption[Compilation algorithm]{High level description of the structure of the quantum nearest--neighbor algorithm.  Here $\ket{|\vec{v}_j - \vec{u}|}$ denotes a quantum state that holds a qubit string that represents the distance $|\vec{v}_j - \vec{u}|$.
}
\label{fig:alg}
\end{figure}

Implementation of oracular algorithms will require instantiation of the oracles, which are an abstraction of the many ways for algorithms to interact with data.
If the task is to classify chemicals, the oracle query could represent a call to an efficient quantum simulation algorithm that yields physical features of the chemicals~\cite{WBA11,WBC+13}.  
In other cases, the oracle query could represent accesses to a large quantum database that contains classical bit strings.  
One way to construct such a database is to use a quantum random access memory (qRAM)~\cite{GLM08}, however alternate implementations are possible.  
In this work, we assume oracles are provided and show how to minimize the number of queries to the oracle.

There are three phases to our quantum algorithm, which are laid out in~\fig{alg} and are given in pseudocode in~\app{pseudo}.  In the first phase we use the oracles $\mathcal{O}$ and $\mathcal{F}$ to prepare states of the form
$$
M^{-1/2}\sum_{j=1}^M\ket{j}\sum_{i=1}^N \left(\sqrt{1-v_{ji}^2/r_{\max}^2}\ket{0}+\sqrt{\frac{v_{ji}}{r_{\rm max}}}\ket{1}\right).
$$
This approach would be analogous to the Grover Rudolph state preparation procedure~\cite{GR02} if we were to measure the last qubit to be $1$.  We avoid such measurements because they would disallow subsequent applications of amplitude amplification.  Such states are then used in a circuit whose success probability, for each fixed $j$, encodes the distance between $\vec v_j$ and the test vector $\vec u$.

The second phase uses amplitude estimation~\cite{BHM+00} to estimate these success probabilities and store them in an ancilla register.  If we were to directly apply amplitude estimation to find these distances we would not only remove the possibility of using Grover's search to find the closest vector, so we use a form of amplitude estimation that we call coherent amplitude estimation to forgo the measurement at the price of preparing a logarithmically large number of copies of the state.  This results in a state that is, up to local isometries, approximately
$$
M^{-1/2}\sum_{j=1}^M\ket{j}\ket{|\vec{v}_j -\vec u|}.
$$

The final step of the algorithm is then to use Grover's search, in the form of the D\"urr H\o yer~\cite{DH96} minimization algortihm to search for the closest $\vec v_j$ to $\vec u$.  Since none of the prior steps uses measurement, their algorithm can be used without modification.  This means that we can achieve quadratic reductions in the scaling of the algorithm with respect to the error tolerance (due to amplitude estimation) and also with respect to $M$ (due to D\"urr H\o yer).

These steps are common to both of our approaches to nearest-neighbor classification.  We first discuss the inner product approach.  We then discuss an alternative approach that uses a circuit originally designed for implementing linear combinations of unitary operations to directly compute the Euclidean distance.  The Euclidean approach is conceptually more interesting, however,  because it naturally generalizes nearest--neighbor methods to nearest centroid methods.  In the case of nearest--neighbor classification, both methods provide exactly the same classification.  We provide both methods because the number of queries required to use the inner product method can sometimes be lower than that required by the Euclidean method.
\section{Inner Product Method}\label{sec:ip}
%\textit{Inner Product Classification. }
We first describe a quantum nearest--neighbor algorithm that directly computes the square of the inner product between two vectors to compute the distance.  
We show, somewhat surprisingly, that the required number of oracle queries does not explicitly depend on the number of features $N$. 
Rather, it depends implicitly on $N$ through $dr_{\max}^2$ and $\epsilon$.  
%We assume $d$ is a power of $2$.
\begin{theorem}
Let $\vec{v}_0$ and $\{\vec{v}_j:j=1,\ldots,M\}$ be $d$--sparse unit vectors such that $\max_{j,i} |v_{ji}|\le r_{\max}$, then the task of finding $\max_j |\braketb{\vec{u}}{\vec{v}_j}|^2$ within error at most $\epsilon$ and  with success probability at least $1-\delta_0$ requires an expected number of combined queries to $\mathcal{O}$ and $\mathcal{F}$ that is bounded above by
$$
1080\sqrt{M} \left\lceil\frac{4\pi(\pi+1) d^2 r_{\max}^4}{\epsilon} \right\rceil\left\lceil\frac{\ln\left(\frac{81M(\ln(M) +\gamma)}{\delta_0} \right)}{2(8/\pi^2 -1/2)^2} \right\rceil,
$$\label{thm:innerproduct}
where $\gamma \approx 0.5772$ is Euler's constant.\label{thm:ip}
\end{theorem}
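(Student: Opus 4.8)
The plan is to realize the three-phase pipeline of \fig{alg} and to track the error and failure-probability budgets through each phase. Phase one prepares, for every training index $j$ simultaneously, a state whose success probability is a simple function of $|\braketb{\vec u}{\vec v_j}|^2$; phase two reads that probability coherently into an ancilla register using a measurement-free variant of amplitude estimation; phase three runs D\"urr--H\o yer minimization on the resulting register. The three ceiling factors in the bound come respectively from the D\"urr--H\o yer iteration count (the $\sqrt M$ and the constant $1080$), the amplitude-estimation precision needed to resolve the inner product (the $d^2 r_{\max}^4/\epsilon$ factor), and the number of repetitions of amplitude estimation required both to make it measurement-free and to drive the total failure probability below $\delta_0$ (the logarithmic factor).

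First I would construct the state-preparation subroutine. Using one query to $\mathcal F$ to obtain the support locations and one query to $\mathcal O$ to obtain the entries, I prepare over the (at most $d$-element) support of each vector a uniform index superposition together with an ancilla rotated so that the amplitude on $\ket 1$ at support position $\ell$ is $v_{j,f(j,\ell)}/(r_{\max}\sqrt d)$. Because each $\vec v_j$ is a $d$-sparse unit vector with entries bounded by $r_{\max}$, the sub-normalized branch encoding $\vec v_j$ carries total weight $1/(d r_{\max}^2)$. A swap-test-style comparison of the $\vec u$-block against the $\vec v_j$-block then produces a circuit whose success probability I would arrange to be
$$
P_j = \frac{1}{2}\left(1 - \frac{|\braketb{\vec u}{\vec v_j}|^2}{d^2 r_{\max}^4}\right),
$$
so that the inner product squared appears with the suppression factor $1/(d^2 r_{\max}^4)$. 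Consequently an additive error $\eta$ in $P_j$ corresponds to an additive error $2 d^2 r_{\max}^4\,\eta$ in $|\braketb{\vec u}{\vec v_j}|^2$, and to guarantee error $\epsilon$ I must estimate $P_j$ to additive error $\eta = \epsilon/(2 d^2 r_{\max}^4)$.

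Second I would invoke amplitude estimation. With $K$ applications of the associated Grover operator it returns an estimate of $P_j=\sin^2\theta$ obeying $|\tilde P_j - P_j|\le \pi/K + \pi^2/K^2$ with probability at least $8/\pi^2$; setting $\pi(\pi+1)/K\le \eta$ and keeping track of the swap-test prefactor and the $\sqrt{P_j(1-P_j)}\le 1/2$ bound yields the middle factor $\lceil 4\pi(\pi+1)d^2 r_{\max}^4/\epsilon\rceil$. The difficulty is that ordinary amplitude estimation ends in a measurement, which would collapse the superposition over $j$ and forbid the later Grover search. I would therefore run the estimation $n$ times in parallel and compute the median of the outcomes coherently into a register, so that no measurement is performed. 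The median lies within the target tolerance whenever more than half of the $n$ runs individually succeed; since each succeeds with probability at least $8/\pi^2>1/2$, Hoeffding's inequality bounds the median's failure probability by $\exp(-2 n (8/\pi^2 - 1/2)^2)$, and requiring this to fall below a per-estimate budget $\delta'$ gives $n = \lceil \ln(1/\delta')/(2(8/\pi^2-1/2)^2)\rceil$.

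Finally I would feed the register $\approx M^{-1/2}\sum_j \ket j\ket{|\vec v_j - \vec u|}$ into D\"urr--H\o yer minimization; because no prior step used measurement it applies unchanged and locates the minimizing index (equivalently the maximizer of $|\braketb{\vec u}{\vec v_j}|^2$) in an expected $\frac{45}{4}\sqrt M$ Grover iterations. Multiplying by the constant number of oracle queries consumed per iteration by the swap-test preparation and the two reflections of each amplitude-estimation step gives the prefactor $1080\sqrt M$. The main obstacle, and the step I would treat most carefully, is the accounting for $\delta'$: the comparison oracle acts on all $M$ indices in superposition, and D\"urr--H\o yer updates its threshold an expected $\ln M + \gamma$ (harmonic-number) times, so every amplitude-estimation call must be simultaneously accurate across these $\approx 81 M(\ln M + \gamma)$ opportunities for error. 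Choosing $\delta' = \delta_0/(81 M(\ln M + \gamma))$ so that a union bound keeps the total failure probability below $\delta_0$ turns $n$ into the third factor $\lceil \ln(81 M(\ln M+\gamma)/\delta_0)/(2(8/\pi^2-1/2)^2)\rceil$, and the product of the three factors is the claimed bound. The genuinely subtle part is bounding the deviation introduced by the approximate, measurement-free register so that a union bound over a coherent superposition remains valid.
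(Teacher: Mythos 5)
Your proposal follows essentially the same route as the paper's own proof: rescaled $d$--sparse state preparation feeding a swap test whose success probability encodes $|\braketb{\vec{u}}{\vec{v}_j}|^2/(d^2r_{\max}^4)$, amplitude estimation made measurement-free by coherent median voting with the same Hoeffding bound, D\"urr--H\o yer minimization applied to the resulting coherent register, and the same union-bound accounting $\Delta = \delta_0/(81M(\ln M + \gamma))$ over the accumulated amplitude errors across threshold updates. The only deviations are constant-level bookkeeping---the paper charges $\tfrac{45}{2}\sqrt{M}$ expected Grover iterations (not $\tfrac{45}{4}\sqrt{M}$) and splits the error budget between amplitude estimation and the inexact $\sin^{-1}$ evaluation to obtain the factor $4\pi(\pi+1)$, which together with the $12$ queries per estimation iteration is exactly how $1080$ arises---none of which affects the validity of the claimed upper bound.
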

Two important scaling factors in the theorem should be emphasized.  
First, the scaling of the query complexity with $M$ is near--quadratically better than its classical analog.  
Second, if $r_{\max} \propto 1/\sqrt{d}$ then the scaling is independent of both $d$ and $N$.  
We expect this condition to occur when all input vectors have at least $\Theta(d)$ sparsity.
%  the entries of the feature vectors are near--maximally spread out over the $d$ vectors that support it.  
%This condition can always be met by adding an offset to the feature data and re--normalizing.
%This implies that quantum computing can provide an enormous advantage in certain classification problems.  

Note that the swap test gives the square of the inner product, rather than the inner product, as output.  This means that the sign of the inner product can be lost during the calculation.  In cases where the sign of the inner product is necessary for assignment, we can generalize the above method by transforming the quantum representation of the training vector $\ket{\vec{v}_\ell}$
\begin{align}
\ket{\vec{v}_\ell} \mapsto \frac{\ket{0}\ket{0^{\otimes \log_2 N}}+ \ket{1}\ket{\vec{v}_\ell}}{\sqrt{2}},\label{eq:vell}
\end{align}
and then using these states in~\thm{ip}.  
This allows direct estimation of the cosine distance and in turn the inner product.  We assume for simplicity throughout the following that $\ket{\vec{v}_\ell}$ represents the quantum state that directly corresponds to $\vec{v}_\ell$.  Translating these results to cases where the representation in~\eq{vell} is used is straight forward.

We prove~\thm{ip} by the following steps (see~\app{proofs} for more details).  
Assume that we want to compute the inner product between two states $\vec{v}_j$ and $\vec{v}_0:= \vec{u}$ and let $v_{ji}=r_{ji}e^{i\phi_{ji}}$, where $r_{ji}$ is a positive number.  This can be achieved using a coherent version of the swap test~\cite{BCW+01} on the states 
\begin{align}
&d^{-1/2} \sum_{i:v_{ji}\ne 0} \ket{i} \left(\sqrt{1-\frac{r_{ji}^2}{r_{\max}^2}}e^{-i \phi_{ji}}\ket{0}  + \frac{v_{ji}}{r_{\max}}\ket{1}\right)\ket{1}\nonumber,\\
&d^{-1/2}\sum_{i:v_{0i}\ne 0} \ket{i}\ket{1} \left(\sqrt{1-\frac{r_{0i}^2}{r_{ \max}^2}}e^{-i \phi_{0i}}\ket{0}  + \frac{v_{0i}}{r_{\max}}\ket{1}\right),
\label{eq:amppsi1}
\end{align}
which, as we show in~\app{proofs}, can be prepared using six oracle calls and two single--qubit rotations.
If the swap test is applied to these states and a probability of obtaining outcome `$0$', denoted $P(0)$, is found then
$$
|\braketb{\vec{u}}{\vec{v}_j}|^2=|\braketb{\vec{v}_0}{\vec{v}_j}|^2 = (2P(0)-1)d^2r_{max}^4.
$$
Statistical sampling requires $O(M/\epsilon^2)$ queries to achieve the desired error tolerance, which can be expensive if small values of $\epsilon$ are required.

We reduce the scaling with $\epsilon$ to $O(1/\epsilon)$ by removing the measurement in the swap test and applying amplitude estimation (AE)~\cite{BHM+00} to estimate $P(0)$ within error $\epsilon$, denoted $\tilde P(0)$.  
This can be done because the state preparation procedure and the measurement--free swap test are invertible.  
If a register of dimension $R$ is used in AE then the inference error obeys
$$
|P(0)-\tilde P(0)| \le \frac{\pi}{R} +\frac{\pi^2}{R^2}.
$$
Choosing $R$ to be large enough so that the error in AE is at most $\epsilon/2$ yields
\begin{equation}
R\ge \left\lceil\frac{4\pi(\pi+1)d^2r_{\max}^4}{\epsilon}\right\rceil.\label{eq:rbd}
\end{equation}

The inner product between non--unit vectors can easily be computed from the output of amplitude estimation because the inner product between two non--normalized vectors can be computed by rescaling the inner product of their unit--vector equivalents.  This can be done efficiently provided an oracle that yields the norm of each vector or by querying $\mathcal{O}$ and $\mathcal{F}$ $O(d)$ times.

The scaling with $M$ can also be quadratically reduced by using the maximum/minimum finding algorithm of D\"urr and H\o yer~\cite{DH96}, which combines Grover's algorithm with exponential search to find the largest or smallest element in a list.  
In order to apply the algorithm, we need to make the AE step reversible.  We call this form of AE \emph{coherent amplitude estimation}.

We achieve this by introducing a coherent majority voting scheme on a superposition over $k$--copies of the output of AE.  
AE outputs a state of the form $a\ket{y} +\sqrt{1-|a|^2} \ket{y^{\perp}}$, 
where $y$ is a bit--string that encodes $P(0)$ and $\ket{y^{\perp}}$ is orthogonal to $\ket{y}$.  
The median of $k$ bitstrings $x_k$ is computed coherently by $\mathcal{M}:\ket{x_1}\cdots \ket{x_k} \ket{0} \mapsto \ket{x_1} \cdots \ket{x_k} \ket{\bar{x}}$, where $\bar{x}$ is the median of $[x_{1},\ldots, x_k]$ (the mode could also be used). 
For this application, AE guarantees that $|a|^2\ge 8/\pi^2>1/2$ and Hoeffding's inequality shows that $\bar y= y$ with overwhelming probability if $k$ is sufficiently large.    
In particular, it is straightforward to show using the binomial theorem that we can write (see~\app{proofs})
\begin{align}
&\mathcal{M} (a\ket{y}+\sqrt{1-|a|^2}\ket{y^\perp})^{\otimes k}\ket{0}\nonumber\\
&\qquad=A\ket{\Psi}\ket{y} +\sqrt{1-|A|^2} \ket{\Phi;y'^\perp},\label{eq:derandomalg}
\end{align}
where $|A|^2>1-\Delta$ for $k\ge {\ln\left(\frac{1}{\Delta} \right)}/({2\left(8/\pi^2-\frac{1}{2}\right)^2)}$ and states $\ket{\Psi}$ and $\ket{\Phi; y'^\perp}$ are computationally irrelevant. We then use coherent majority voting to construct a $\sqrt{2\Delta}$--approximate oracle that maps $\ket{j}\ket{0} \mapsto \ket{j}\ket{\bar{y}}$. %where $\bar y$ is the median value of the $k$ different $y_j$ values output by the above protocol.  
This approximate oracle is then used in the D\"urr H\o yer minimum finding algorithm.

We then make the pessimistic assumption that if the use of an approximate oracle leads to an erroneous outcome from the minimum finding algorithm even once then the whole algorithm fails.  Fortunately, since the number of repetitions of AE scales as $k\in O(\log(1/\Delta))$, this probability can be made vanishingly small at low cost.  Our final cost estimate then follows by multiplying, $k$, $R$, the costs of state preparation, and the number of iterations used in the D\"urr H\o yer algorithm.

%\section{centroid--Based Classification}
\section{Euclidean Method}\label{sec:euclid}
We now describe a quantum nearest-neighbor algorithm that directly computes the Euclidean distance between $\vec{u}$ and the cluster \emph{centroids}, i.e., the mean values of the vectors within each cluster.   
This can be viewed as a step in a $k$--means clustering algorithm \cite{Llo82,LMR13}.  
We refer to this algorithm as a \emph{nearest-centroid} algorithm.
Our nearest--centroid algorithm differs substantially from that of~\cite{LMR13} in that (1) we normalize the computed distances, and (2) we consider a generalization to cases where each cluster is subdivided into $M'$ clusters that each contain $M_1,\ldots,M_{M'}$ vectors respectively.  
If $M'=M$ then the algorithm reduces to nearest--neighbor classification.

These differences help address two central problems of centroid--based classification.  
First, imagine that cluster $\{A\}$ is dense but cluster $\{B\}$ is sparse.  Then even if $|\vec{v}_0 -{\rm mean}(\{A\})| \le |\vec{v}_0 - {\rm mean}(\{B\})|$, it may be much more likely that the test vector $\vec{v}_0:=\vec{u}$ should be assigned to $\{B\}$ because the probability of a large deviation from the centroid is much greater for $\{B\}$ than $\{A\}$.  Alternatively, rescaling the distances can also be useful in cases where a faulty assignment to one class is less desirable than faulty assignment to the other class.
Normalizing the distance by the width of the cluster can help address these issues~\cite{THN+02}.  
We also show in~\app{ratio} that this assignment reduces to the likelihood ratio test under certain assumptions.
Second, if $\{A\}$ and $\{B\}$ are non--convex then the centroid of $\{A\}$ \emph{may actually be a point in $\{B\}$}.  
Segmenting the data into $M'$ smaller clusters can help address this issue.

The following theorem gives the query complexity for our quantum nearest--centroid algorithm.  The normalization of the distance by the cluster width can easily be omitted from the algorithm if desired.  Note that if $M'=M$ (which occurs when nearest--centroid classification reduces to nearest--neighbor classification) the intra--cluster variance is zero so the normalization step does not make sense in this case.  We set $\sigma_m=1$ in all these cases in order for the Euclidean method to be universally applicable to both nearest--neighbor and nearest--centroid classification.  Changing this normalization constant from $1$ to a class--dependent value may also be of use in cases where we want to bias faulty assignments, but we take $\sigma_m=1$ in the following for simplicity. 
\begin{theorem}
Let $\vec{v}_0$ and $\{\vec{v}_j^{(m)}:j=1,\ldots,M_m, m=1,\ldots M'\}$ be $d$--sparse unit vectors such that the components satisfy $\max_{m,j,i} |v_{ji}^{(m)}|\le r_{\max}$ and $\sigma_m=\frac{1}{M_m}\sum_{p=1}^{M_m} \|-\vec{v}_p^{(m)} +\frac{1}{M_m} \sum_j \vec{v}_j^{(m)}\|_2^2,$ if $M_m>1$ and $\sigma_m=1$ otherwise. The task of finding 
$$\min_m \left(\frac{\|\vec{v}_0 - \frac{1}{M_m} \sum_{j=1}^{M_m} \vec{v}_j^{(m)}\|^2_2}{\sigma_m}\right),$$ 
with error in the numerator and denominator bounded above by $\epsilon$ and  with success probability at least $1-\delta_0$, requires an expected number of combined queries to $\mathcal{O}$ and $\mathcal{F}$ that is bounded above by
$$
900\sqrt{M'} \left\lceil\frac{8\pi(\pi+1) d r_{\max}^2}{\epsilon} \right\rceil\left\lceil\frac{\log\left(\frac{81M'(\log(M') +\gamma)}{\delta_0} \right)}{2((8/\pi^2)^2 -1/2)^2} \right\rceil.
$$\label{thm:moment}
\end{theorem}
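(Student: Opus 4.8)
The plan is to follow the three--phase template that establishes \thm{ip}: (i) prepare, coherently over the cluster label $m$, a state whose success probability encodes the squared distance $\|\vec v_0 - c_m\|_2^2$ to the centroid $c_m:=\frac{1}{M_m}\sum_{j=1}^{M_m}\vec v_j^{(m)}$; (ii) apply coherent amplitude estimation to write this distance and the normalization $\sigma_m$ into an ancilla register without measuring; and (iii) run the D\"urr H\o yer minimization over the $M'$ cluster labels. Only the state preparation is genuinely new, since the distance is now to a centroid rather than to a single training vector.

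For that step I would use the linear combination of unitaries construction. The difference $\vec v_0 - c_m$ is a linear combination of the $M_m+1$ unit vectors $\vec v_0,\vec v_1^{(m)},\dots,\vec v_{M_m}^{(m)}$ with coefficients $1$ and $-1/M_m$, so I would prepare a control register in the weighted superposition $\tfrac{1}{\sqrt 2}\ket 0+\tfrac{1}{\sqrt{2M_m}}\sum_{j\ge1}\ket j$, apply controlled amplitude--encoded state preparation exactly as in \eq{amppsi1} (using the $\sqrt{1-v_{ji}^2/r_{\max}^2}\ket0+\tfrac{v_{ji}}{r_{\max}}\ket1$ trick, with a sign $-1$ on the cluster branches), and then invert the control preparation. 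A short calculation shows the amplitude surviving on the all--zero control/flag subspace is $\tfrac12(\vec v_0-c_m)$ in the encoded representation, so the success probability is $\tfrac{1}{4dr_{\max}^2}\|\vec v_0-c_m\|_2^2$; the single factor $dr_{\max}^2$, rather than the $d^2r_{\max}^4$ of the swap test, appears because this circuit extracts the overlap linearly instead of as a squared modulus. Since the inputs are unit vectors the normalization collapses to $\sigma_m=1-\|c_m\|_2^2$, and $\|c_m\|_2^2$ is exactly the success probability of the \emph{same} circuit with the $\vec v_0$ branch deleted; thus numerator and denominator come from amplitude estimation on closely related states. Bounding the amplitude--estimation error $|P(0)-\tilde P(0)|\le\pi/R+\pi^2/R^2\le\pi(\pi+1)/R$ by $\epsilon/2$ after multiplying by the scale factor $4dr_{\max}^2$ forces $R\ge\lceil 8\pi(\pi+1)dr_{\max}^2/\epsilon\rceil$, the register factor in the theorem.

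To eliminate the terminal measurement I would reuse the coherent majority--voting construction of \eq{derandomalg}, running $k$ copies of amplitude estimation in superposition and recording their median. The only change from the inner--product case is the success margin: a correct normalized distance needs \emph{both} the numerator and the $\sigma_m$ estimate to fall in their good intervals, so the per--copy success probability is $(8/\pi^2)^2$ rather than $8/\pi^2$, and Hoeffding's inequality gives $|A|^2>1-\Delta$ once $k\ge\log(1/\Delta)/\bigl(2((8/\pi^2)^2-\tfrac12)^2\bigr)$, matching the theorem's denominator. This yields the state $M'^{-1/2}\sum_m\ket m\ket{\text{dist}_m}$, on which the D\"urr H\o yer algorithm searches the $M'$ labels: its expected $O(\sqrt{M'})$ Grover iterations give the $\sqrt{M'}$ prefactor, and bounding the chance that any approximate--oracle call made during the $\sim\log M'+\gamma$ iterations ever errs produces the $\log\bigl(81M'(\log M'+\gamma)/\delta_0\bigr)$ term. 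Multiplying the constant state--preparation cost, $R$, $k$, and the iteration count and absorbing constants yields the stated bound.

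The main obstacle is the state--preparation circuit: checking that the linear combination of unitaries with the amplitude encoding and branch signs really deposits $\tfrac12(\vec v_0-c_m)$ on the flagged subspace, and that the identity $\sigma_m=1-\|c_m\|_2^2$ lets the same machinery deliver the denominator, requires careful bookkeeping of the $d^{-1/2}$ and $r_{\max}^{-1}$ factors across all $M_m+1$ branches and across the superposition over $m$. Once this is in place, the amplitude--estimation error inversion, the $(8/\pi^2)^2$ joint--success Hoeffding bound, and the D\"urr H\o yer cost analysis all parallel the proof of \thm{ip} with $M$ replaced by $M'$.
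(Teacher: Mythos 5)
Your proposal is correct and matches the paper's architecture for everything except one component, where you take a genuinely different and arguably cleaner route: the computation of the denominator $\sigma_m$. The paper's proof (\lem{centroid}) estimates the intra--cluster variance \emph{directly} as an average of squared deviations, $\frac{1}{M_m}\sum_p \|-\vec{v}_p + c_m\|_2^2$, by introducing a compound oracle $W$ (\eq{W}) that carries an extra register over the cluster--member index $p$ and coherently averages the LCU distance circuit over it; this costs three queries per sample (six per Grover iteration) and, combined with the two--query distance circuit, gives the paper's per--iteration count of $10\lceil 8\pi(\pi+1)dr_{\max}^2/\epsilon\rceil$. You instead exploit the identity $\sigma_m = 1 - \|c_m\|_2^2$, valid because the training vectors are unit vectors (the cross term averages to $\|c_m\|^2$), and obtain $\|c_m\|_2^2$ from the same LCU circuit with the $\vec{v}_0$ branch deleted. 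This avoids the $W$ oracle and the $p$ register entirely, costs no more per iteration (so your constant is at most the paper's $900$), and has a slightly better error scale factor ($dr_{\max}^2$ versus $4dr_{\max}^2$). What the paper's heavier construction buys in exchange is generality: the direct variance computation does not rely on normalization, whereas your identity fails for non--unit vectors, which matters for the extensions the paper sketches (accommodating unnormalized data by doubling vectors). Two minor points you should make explicit in a full write--up: the case $M_m=1$ must be special--cased (your identity gives $\sigma_m=0$ there, while the theorem sets $\sigma_m=1$ by convention and skips the denominator estimate, exactly as the paper does), and the final constant $900 = 90\times 10$ should be traced through the D\"urr--H\o yer and coherent--majority--voting factors of \cor{expamp} rather than "absorbed"; both are routine. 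All remaining quantitative ingredients in your proposal --- the $R\ge\lceil 8\pi(\pi+1)dr_{\max}^2/\epsilon\rceil$ register bound, the joint--success probability $(8/\pi^2)^2$ in the Hoeffding bound for $k$, and the $\sqrt{M'}\log(81M'(\log M'+\gamma)/\delta_0)$ search cost --- coincide with the paper's.
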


If $M' \in O({\rm polylog}(MN))$ and $dr_{\max}^2\in O(1)$ then the learning problem is efficient, which motivates the use of centroid--based classification for supervised learning problems where $\{A\}$ and $\{B\}$ can be partitioned into a union of a small number of disjoint training sets that are both unimodal and convex.  
Even if $M'\in \Theta(M)$ is required, then the query complexity of this method is at most comparable to the inner--product--based approach.

%A natural way to use \thm{moment} is to perform exponential searching on $M'$.  The algorithm begins by taking $M'=1$ and then computing the normalized distance to the centroid for both $\{A\}$ and $\{B\}$.  If the two distances are comparable and $M'<M$ then $M' \gets\min\{2M',M \}$ and the process is repeated.  In the worst case scenario, this process will require at most $(1-1/\sqrt{2})^{-1}<4$ times as many oracle calls.  In other cases (especially if a good heuristic is chosen for dividing the training set into $M'$ clusters) the algorithm could terminate using exponentially fewer queries, which suggests that this generalized nearest--neighbor method may often be preferable to using~\thm{moment} with $M'=O(M)$.  

The proof of \thm{moment} follows similarly to that of \thm{ip}.  We use coherent amplitude estimation (AE) to find the numerator and the distance between $\vec{u}$ and the centroid as well as the intra--cluster variance.  
We then use a reversible circuit to divide the distance by the variance and use the D\"urr H\o yer algorithm~\cite{DH96} to find the minimum relative distance over all $M'$ clusters.  
The biggest conceptual difference is that in this case we do not use the swap test; instead we use a method from~\cite{RML13,CW12}.  

The result of~\cite{CW12} shows that for any unitary $V$ and transformation mapping $\ket{j}\ket{0} \mapsto \ket{j} \ket{\vec{v}_j}$, 
a measurement can be performed that has success probability
\begin{equation}
P(0)\propto \left|\sum_{j=0}^{M_m} |V_{j0}|^2 \vec{v}_j\right|^2,\label{eq:distance}
\end{equation}
for any $M_m$.
If we choose $\vec{v}_0=-\vec{u}$, then by~\eq{distance} and
$$|V_{j0}|=\begin{cases}\frac{1}{\sqrt{2}}, & j=0\\ \frac{1}{\sqrt{2M_m}}, & {\rm otherwise} \end{cases},$$
the probability of success gives the square of the Euclidean distance between $\vec{u}$ and the cluster centroid.  
Note that non--unit vectors can be accommodated by doubling the number of vectors and setting $\vec{v}_j \rightarrow \alpha_j\vec{v}_j -\sqrt{1-\alpha_j^2} \vec{v}_j$ where $\alpha_j\in [0,1]$.  
We show in~\app{proofs} that
$$\left| \vec{u} -\frac{1}{M_m} \sum_{j\ge 1} \vec{v}_j\right|^2=4d r_{\max}^2 P(0).$$
The operator $V$ can be implemented efficiently using techniques from quantum simulation, and the $\vec{v}_j$ are prepared using~\eq{amppsi1}.  The process of estimating the distance is therefore efficient (if the error tolerance is fixed).

The remainder of the procedure is identical to that of the inner--product method.  
The most notable technical difference is that the phase estimation procedure must succeed in both the distance and the intra--cluster variance calculations.  This results in the success probability in phase estimation dropping from at least $8/\pi^2$ to at least $(8/\pi^2)^2\approx 2/3$.
Thus, quantum nearest--centroid classification (based on the Euclidean method) requires more iterations than quantum nearest--neighbor classification (based on either method).

\begin{table}
\begin{tabular}{|c|c|c|}
\hline
Method & Typical cases & Atypical cases\\
\hline
Direct Calculation & $O(NM)$ & $O(NM)$\\
Monte--Carlo Calculation & $O(NMd^2r_{\max}^4)$ & $O\left(\frac{Md^2r_{\max}^4}{\epsilon^2}\right)$\\
Inner--Product Method & $O(\sqrt{NM}\log(M)d^2r_{\max}^4)$ & $O\left(\frac{\sqrt{M}\log(M)d^2r_{\max}^4}{\epsilon}\right)$\\
Euclidean Method & $O(\sqrt{NM'}\log(M')dr_{\max}^2)$ & $O\left(\frac{\sqrt{M'}\log(M')dr_{\max}^2}{\epsilon}\right)$\\
\hline
\end{tabular}
\caption{Query complexities for performing classification using the quantum and classical methods discussed here for examples where $M$ training vectors of dimension $N$ are used or $M'$ clusters are used.  Typical cases refer to examples where the training vectors are typical of random vectors, which implies that  $\epsilon\in\Theta(1/\sqrt{N})$ is needed to perform the classification accurately and atypical cases refer to examples where $\epsilon$ has no clear dependence on $N$.  In all cases $O(\log(NM/\epsilon))$ qubits are needed in addition to those required by arithmetic operations and instantiation of oracles.}
\end{table}

\begin{figure}[t!]
\begin{minipage}{0.45\linewidth}
\includegraphics[width=0.95\columnwidth]{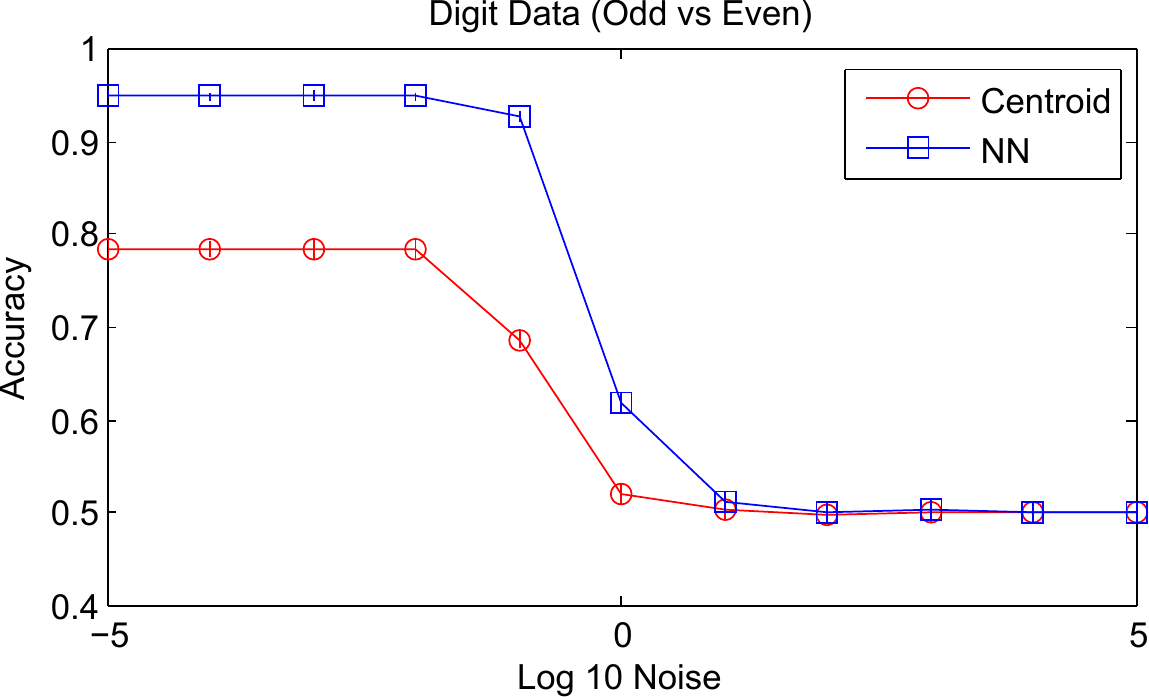}
\caption{Classification accuracy for digit data vs $\epsilon$ for cases where half the dataset is used for training.
% for $100$ random examples
\label{fig:classnoise}}
\end{minipage}
\hspace{1mm}
\begin{minipage}{0.45\linewidth}
\includegraphics[width=0.95\linewidth]{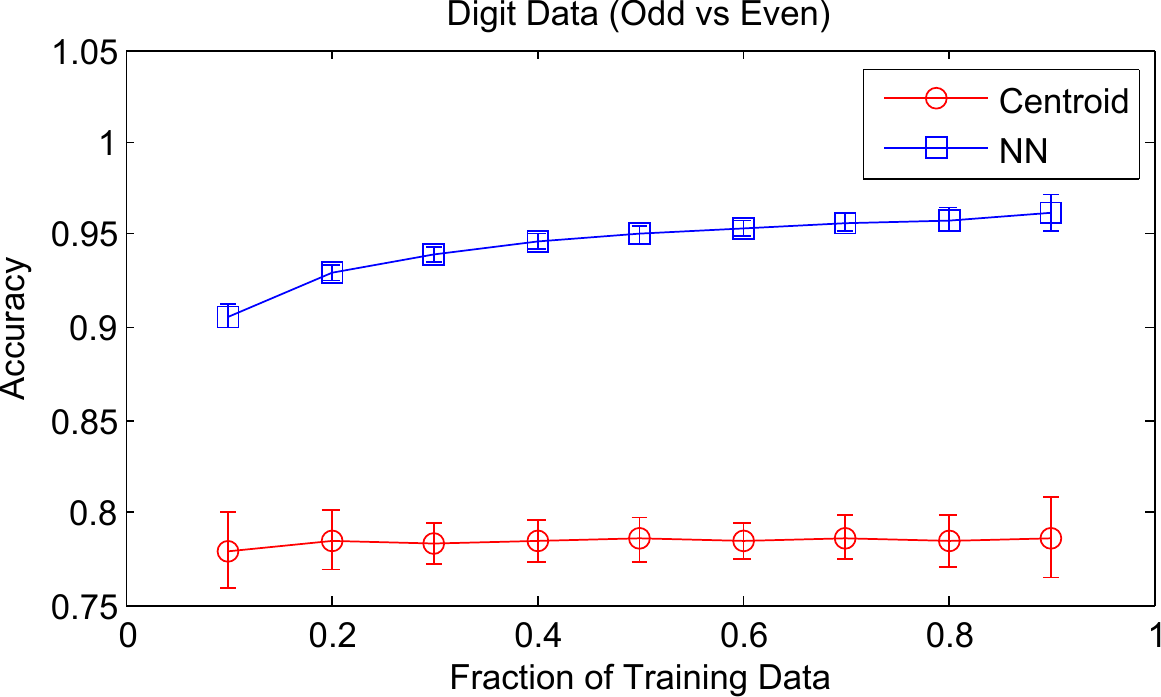}
\caption{Classification accuracy for digit data for fixed noise $\epsilon=10^{-5}$ as a function of the training data size.\label{fig:traintest}}
\end{minipage}
\end{figure}

\section{Numerical Experiments}\label{sec:numerics}
%\textit{Numerical Experiments. }
Although our algorithms clearly exhibit better scaling with $N$ and $M$ than direct classical computation of the nearest--neighbor of $\vec{u}$, the question remains whether $\epsilon$ needs to be prohibitively small in order ensure that the quality of the assignment is not impacted.  To this end, we evaluate the performance of our algorithms on several real--world tasks by simulating the noise incurred by using coherent amplitude estimation. 
As an example, we will consider classifying handwritten digits from the MNIST digits database \cite{MNIST} wherein the problem is:
given a training set of $M$ handwritten digits (see~\fig{example}) and their labels (even or odd), assign a label (even or odd) to an unlabeled test digit. 
Each digit is represented by a $256$-dimensional feature unit vector of pixel values \cite{MNIST}.
These pixel values are rescaled to have zero mean and unit variance over the data set before normalizing.
In all plots, error bars indicate standard deviation of the accuracy.

First, we compare the accuracy of our nearest--neighbor algorithm (NN) to the nearest--centroid algorithm (Centroid), as a function of noise $\epsilon$ in the distance computation.  Not that because the distances computed by the inner product method are logically equivalent to those computed using the Euclidean method, these results are valid for either algorithm.
For Centroid, we set $M'=1$.
\fig{classnoise} plots $\epsilon$ versus the accuracy of both NN (blue squares) and Centroid (red circles), for noise drawn from $\mathcal{N}(0,\epsilon)$ independently for each distance computation, where $\epsilon \in [10^{-5},10^5]$.
The accuracy is averaged across classification of $100$ random test examples using $M=2000$ training points.
%Overall, we see that NN performs well even in the presence of noise. 
In the low-noise regime, NN significantly outperforms Centroid by roughly $20\%$.
At $\epsilon \approx 0.1\approx 1/\sqrt{N}$, both algorithms exhibit significant loss in accuracy and eventually degrade to $50\%$ accuracy.  Both NN and Centroid tolerate relatively large errors without sacrificing classification accuracy.

The tolerance against noise up to size $O(1/\sqrt{N})$ is well--justified for high--dimensional systems by concentration of measure arguments~\cite{Led05}, so we anticipate that $\epsilon\in \Theta(1/\sqrt{N})$ should be appropriate for problems that lack underlying symmetry in the assignment set (such as even/odd classification).
%Returning to our canonical problem of handwriting recognition, we find that nearest--neighbor classification is highly effective at distinguishing odd from even digits whereas centroid based classification is not.

Second, we study the effect of training data size on the performance accuracy of the two algorithms for a fixed noise rate $\epsilon=0.1$.
%, even in the presence of zero--mean Gaussian random noise with standard deviation $\epsilon$ where $\epsilon$ is the error tolerance in the distance calculation.  The superiority of nearest--neighbor methods for this problem is unsurprising as the centroid of all even images 
\fig{traintest} plots the training data size versus the performance accuracy of both NN (blue squares) and Centroid (red circles).
We vary the training set size by taking random fractions of $M=4000$ points, at fractions of $0.1,0.2,\ldots,0.9$.
For all training set sizes, NN significantly outperforms Centroid.
In addition, NN exhibits increasing performance accuracy as $M$ increases, from $84\%$ to $90\%$.
In contrast, Centroid's accuracy hovers around $73\%$, even as $M$ increases.
%, the centroid varies very little.
%In contrast, NN relies on the closest training vector for classification, so as $M$ increases a close rep... IS THIS TRUE?  WHAT ABOUT OUTLIERS?
%In contrast, nearest--neighbor assignment tends to improve in accuracy since a representative datum that is close to the test vector becomes increasingly likely as the training set grows (roughly as $1-O(M^{-0.23})$), as illustrated in.

For the digit classification task, we estimate that accuracy $\alpha$ can be obtained using NN with a number of oracle queries that scales (for constant success probability) as $O(\sqrt{M})=O((1-\alpha)^{-5/4}\log((1-\alpha)^{-1}))$.
In contrast, the number of queries in the classical nearest-neighbor algorithm scales as $O((1-\alpha)^{-5/2})$.  
The centroid--based algorithm achieves at best $\alpha\approx 0.78$.
In addition, we find that $dr_{\max}^2\approx 2.8$ for this problem, indicating that the cost of state preparation will likely become negligible as $\alpha \rightarrow 1$ and in turn as $M\rightarrow \infty$.

While NN outperforms Centroid on the digit classification task, we find that for other tasks, outlined in~\app{numerics}, Centroid outperforms NN.  
However, in tasks where Centroid performs well, we find that both methods exhibit low classification accuracy.
This could indicate the need for more training data, in which case NN may begin to outperform Centroid as the amount of training data $M$ grows.  Such problems could also be addressed by adding more clusters to the nearest--centroid method (i.e. take $M'>1$).  In the digit classification case, it makes sense to take $M'=5$, so that the distance is calculated to each digit's centroid.  Although this may work well for digit classification, finding appropriate clusters for data in high--dimensional vector spaces can be costly as the problem is NP--hard in general (although heuristic algorithms often work well).  The key point behind that neither algorithm should be viewed as innately superior to the other.  They are simply tools that are appropriate to use in some circumstances but not in others.

We see from these results that nearest--neighbor classification is quite robust to error for the problem of handwriting recognition, whereas nearest centroid assignment does not perform as well.  We provide further numerical experiments for test cases related to diagnosing medical conditions in~\app{numerics}.  These results suggest that our quantum nearest--neighbor classification algorithm is not likely to be adversely affected by errors in the coherent amplitude estimation subroutine for problems with relatively small $N$.  We expect that such errors will be much more significant though in cases where $N$ is large (which occurs frequently in text classification problems) as argued in~\app{conc}.  This begs the question of whether direct calculation may actually provide advantages in some cases owing to the fact that there is no~$\epsilon$ dependence.  We address this issue along with the the question of how our algorithms compare to Monte--Carlo methods below.

% (with $M'=1$).

%In the appendix, we provide further numerical examples for classification problems for diagnosing heart disease, breast cancer, diabetes and thyroid problems as well as the performance of both algorithms for a synthetic data set.  Nearest--neighbor classification provides strong advantages in accuracy for the majority of these problems.

\section{Comparison to Monte--Carlo Approaches}\label{sec:MC}
Although the natural analog of the quantum nearest--neighbor methods is direct calculation, it is important for us to consider how well the algorithm performs compared to Monte--Carlo algorithms for finding the distance.
%It is often sufficient in practice to classify a test point based on a randomly chosen subset of training vectors (and perhaps also features) rather than the complete training set.  
The core idea behind these Monte--Carlo approaches is that it is seldom necessary to query the oracle to find all of the components of each $\vec v_j$ to compute the distance between $\vec v_j$ and $\vec u$.  This can substantially reduce the scaling of the cost of distance calculations with $N$ from $O(N)$ to $O(1)$ in some cases and tend to be particularly useful in cases where the training data is tightly clustered in high--dimensional spaces.  
%The nearest--centroid classification algorithms are also useful in this regime, so it is natural compare the cost of performing centroid--based classification using Monte--Carlo sampling to the cost of our nearest--centroid algorithm.

A Monte--Carlo approximation to the inner product of two $d$--sparse vectors $a$ and $b$ can be found via the following approach.  First $N_c$ samples of individual components of $a$ and $b$ are taken.  If we assume that the locations where $a$ and $b$ are mutually non--zero are not known apriori then we can imagine that each vector is of dimension $D=\max(N,2d)$.  Let us denote the sequence of indexes to be $i_t$.  Then each component of the $D$--dimensional vector should be drawn with uniform probability (i.e., $p(i_t=x)=1/D$ for all $x$ in the union of the set of vectors that support $a$ and $b$).  Then an unbiased estimator of the inner product is given by
\begin{equation}
X = \frac{D}{N_c} \sum_{t=1}^{N_c}  a_{i_t}b_{i_t}.
\end{equation}
In particular, it is shown in~\cite{ESS+11} that
\begin{equation}
\mathbb{E}[X] = a^T b, \qquad \mathbb{V}[X] = \frac{1}{N_c}\left(D\sum_{i=1}^{N}a_{i_t}^2b_i^2-{(a^Tb)^2}\right)\in O\left(\frac{d^2 r_{\max}^4}{N_c}\right).
\end{equation}

Chebyshev's inequality therefore implies that for fixed vectors $a$ and $b$ that $N_c\in O(d^2r_{\max}^4\epsilon^{-2})$ is sufficient to guarantee that $X$ is a correct estimate to within distance $\epsilon$ with high probability.  Also, for random unit vectors, $D\sum_{i=1}^{N}a_{i_t}^2b_i^2-{(a^Tb)^2}=O(1)$ with high probability so typically the cost of the estimate will simply be $O(1/\epsilon^2)$.  The cost of nearest--neighbor classification is then $O(Md^2r_{\max}^4/\epsilon^2)$, which is (up to logarithmic factors) quadratically worse than our nearest--neighbor algorithm for cases where $dr_{\max}^2\in O(1)$.

A similar calculation implies that we can estimate the components of the mean vector to within error $\epsilon/N_c$ (which guarantees that the overall error is at most $\epsilon$).  To estimate this, we need to have the variance of each component of the vector.  Let $\{\vec{v}^{(m)}\}$ be a set of $d$--sparse unit vectors then
\begin{equation}
\mathbb{V}_m [\vec{v}_k^{(m)}] = \frac{1}{M}\sum_{m=1}^M (\vec{v}_{k}^{(m)} - \mathbb{E}_m[\vec{v}_{k}^{(m)}])^2 \le 4r_{\max}^2.
\end{equation}
Thus if we wish to estimate $\mathbb{E}_m [\vec{v}^{(m)}_j]$ within error $\epsilon/N_c$ (with high probability)
then it suffices to take a number of samples for each vector component (i.e., each $i_t$) that obeys
\begin{equation}
N_s\in O\left(\frac{r_{\max}^{2}N_c^2}{\epsilon^2} \right).
\end{equation}
Since there are $N_c$ different components, the total cost is $N_c N_s$ which implies that
\begin{equation}
{\rm Cost} \in O\left(\frac{r_{\max}^{2}N_c^3}{\epsilon^2} \right)\in O\left(\frac{d^6r_{\max}^{14}}{\epsilon^8} \right).\label{eq:montetime}
\end{equation}
Since $|a-b|_2 = a^Ta+b^Tb -2 a^Tb$, it follows that the Euclidean distance to the centroid can be computed using a number of queries that scales as~\eq{montetime}.  This simple argument suggests that the number of queries needed to a classical oracle to estimate the distance to the centroid is also efficient, for fixed $\epsilon$, using a classical sampling algorithm.  This cost in practice is prohibitively high and a more exact error analysis would be needed to find a better estimate of the true complexity of classical centroid-based classification.

An important question remains: is this scaling actually better than direct calculation?  We see in~\app{conc} that almost all unit vectors in $\mathbb{C}^N$ lie within a band of width $O(\sqrt{N})$ about any equator of the unit-hypersphere.  This means that if the vectors of both classes are evenly distributed then $\epsilon \in O(1/\sqrt{N})$ is needed in order ensure that nearest--neighbor methods can correctly assign the test vector.  In such cases, the cost of Monte--Carlo calculation of the nearest--neighbor of $\vec u$ is
$$
{\rm Cost} \in O\left({MNd^2 r_{\rm max}^4} \right),
$$
which is asymptotically equivalent to the cost of direct calculation if $dr_{\rm max}^2 \in O(1)$.  Therefore, if we are to regard random unit vectors as typical then Monte--Carlo methods do not typically offer asymptotic improvements over direct calculation.  Furthermore, we see that the near--quadratic improvement in the scaling with $\epsilon$ afforded by the use of oblivious amplitude amplification is needed in order to provide superior scaling with $N$ in such cases.  It is worth noting, however, that if the training data is atypical of Haar--random unit vectors then Monte--Carlo methods and in turn centroid--based classification may yield advantages over direct calculation.

\section{Application to $\bf{k}$--means clustering}\label{sec:kmeans}
These techniques can also be used to accelerate a single step of $k$--means~\cite{Mac67}, which is an unsupervised learning algorithm that clusters training examples into $k$--clusters based on their distances to each of the $k$ cluster centroids.  
Unlike supervised classification, such as nearest--neighbor classification, $k$--means clustering does not require human input in the form of pre--classified training data.  Instead, it seeks to learn useful criteria for classifying data that may not necessarily be apparent from the raw feature data.

Each of these clusters is uniquely specified by its centroid and the goal of the algorithm is to assign the training vectors to clusters such that the intra--cluster variance is minimized.  The algorithm for $k$--means clustering (also known as Lloyd's algorithm) is as follows: 
\begin{enumerate}
\item Choose initial values for the $k$ centroids.
\item For each training vector compute the distance between it and each of the $k$ cluster centroids.
\item Assign each training vector to the cluster whose centroid was closest to it.
\item Recompute the location of the centroids for each cluster.
\item Repeat steps $2-4$ until clusters converge or the maximum number of iterations is reached.
\end{enumerate}
The problem of optimally clustering data is known to be $\NP$--hard and as such finding the optimal clustering using this algorithm can be computationally expensive.  As also mentioned in~\cite{LMR13}, quantum computing can be leveraged to accelerate this task.  The query complexity of performing a single iteration of $k$--means (i.e. a single repetition of steps $2-4$) is given in the following corollary.

Our quantum algorithm for clustering deviates subtlely from the standard algorithm for clustering because the algorithm cannot easily output the cluster centroids and the cluster labels assigned to each training vector.  This is because the centroids need to be inferred using a process such as quantum state tomography.  In particular, if compressed sensing were used then the number of samples needed to learn the centroid within fixed accuracy scales as $O(kN^2\log N)$~\cite{FGL+12}.  This is prohibitively expensive given that $k$--means is an $O(kMN)$ algorithm classically.  

The following corollary gives the query complexity of performing an iteration of $k$--means using the complete set of labels for the members of each cluster to specify the centroids.  The vector representation of the centroids can be computed from these using the techniques of~\thm{moment}.  Proof is given in~\app{proofs}.
\begin{corollary}
Let $\vec{v}_0$ and $\{\vec{v}_j^{(m)}:j=1,\ldots,M_m, m=1,\ldots k\}$ be $d$--sparse unit vectors such that the components satisfy $\max_{m,j,i} |v_{ji}^{(m)}|\le r_{\max}$. The average number of queries made to $\mathcal{O}$ and $\mathcal{F}$ involved in computing the cluster assignments for each training vector in an iteration of $k$--means clustering, using distance calculations that have error $\epsilon$ and success probability $\delta_0$, is
$$
360M\sqrt{k} \left\lceil\frac{8\pi(\pi+1) d r_{\max}^2}{\epsilon} \right\rceil\left\lceil\frac{\log\left(\frac{81k(\log(k) +\gamma)}{\delta_0} \right)}{2((8/\pi^2) -1/2)^2} \right\rceil.
$$\label{cor:moment}
\end{corollary}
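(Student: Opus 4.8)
The plan is to obtain \cor{moment} as an almost immediate consequence of \thm{moment}, reading a single iteration of Lloyd's algorithm as $M$ essentially independent nearest--centroid computations---one per training vector---in which the number of candidate clusters is $k$ and the variance normalization is switched off.

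First I would identify steps 2--3 of the $k$--means loop with the nearest--centroid task of \thm{moment}: for a fixed training vector $\vec{v}_0$, assigning it to the closest of the $k$ centroids is exactly the problem of minimizing $\|\vec{v}_0 - \tfrac{1}{M_m}\sum_j \vec{v}_j^{(m)}\|_2^2$ over $m=1,\dots,k$, with each centroid state prepared on the fly from the full membership labels via the linear--combination--of--unitaries construction of~\cite{RML13,CW12} and its success probability \eq{distance}. Thus the per--vector cost is inherited verbatim from \thm{moment} with the substitution $M'\mapsto k$, which accounts for the $\sqrt{k}$ prefactor and for the replacement of $\log(81M'(\log M'+\gamma)/\delta_0)$ by $\log(81k(\log k+\gamma)/\delta_0)$ (the $\log k+\gamma$ being the expected D\"urr H\o yer iteration count, $H_k\approx\log k+\gamma$, over $k$ candidates). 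Summing over the $M$ training vectors produces the overall factor of $M$ out front.

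The one substantive departure from \thm{moment} is that $k$--means assigns each point to the \emph{geometrically} nearest centroid and therefore does \emph{not} divide the distance by the intra--cluster variance $\sigma_m$. This has two effects that I would track carefully. First, coherent amplitude estimation now needs to succeed only for the distance branch, not simultaneously for the distance and the variance, so the per--estimate success probability guaranteed by AE reverts from $(8/\pi^2)^2$ to $8/\pi^2$; this is precisely the change from $2((8/\pi^2)^2-\tfrac12)^2$ to $2((8/\pi^2)-\tfrac12)^2$ in the Hoeffding bound that sets the coherent--majority--voting parameter in \eq{derandomalg}. Second, dropping the variance sub--routine removes one branch of state preparation and amplitude estimation from the oracle fed to D\"urr H\o yer, which lowers the leading constant; I would recompute the product (state--preparation calls)$\times$(median--voting repetitions)$\times$(expected D\"urr H\o yer iterations) with only the distance branch present and check that it evaluates to $360$ rather than the $900$ of \thm{moment}.

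The step I expect to be the main obstacle is not the geometry but the probability bookkeeping around the factor of $M$. Because the corollary reports an \emph{average} query count for a full iteration while its logarithm contains $\delta_0$ with no accompanying $\log M$, I must confirm that $\delta_0$ is to be read as the per--assignment failure probability rather than a guarantee for all $M$ assignments at once; otherwise a union bound over the $M$ vectors would force $\delta_0\mapsto\delta_0/M$ and insert an extra $\log M$ inside the ceiling. I would therefore state the per--vector guarantee (each assignment correct with probability at least $1-\delta_0$, expected cost as in \thm{moment} with $M'=k$ and the single--power success probability), and only then multiply the expected costs by $M$ using linearity of expectation, so that the $M$ appears as a multiplicative prefactor and not inside the logarithm. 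Nailing the constant $360$ is the only remaining piece of genuine arithmetic.
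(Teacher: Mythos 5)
Your proposal is correct and follows essentially the same route as the paper's proof: the paper likewise obtains the corollary by specializing \thm{moment} with $M'\mapsto k$, multiplying by $M$ for the per-vector assignments, dropping the variance-normalization branch (which reduces the per-oracle cost from the $10\lceil\cdot\rceil$ of \lem{centroid} to the $4\lceil\cdot\rceil$ distance-only cost, giving $900\times 4/10 = 360$), and noting that the coherent AE success probability reverts from $(8/\pi^2)^2$ to $8/\pi^2$, which is exactly the change in the Hoeffding denominator you identify. Your additional care about reading $\delta_0$ as a per-assignment failure probability (so that $M$ multiplies the cost rather than appearing inside the logarithm) is consistent with, and slightly more explicit than, the paper's treatment.
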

This shows that a step for $k$--means can be performed using a number of queries that scales as $O(M\sqrt{k}\log(k)/\epsilon)$, which is substantially better than the $O(kMN)$ scaling of the direct classical method if $kN\gg M$.

\section{Conclusions}\label{sec:conc}
%\textit{Conclusions. }
We have presented quantum algorithms for performing nearest-neighbor classification and $k$--means clustering that promise significant reductions in their query complexity relative to over their classical counterparts.  
Our algorithms enable classification and clustering over datasets with both a high-dimensional feature space as well as a large number of training examples.
Computation of distances is extremely common in machine learning algorithms; we have developed two fast methods for computing the distance between vectors on a quantum computer that can be implemented coherently.
Finally, we have shown that our algorithms are robust to noise that arises from coherent amplitude estimation, perform well when applied to typical real--world tasks and asymptotically outperform Monte--Carlo methods for nearest--neighbor classification.

We find that quantum algorithms for machine learning can provide algorithmic improvements over classical machine learning techniques.
%We see by examining a common problem in handwriting recognition that nearest--neighbor classification can provide a computational advantage even in real world applications and also that existing approaches to quantum machine learning can fail to accurately classify data points in such cases.  
Our algorithms are a step toward blending fast quantum methods with proven machine learning techniques.  Further work will be needed to provide a complete cost assessment in terms of the number of elementary gate operations and logical qubits needed to practically achieve these speedups using a fault tolerant quantum computer.

Possibilities for future work include examining width/depth tradeoffs that arise when executing these algorithms on quantum computers and how well the algorithm performs in cases where the oracle represents a database.  Also it would be interesting to see whether these methods can be directly adapted to practical cases of characterizing data yielded by an efficient quantum circuit.  It remains an open question whether exponential speedups can be obtained by a quantum algorithm for supervised, unsupervised, or semi-supervised machine learning tasks.  Beyond computational speedups, a final interesting question is whether quantum computation allows new classes of learning algorithms that do not have natural classical analogs.  The search for inherently quantum machine learning algorithms may not only reveal potentially useful methods but also shed light on deeper questions of what it means to learn and whether quantum physics places inherent limitations on a system's ability to learn from data.

\acknowledgements{We thank Matt Hastings and Martin Roeteller for valuable comments and feedback.}

%\bibliographystyle{unsrt}
%\bibliography{superpaper}

\appendix

\section{Additional numerical experiments}\label{app:numerics}
We evaluate the performance of our nearest--neighbor (NN) and nearest--centroid (Centroid) algorithms on several additional machine learning tasks.
A list of datasets, their respective training set sizes, and feature dimensions are listed in \tab{data}.  Each task is mapped to a binary classification problem (two classes).  
The data sets do not, in general, contain an equal number of training vectors per class.  We denote the number of training vectors in classes $A$ and $B$ to be $M_A$ and $M_B$, respectively.
%\comment{KMS: How many train and test points for these experiments?  How many classes/clusters and number of examples per cluster?  Add citations to data locations in the table below.}

The noise induced by inaccurate estimation of the distances is modeled by introducing Gaussian random noise with zero mean and variance $\epsilon^2$ and then clipping the result to the interval $[0,\infty)$.  Other distributions, such as uniformly distributed noise, gave qualitatively similar results. 
The features used in each data set can take on dramatically different value types.   
For example, the diabetes data set contains features such as patient age and blood pressure.  
In all tasks, we scale each feature to have zero mean and unit variance over the given data set.

We do not scale the vectors to unit length because the length of the vector is important for classification since points in one class are likely to be nearly co-linear with those in another class in low--dimensional spaces.  
Non--unit vectors can be easily accommodated by our algorithms by multiplying by the norms of the vectors in the inner--product based approach or by increasing the number of vectors used in the centroid approach.  This also means that $|\vec{v}_j|$ will typically be on the order of $\sqrt{N}$, which suggests that for the data sets that we consider $|\vec{v}_j|\in [1,10]$ is not unreasonable.  Hence we will refer to the regime where $\epsilon \le 1$ as the low--noise regime and $\epsilon \in (1,10]$ as the high--noise regime.

We first evaluate our algorithms on a standard machine learning benchmark commonly referred to as the ``half moon" dataset, which consists of two synthetically generated crescent-shaped clusters of points, as shown in ~\fig{halfmoon}.  
The dataset challenges classification algorithms since the convex hulls of the two ``moons" overlap and the mean value for each cluster (denoted by a star) sits in a region not covered by points.  This data set will be hard to classify with centroid--based methods (using one cluster) because $14.3\%$ of the data is closer to the centroid of the opposite set than to its own centroid.  This means that the accuracy of centroid--based assignment will be at most $85.7\%$.  In contrast, we expect nearest--neighbor classification to work well because the typical (Euclidean) distance between points is roughly $0.03$, whereas the two classes are separated by a distance of approximately $0.5$.  This means that NN should succeed with near $100\%$ probability, except  in cases where the training set size is very small.
%\comment{KMS: Plot on the diagram a RED star and BLUE star that represents the centroid.  This will aid readers in understanding why this dataset is so challenging.}

\begin{figure}[t!]
\includegraphics[width=0.5\linewidth]{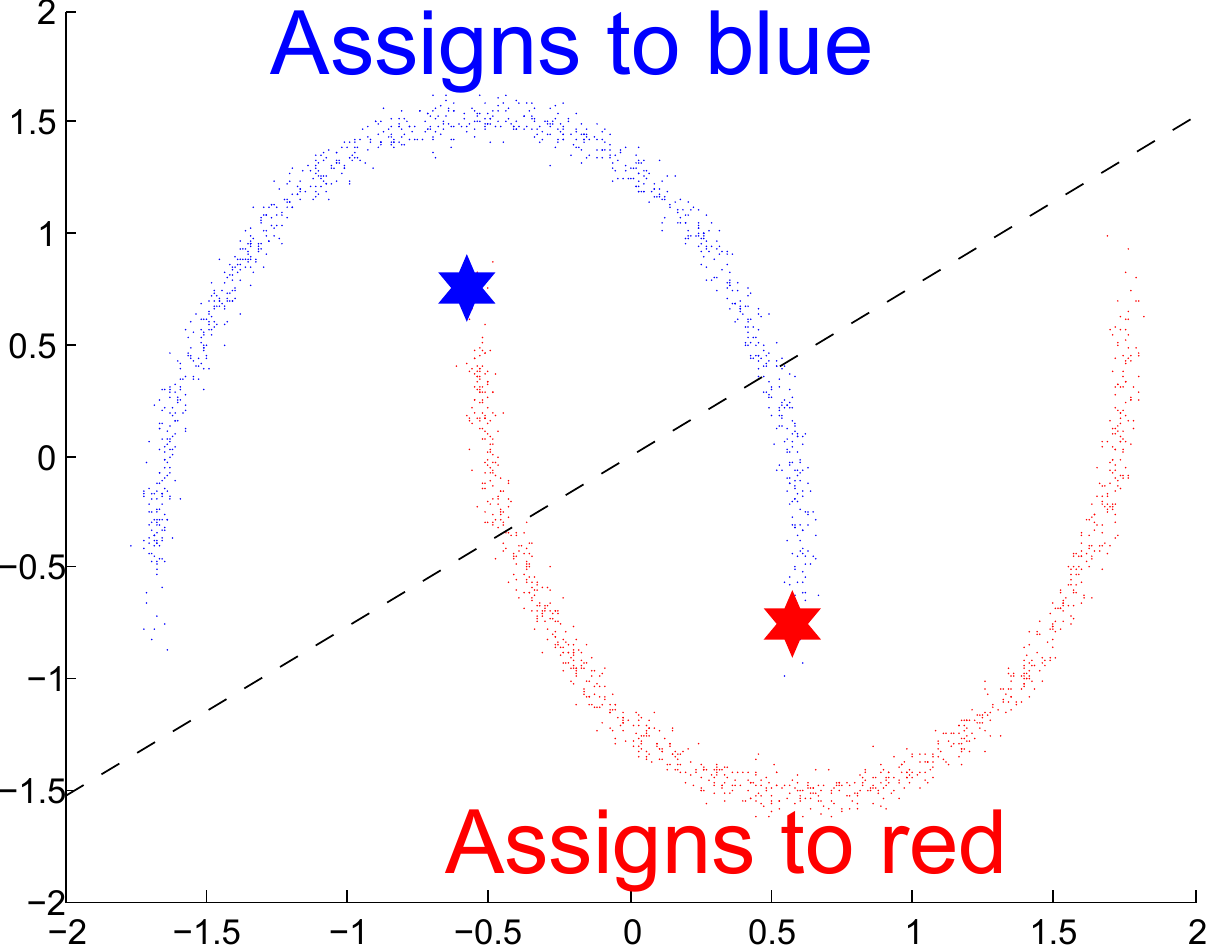}
\caption{Half--moon data set, vectors are unnormalized.  The two clusters of red and blue vectors correspond to the two classes used in the assignment set and the red and blue stars give the centroids of the corresponding cluster.  \label{fig:halfmoon}}
\end{figure}

\begin{figure}[t!]
\begin{minipage}{0.45\linewidth}
\includegraphics[width=\linewidth]{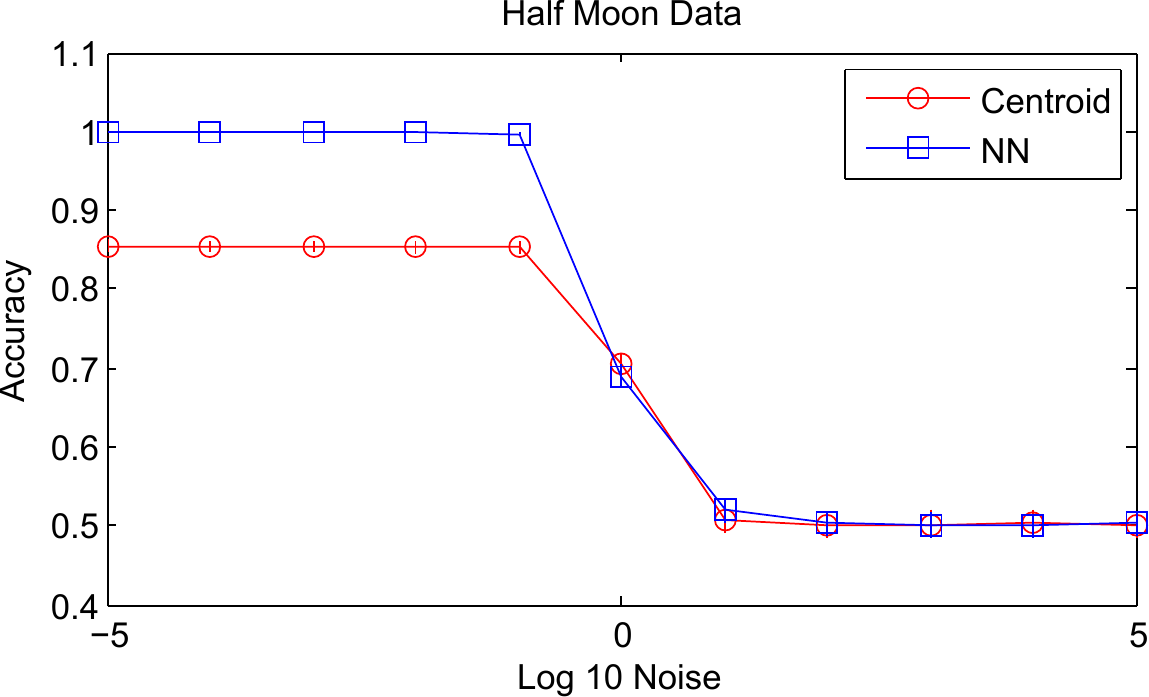}
\caption{Accuracy as a function of noise $\epsilon$ in distance computation for half--moon data\label{fig:halfmoonb}.  $50\%$ of the data was used to train the classifier and the remaining $50\%$ was used to test it.}
\end{minipage}
\hspace{0.5cm}
\begin{minipage}{0.45\linewidth}
\includegraphics[width=\linewidth]{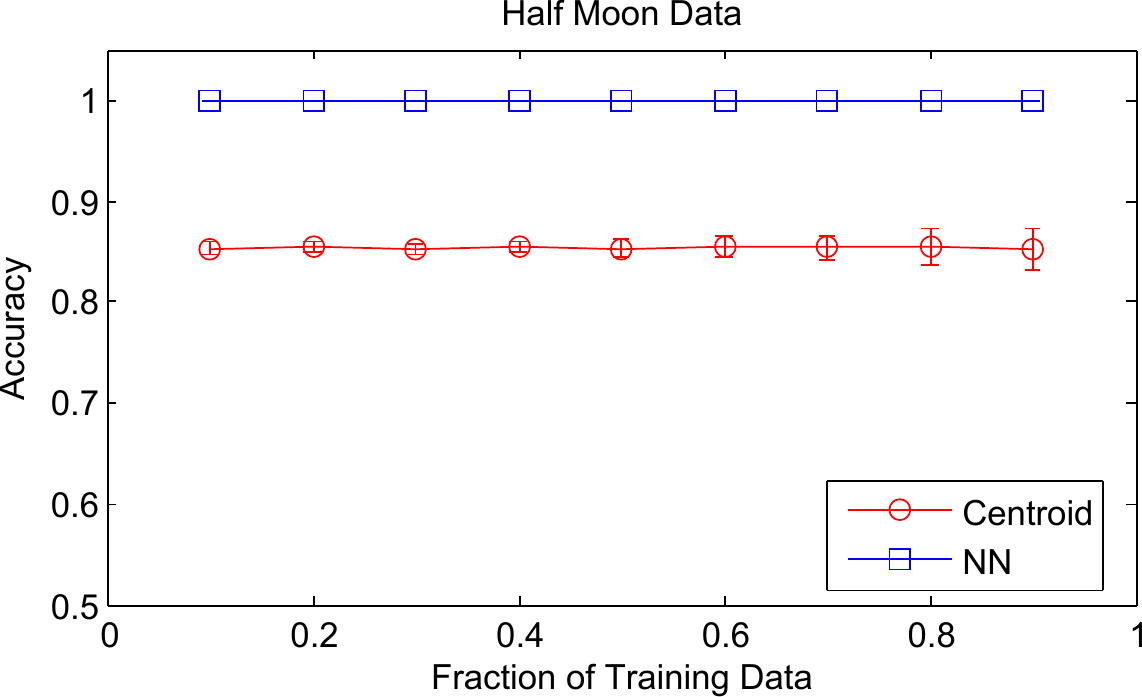}
\caption{Accuracy as a function of training data size for half--moon data\label{fig:halfmoonttb}.  Noise of $\epsilon=10^{-5}$ was used here.}
\end{minipage}
\end{figure}

%\comment{KMS:  It could be interesting here to show the actual assignment by the two algorithms for the test points (plotted in two separate plots --- one for each algorithm)}.

%We expect that both nearest--neighbor methods and the centroid--based approaches will tend to perform well, although nearest--neighbor should outperform centroid based classification for this data set.  
%The reason is simple: the centroids for both of these data sets lie outside the cluster of points that comprise it.  
%This means that centroid based classification will always assign a constant fraction of points to the wrong cluster.  
%This intuition is reflected in~\fig{halfmoonb} and~\fig{halfmoonttb}.  For this data set, it is worth noting that changing the size of the train/test ratio from $1/2$ does not improve the quality of assignment. 

In \fig{halfmoonb}, we plot the accuracies of our nearest--neighbor algorithm (NN; blue squares) and our nearest--centroid algorithm (Centroid; red circles) as functions of  noise $\epsilon$ in the distance computation.
NN significantly outperforms Centroid in the low--noise regime, exhibiting an accuracy near $100\%$ versus Centroid's $86\%$ accuracy.
As the noise level increases, the accuracy of both algorithms decays; however, in the low noise regimes, NN outperforms Centroid with statistical significance.  
At high noise levels, both algorithms decay to $50\%$ accuracy as expected.

\fig{halfmoonttb} shows accuracy as a function of training data size.
Here the training data size is taken to be a fraction, $f$,  of the $2000$ vectors in the set and the remaining fraction, $1-f$, of the $2000$ vectors was used to test the accuracy of the assignments.
Again, NN is almost always successful in classifying vectors; whereas Centroid achieves accuracies between $84$--$88\%$.  
Neither algorithm exhibits significant improvements in learning as the training set size is increased.
This behavior indicates the difficulty of this classification task for Centroid.

%\comment{KMS: When training set size is varied, was test set size maintained?  I am assuming yes.  Need to know size of test set for all tasks in this section.}						
There are of course other methods that can be employed in order to boost the success probability of centroid--based classification.  The simplest is to cluster the data using a $k$--means clustering algorithm to subdivide each of the half moons into two or more clusters.  This semi--supervised approach often works well, but can be expensive for certain representations of the data~\cite{Jai10}.
%\comment{KMS: What would happen if we did a K-NEAREST-NEIGHBOR or K-MEANS approach on this task?  Would we see better performance?  A discussion to this effect may be interesting to put here.  What classical algorithms work well on two moons? Etc.}

The next tasks that we consider consist of determining whether a given disease was present or not based on patient data.  The diseases considered include breast cancer, heart disease, thyroid conditions, and diabetes. All data is taken from the UCL Machine Learning Repository~\cite{BL13}.
Details on the features and data size are given in \tab{data}.
%\comment{Add relevant data details here.  Train size. Test size. Features. When data collected, etc. }

\begin{figure}[t!]
\includegraphics[width=0.9\linewidth]{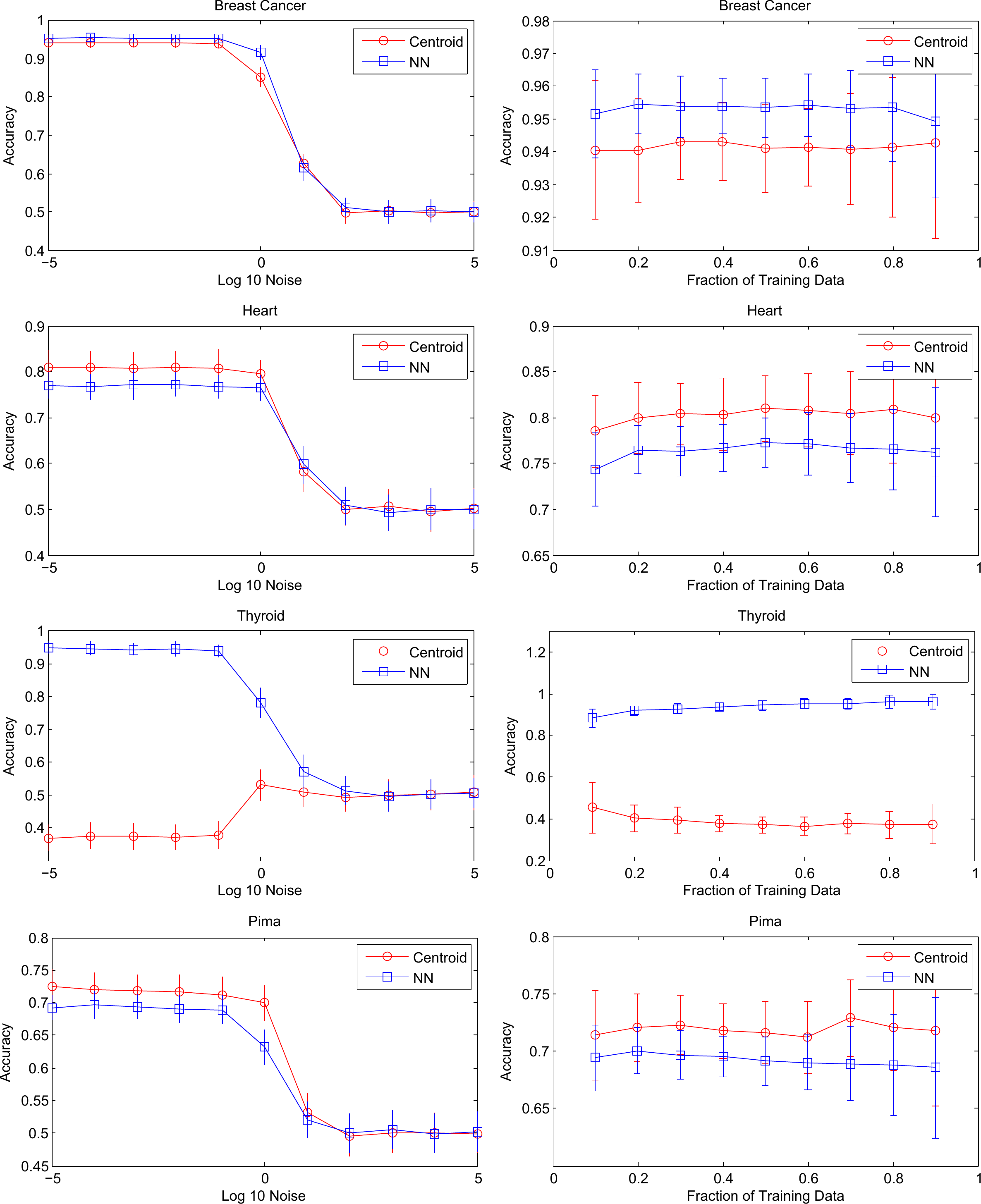}
\caption{(Left Column) Accuracy as a function of noise $\epsilon$ in the distance computation; (Right Column) Accuracy as a function of training set size for breast cancer (first row), heart disease (second row), thyroid (third row), and diabetes (fourth row) data. $50\%$ of the data is used for training and the remainder for testing for all data in the left column.  $\epsilon=10^{-5}$ is taken for all data in the right column.\label{fig:data}}
\end{figure}

\begin{table}[h!]
\begin{tabular}{|c|c|c|c|c|c|}
\hline
 & $N$, Number of Features & $M$, Number of Points & $M_A$ & $M_B$ & Year\\ \hline
Half Moon & $2$ & $2000$ & $1000$ & $1000$ & --\\
Breast Cancer \cite{MSW95,BL13} & $9$ & $683$ & $239$ & $444$ & $1992$ \\
Heart Disease (Statlog Data Set) \cite{BL13} &$13$ & $270$ &$120$ & $150$ & $1993$\\
Thyroid \cite{QCH87,BL13} &$5$ & $215$ & $150$ & $65$ & $1987$\\
Diabetes (Pima) \cite{BL13} &$7$ & $532$ & $177$ & $355$ &$1990$\\ \hline
\end{tabular}
\caption{Evaluation datasets.  Sizes of each data set for the conditions examined in~\fig{data}.}
\label{tab:data}
\end{table}

In some cases, we modified the data slightly. The breast cancer data, thyroid data, and the Pima diabetes study all contained instances of missing data.  
In each case we removed any vector that had a missing value.  We also removed boolean features from the thyroid and Pima diabetes data sets.

The left column of \fig{data} shows the accuracy of NN (blue squares) and Centroid (red circles) as a function of noise $\epsilon$ in the distance computations.
The first row shows the accuracies on the breast cancer data.
Both algorithms exhibit similarly high accuracies above $94\%$ in the low--noise regime, with NN outperforming Centroid with significance only at $\epsilon=1$.
In the extreme noise regime, NN performs just slightly better than random as expected.

In the second and last rows, the accuracies for heart disease and diabetes data are shown.
In these tasks, we find that in the low--noise regime, Centroid slightly outperforms NN, without statistical significance (except when $\epsilon=1$).
In the presence of high amounts of noise, both methods exhibit some learning; however, in all cases, learning is limited to around $55\%$.

In the third row, accuracy for the thyroid data is shown. 
NN exhibits significantly better accuracy of $90\%$ as compared to less than $40\%$ for Centroid.  In this case, the centroid--based algorithm performed worse than random guessing.  Poor accuracy is caused, in part, by our decision to divide the distance by the standard deviation in the distances as seen in \fig{thyroid}.  We found that the variance of the hypothyroid cases ($X_B$) was high enough that the mean of the training vectors that tested negative for thyroid conditions ($X_A$) was within one standard deviation of it.  In particular, $\sqrt{\mathbb{E}_{\vec{v}\in X_A}(|\vec{v} - {\rm mean}(X_B)|_2^2)/\sigma_B}\approx  0.49$ and $\sqrt{\mathbb{E}_{\vec{v}\in X_B}(|\vec{v} - {\rm mean}(X_A)|_2^2)/\sigma_A}\approx  4.4$.  Thus this test will incorrectly assign vectors from $X_A$ with high probability and correctly assign vectors from $X_B$ with high probability.  We therefore expect the accuracy to be roughly $30\%$ since the probability of drawing a vector from $X_B$ is roughly $65/215$.  This is close to to the observed accuracy of $37\% \pm 4\%$.
%As noise increases, NN continues to exhibit some learning of around $60\%$ accuracy, while Centroid continues to exhibit random classification accuracy of $50\%$.

The data in~\fig{thyroid}, which forgoes normalizing the computed distances in Centroid, is devoid of these problems.  For low noise, Centroid succeeds roughly $86\%$ of the time and falls within statistical error of the NN data at $\epsilon\approx 1$.  Also, we observe that the assignment accuracy increases for both methods as more training data is used.  This is in stark contrast to the data in~\fig{data}; however,
this does not imply that the centroid--based method is actually performing well.  If we were to assign the data to class $A$ every time, regardless of the distance, we would succeed with probability $70\%$.  If Centroid is used, then the accuracy only increases by roughly $15\%$.  Also, since the two clusters strongly overlap, distance to the centroid is not a trustworthy statistic on which to base classification.  For these reasons, the use of Centroid to diagnose thyroid conditions, either with or without normalization, is inferior to using other methods.
%Thus normalizing the distance measure will not always improve the quality of the assignment; however, given that 

\begin{figure}[t!]
\includegraphics[width=0.9\linewidth]{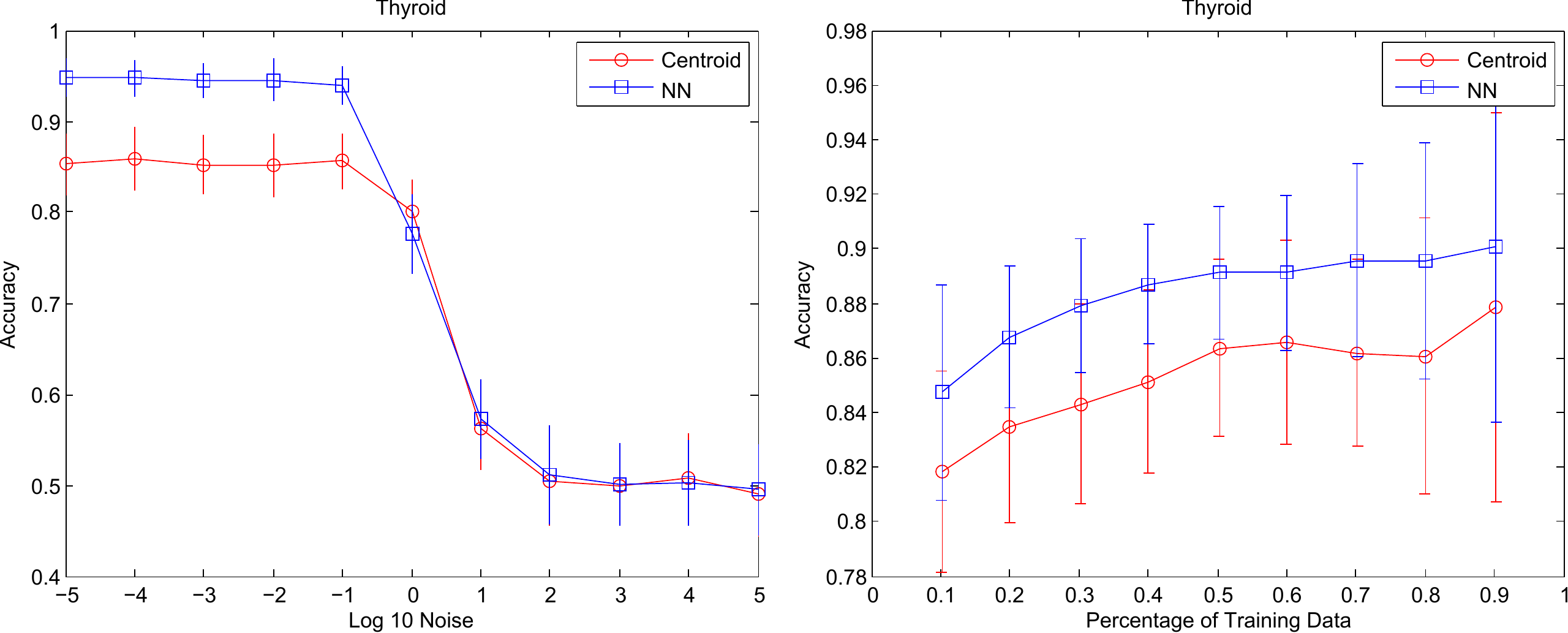}
\caption{Accuracy as a function of noise $\epsilon$ in the distance computation and the fraction of the total data that is used for training for thyroid data set where the normalization step in the distances has been omitted.  $50\%$ of the data is used for training and the remainder for testing in the left plot.  $\epsilon=10^{-5}$ is taken for all data in the right plot.\label{fig:thyroid}}
\end{figure}

The right column of \fig{data} shows the accuracy of our two algorithms as a function of training set size.
%\comment{KMS: Aren't all of these data sizes really small to begin with?  Normally these graphs would be more important if we were scaling from 100s to 10000s.  Do we have any data where we can scale to those sizes for learning?}
In the breast cancer task (first row), we see that both NN and Centroid exhibit little variation in accuracy as the amount of training data increases.
Similarly, in the heart disease and diabetes tasks (second and last rows), an increase in training data size does not imply significant increases in accuracy.
However, in the thyroid task, we see some differences in learning between NN and Centroid as the training data size increases.
NN's accuracy improves, from $85\%$ to $96\%$, while Centroid's accuracy decreases slightly. 
%\comment{KMS: Again, the centroid performance here is fishy.  Let's double check before writing any conclusions based on the plots.} 

It is hard to determine in general why Centroid sometimes outperforms NN, but outliers in the data are frequently one reason.  
Outliers can cause problems for NN because it becomes increasingly likely as more training data is included that an outlier point from $X_B$ is close to any given element of $X_A$.  Thus increasing the training size can actually be harmful for certain nearest--neighbor classification problems.  Centroid is less sensitive to these problems because averaging over a data set reduces the significance of outliers.  Such problems can be addressed in the case of NN by using $k$--nearest--neighbor classifiers  instead of nearest--neighbor classification~\cite{FN75}.  Our quantum algorithms can be trivially modified to output the classes of each of the $k$ closest vectors (see~\app{kNN}).  Alternatively, such problems can also be addressed by using alternative machine learning strategies such as deep learning~\cite{Ben09}.

%\comment{KMS: Do we want to include a general paragraph on when/why NN and Centroid may outperform on another? Mention effect of outliers?}
%This is common when the assignment sets are largely comprised of different clusters of vectors with a small number of outliers.  
%Such behavior is common in situations where most of the assignment set is tightly clusters but outliers are spread widely throughout the data set.  

In summary, our numerical results indicate that classification accuracy, and in turn the best choice of algorithm, is highly dependent on the particular task and dataset.
While nearest--neighbor classification appears to be the preferred algorithm on most of the tasks presented here, in practice, a highly non-linear combination of classification algorithms is more commonly used \cite{Ben09}.
%\comment{KMS: do we want to mention what is done in practice classically for these problems?}
However, such classical approaches can be computationally expensive, in particular when classification over a large dataset is required.
Our quantum algorithms for classification offer the advantage of fast classification in conjunction with high performance accuracy, and may enable accurate classification of datasets that otherwise classically would not be possible.

\section{Proofs of main results}\label{app:proofs}
We present the proofs of \thm{ip} and \thm{moment} by way of a number of propositions that can be independently verified.
We begin with preliminary results that show that the state preparations used in our algorithms are efficient.  
We then review known results on the performance of the quantum minimum finding algorithm and amplitude estimation.  
We present our coherent majority voting scheme and variant of the swap test and provide  intermediate results needed to apply the D\"urr H\o yer algorithm and amplitude estimation coherently.  
We then use these results to prove \thm{ip}. 
Finally, we turn our attention to proving \thm{moment} which uses many of the same techniques used to prove \thm{ip}, but in addition requires the introduction of new methods for computing the distances to the cluster centroids and the  intra--cluster variance.

\subsection{Preliminary Results}
We begin by introducing a method to implement the operator $V$ which is needed for our nearest--centroid classification algorithm.
\begin{lemma}\label{lem:V}

A unitary $V$ such that 
 $$|V_{j0}|=\begin{cases}\frac{1}{\sqrt{2}}, & j=0\\ \frac{1}{\sqrt{2M}}, & {\rm otherwise.} \end{cases},$$
can be efficiently synthesized within error $O(\epsilon)$ on a quantum computer equipped with $H$ (Hadamard), $T$ ($\pi$/8) and CNOT gates.
\end{lemma}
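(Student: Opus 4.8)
The plan is to exploit the fact that the lemma constrains only the first column of $V$, i.e.\ the action of $V$ on $\ket{0}$. It therefore suffices to exhibit an efficient state-preparation circuit producing
$$
\ket{\chi} := V\ket{0} = \frac{1}{\sqrt{2}}\ket{0} + \frac{1}{\sqrt{2M}}\sum_{j=1}^{M}\ket{j},
$$
and then to invoke the elementary fact that a partial isometry defined on a single basis vector extends to a full unitary; any such extension is an admissible $V$. One checks immediately that $\ket{\chi}$ is normalized, since $\tfrac12 + M\cdot\tfrac{1}{2M} = 1$, so this is a legitimate target state on $n=\lceil\log_2(M+1)\rceil$ index qubits.

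First I would reduce the preparation of $\ket{\chi}$ to two standard primitives using one ancillary flag qubit, noting that the $j=0$ term is orthogonal to the uniform state $\tfrac{1}{\sqrt{M}}\sum_{j\ge 1}\ket{j}$. The circuit applies a Hadamard to the flag, then performs a flag-controlled preparation of the uniform superposition over $\{1,\dots,M\}$ on the index register, and finally uncomputes the flag by a generalized Toffoli that tests whether the index register is nonzero. Because the flag value is perfectly correlated with the predicate ``$j\neq 0$'' after the controlled step, this uncomputation is exact and returns the flag to $\ket{0}$, leaving precisely $\ket{\chi}$ on the index register. The zero-test is a multiply-controlled NOT on $n$ qubits and costs $O(n)$ Clifford+$T$ gates.

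The only nontrivial primitive is preparing a uniform superposition over $M$ basis states when $M$ is not a power of two. I would build this with the standard recursive scheme of $O(\log M)$ qubit-wise controlled $R_Y$ rotations, in which the $k$-th rotation angle is fixed by the binary digits of $M$ so that the amplitude is split in the correct ratio at each level of the register; this yields $\tfrac{1}{\sqrt{M}}\sum_{j=0}^{M-1}\ket{j}$ exactly in exact arithmetic, and a reversible increment (another $O(n)$ gates) shifts the support to $\{1,\dots,M\}$. At this stage the circuit uses only Clifford gates together with $O(\log M)$ arbitrary single-qubit $R_Y$ rotations.

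Finally I would handle Clifford+$T$ synthesis and the error budget. The only gates outside the exactly-representable Clifford+$T$ set are these $O(\log M)$ rotations, each of which can be approximated to spectral-norm error $\epsilon'$ using $\mathrm{polylog}(1/\epsilon')$ gates by Solovay--Kitaev (or by exact-synthesis results for single-qubit unitaries). Setting $\epsilon' = \epsilon/O(\log M)$ and using subadditivity of error under composition, $\|V-\tilde V\| \le \sum_k \epsilon' = O(\epsilon)$, while the total gate count stays $O(\mathrm{poly}(\log M)\,\mathrm{polylog}(1/\epsilon))$, which is efficient in the relevant parameters. The main obstacle is getting the non-power-of-two uniform superposition right, namely choosing the controlled-rotation angles and arranging the flag uncomputation so that the prepared state is exactly $\ket{\chi}$; by comparison, the Solovay--Kitaev approximation and the triangle-inequality accumulation of errors are routine.
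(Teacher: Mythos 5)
Your proposal is correct, but it takes a genuinely different route from the paper. The paper's proof is a closed-form analytic construction: since $(H^{\otimes m})^2 = \openone$, it sets $V = e^{-iH^{\otimes m}t} = \cos(t)\,\openone - i\sin(t)\,H^{\otimes m}$, chooses $t=\sin^{-1}\bigl(\sqrt{(M+1)/(2M)}\bigr)$ so that the first column has the required magnitudes, and then implements the evolution by conjugating each Hadamard into a one-sparse (antidiagonal) matrix via $T$-gate basis changes and invoking one-sparse Hamiltonian-simulation results for the $O(\epsilon)$ synthesis. You instead observe that only the first column of $V$ is constrained and reduce the lemma to state preparation of $\tfrac{1}{\sqrt 2}\ket 0 + \tfrac{1}{\sqrt{2M}}\sum_{j=1}^M\ket j$, built from a flag qubit, a controlled uniform superposition over $\{1,\dots,M\}$, and an exact uncomputation of the flag, with all approximation error localized in Solovay--Kitaev synthesis of $O(\log M)$ single-qubit rotations. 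Each approach buys something: yours handles arbitrary $M$ exactly (the paper's expansion of $H^{\otimes m}\ket 0$ implicitly assumes $M+1=2^m$, since otherwise the uniform superposition runs over $2^m-1 \neq M$ nonzero states), and it uses only elementary, standard primitives; the paper's is ancilla-free, avoids multi-controlled gates, and reuses Hamiltonian-simulation machinery the authors cite elsewhere. Two minor remarks: your appeal to extending a partial isometry is unnecessary, since the circuit you build \emph{is} the unitary $V$ (its first column is the prepared state); and in your construction the basis states with flag $1$ or index $>M$ receive amplitude zero rather than $1/\sqrt{2M}$, but this deviation from the literal statement is harmless for the downstream application (the centroid-distance circuit only needs $|V_{j0}|^2$ to weight $\vec v_0$ by $1/2$ and each $\vec v_j$, $1\le j\le M$, by $1/(2M)$), and the paper's own proof faces the analogous dimension-padding issue.
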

\begin{proof}
Since $H$ is unitary and Hermitian it is a straightforward exercise in Taylor's theorem to show that for any $t\in\mathbb{R}$
\begin{equation}
e^{-iH^{\otimes m} t}=\openone \cos(t)-i H^{\otimes m} \sin(t).
\end{equation}
Thus if we choose $V$ to be $e^{-iHt}$ for some fixed value of $t$ then
\begin{equation}
V\ket{0} = \left(\cos(t)-i\frac{\sin(t)}{\sqrt{M+1}}\right)\ket{0} -i\frac{\sin(t)}{\sqrt{M+1}}\sum_{j>0} \ket{j}.
\end{equation}
The value of $t$ is found by setting $P(0)/P(j>0) = 1/M$, which yields
\begin{equation}
t=\sin^{-1}\left(\sqrt{\frac{M+1}{2M}} \right).
\end{equation}
Finally, $H$ can be made sparse efficiently by
\begin{equation}
(HT^2) H (T^6 H)= \left[\begin{array}{cc}0&e^{i\pi/4}\\ e^{-i\pi/4} &0  \end{array} \right],
\end{equation}
and hence $H^{\otimes n}$ can be transformed into a one--sparse matrix by applying this basis transformation to each qubit.  
One--sparse matrices can be efficiently simulated~\cite{ATS03,CCD+03,WBHS11} using gates $H$, $T$ and CNOT gates within error $O(\epsilon)$, completing the proof of the lemma.
\end{proof}

We also use the amplitude estimation result of Brassard et al.~\cite{BHM+00} to estimate the amplitude squared of a marked component of a quantum state, which we denote as $a$.  
The algorithm works by applying the phase estimation algorithm to an operator $Q$, which performs an iteration of Grover's algorithm where we wish to estimate the amplitude of the marked state.  
We provide a circuit for amplitude estimation in~\fig{AA}.
%\comment{KMS: Again, maybe make figure and refer to it for clarity?}
\begin{figure}[t!]
    \[
      \Qcircuit @R 1em @C 1.5em {
                    \lstick{\ket 0} &\qw {/}		&	\gate{F_L}	&\ctrl{1}			&\gate{F_L^\dagger}	&\qw &\rstick{\ket{y}}\\
		 		& \qw	{/}		&	\qw		&\gate{Q^j}			&\qw				&\qw
}				
    \]
\caption{Quantum circuit for amplitude estimation where $F_L$ is the $L$--dimensional Fourier transform and the controlled $Q^j$ operator applies $j$ Grover iterations to the target state if the top most register is $\ket j$\label{fig:AA}}
\end{figure}
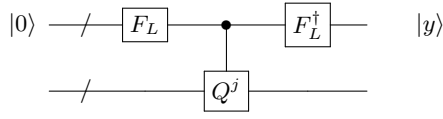
The following theorem shows that amplitude estimation can learn the resultant probabilities quadratically faster than statistical sampling.
\begin{theorem}[Brassard, H\o yer, Mosca and Tapp]
For any positive integers $k$ and $L$, the amplitude estimation algorithm of~\cite{BHM+00} outputs $\tilde{a}$ $(0 \le \tilde a \le 1)$ such that
$$
|\tilde{a}-a|\le 2\pi k \frac{\sqrt{a(1-a)}}{L}+\left(\frac{\pi k}{L}\right)^2
$$
with probability at least $8/\pi^2$ when $k=1$ and with probability greater than $1-1/(2(k-1))$ for $k\ge 2$.  
It uses exactly $L$ iterations of Grover's algorithm.  
If $a=0$ then $\tilde{a}=0$ with certainty, and if $a=1$ and $M$ is even, then $\tilde{a}=1$ with certainty.\label{thm:AE}
\end{theorem}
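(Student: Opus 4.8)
The plan is to reduce amplitude estimation to the error analysis of phase estimation applied to the Grover iterate used inside the algorithm. First I would recall the structure of the operator $Q$ whose powers appear in the circuit of~\fig{AA}: up to a sign it is $Q = A S_0 A^{-1} S_\chi$, where $A$ is the state--preparation unitary, $S_0$ reflects about the all--zeros state and $S_\chi$ marks the ``good'' subspace. The essential structural fact is that $Q$ preserves the two--dimensional subspace spanned by the good and bad components of $A\ket{0}$ and acts there as a rotation by angle $2\theta_a$, where $a=\sin^2(\theta_a)$ with $\theta_a\in[0,\pi/2]$. Hence on this subspace $Q$ has eigenvalues $e^{\pm 2i\theta_a}$ and $A\ket{0}$ is an equal--weight superposition of the two eigenvectors.

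Next I would run phase estimation on $Q$ using the $L$--dimensional Fourier transform, obtaining an integer outcome $y$ and setting $\tilde a=\sin^2(\pi y/L)$. Writing $\omega=L\theta_a/\pi$, the $e^{2i\theta_a}$ eigenbranch produces $y$ concentrated near $\omega$ and the $e^{-2i\theta_a}$ branch produces $y$ near $L-\omega$; since $\sin^2(\pi(L-\omega)/L)=\sin^2(\pi\omega/L)$, both branches yield the same estimate $\tilde a$, so it suffices to control the distance from $y$ to the nearer of $\omega$ and $L-\omega$. To convert a phase error into the amplitude error I would use the identity $\sin^2 x-\sin^2 y=\sin(x+y)\sin(x-y)$. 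Setting $\tilde\theta_a=\pi y/L$ and assuming the phase estimate obeys $|\tilde\theta_a-\theta_a|\le \pi k/L$, this gives
$$
|\tilde a-a|=|\sin(\tilde\theta_a+\theta_a)|\,|\sin(\tilde\theta_a-\theta_a)|\le\left(2\sqrt{a(1-a)}+\frac{\pi k}{L}\right)\frac{\pi k}{L},
$$
where I used $\sin(2\theta_a)=2\sqrt{a(1-a)}$ together with $|\sin z|\le|z|$; this is precisely the claimed bound.

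Finally, the success probabilities follow from the standard tail estimates for phase estimation. Writing the outcome probability as $P(y)=L^{-2}\sin^2(\pi(\omega-y))/\sin^2(\pi(\omega-y)/L)$ and using $\sin(\pi x/L)\ge 2x/L$ for $0\le x\le L/2$ gives $P(y)\le 1/(4(\omega-y)^2)$; summing over $|\omega-y|\ge k$ and bounding $\sum_{j\ge k}j^{-2}\le 1/(k-1)$ yields a failure probability of at most $1/(2(k-1))$, which is the stated estimate for $k\ge 2$. For $k=1$ the success event $|\tilde\theta_a-\theta_a|\le\pi/L$ means that $y$ is one of the two integers bracketing $\omega$, so I would sum $P(\lfloor\omega\rfloor)+P(\lceil\omega\rceil)$ and minimize over the fractional part of $\omega$; the minimum occurs when $\omega$ lies exactly halfway between two integers and equals $8/\pi^2$. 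The degenerate cases correspond to $\theta_a\in\{0,\pi/2\}$, i.e. $\omega\in\{0,L/2\}$ being an exact integer (the latter requiring the register dimension to be even), in which case phase estimation returns the answer with certainty.

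The step I expect to be the main obstacle is pinning down the constant $8/\pi^2$ for $k=1$: it requires the careful two--integer summation above and attention to the possible interference of the two eigenbranches of $Q$ when $\theta_a$ is near $\pi/2$ (equivalently $a$ near $1$), where the peaks near $\omega$ and $L-\omega$ merge and can no longer be treated as independent contributions. Establishing the rotation structure of $Q$ and verifying that the worst--case fractional part indeed yields $8/\pi^2$ rather than a weaker constant are the two places where I would expect to spend most of the effort.
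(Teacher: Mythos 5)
Your proof is correct, but there is nothing in the paper to compare it against: the paper does not prove this statement, it imports it verbatim (with attribution) from Brassard, H{\o}yer, Mosca and Tapp~\cite{BHM+00} and only uses it as a black box inside \lem{ipamp} and \lem{centroid}. What you have reconstructed is essentially the original BHMT argument, in the same three pieces: the rotation structure of $Q$ on the two--dimensional invariant subspace with eigenvalues $e^{\pm 2i\theta_a}$; the error--conversion identity $\sin^2 x-\sin^2 y=\sin(x+y)\sin(x-y)$ turning a phase error of $\pi k/L$ into the stated amplitude error (their Lemma~7); and the phase--estimation tail bounds $P(y)\le 1/\left(4(\omega-y)^2\right)$ together with the two--integer summation at half--integer $\omega$ giving $8/\pi^2$ (their Theorem~11), combined exactly as in their Theorem~12. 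Two small remarks. First, the ``main obstacle'' you flag---interference between the $e^{+2i\theta_a}$ and $e^{-2i\theta_a}$ branches when $\theta_a$ is near $\pi/2$---is not actually an issue: the two eigenvectors are orthogonal, so the measurement statistics of the estimation register form an equal convex mixture of the two single--branch distributions with no cross terms; the only genuinely degenerate situations are $a\in\{0,1\}$, where the two--dimensional subspace collapses, and these are precisely the certainty cases you already dispose of separately. Second, the clause ``$M$ is even'' in the statement is a vestige of BHMT's notation, in which $M$ denotes the dimension of the estimation register (the paper's $L$, not its number of training vectors); your reading of it as ``the register dimension is even,'' so that $\omega=L/2$ is an exact integer, is the correct one.
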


\begin{lemma}\label{lem:stateprep}
The state $\frac{1}{\sqrt{M}}\sum_{j=1}^{M} \ket{j}$ can be prepared efficiently and deterministically using a quantum computer.
\end{lemma}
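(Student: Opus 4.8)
The plan is to reduce the claim to two cases according to whether $M$ is a power of two, and to give an explicit circuit in each.

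When $M=2^n$ is a power of two, I would simply apply a Hadamard gate to each of $n=\log_2 M$ qubits initialized in $\ket{0}$. This deterministically produces $H^{\otimes n}\ket{0}^{\otimes n}=\tfrac{1}{\sqrt{M}}\sum_{j=0}^{M-1}\ket{j}$, which coincides with the target state up to the trivial relabeling $j\mapsto j+1$ of the computational basis. This already settles the lemma for the most common case ($M$ a power of two) using $O(\log M)$ gates, no measurements, and zero error.

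For general $M$, I would build the uniform superposition over $\{0,\dots,M-1\}$ on $n=\lceil\log_2 M\rceil$ qubits by a depth-$n$ binary-tree recursion of controlled $Y$-rotations. At the root I split the index set into a left block $\{0,\dots,2^{n-1}-1\}$ of size $2^{n-1}$ and a right block $\{2^{n-1},\dots,M-1\}$ of size $M-2^{n-1}\le 2^{n-1}$, and apply a single rotation $R_y(\theta_0)$ to the first qubit with $\cos^2(\theta_0/2)=2^{n-1}/M$. Conditioned on the first qubit being $0$ the remaining register must hold a full uniform superposition over $2^{n-1}$ states (produced by Hadamards), while conditioned on the first qubit being $1$ it must hold a uniform superposition over the smaller set of $M-2^{n-1}$ states, which I prepare by recursing. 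Since all target amplitudes are equal, every rotation angle is fixed by the number of leaves in each subtree, and these counts are just the partial sums of the uniform distribution---the prototypical ``efficiently integrable'' distribution in the sense of Grover and Rudolph~\cite{GR02}. Hence all angles can be computed classically in advance from the binary digits of $M$, the recursion terminates after at most $n$ levels, and the whole circuit uses $O(\log M)$ controlled rotations (plus a polylogarithmic overhead for the reversible arithmetic that realizes the controlled operations). The construction is manifestly deterministic and exact.

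The only genuine subtlety, and the part I would be most careful about, is the non--power--of--two case: I must verify that the recursion deposits exactly zero amplitude on the spurious basis states $\{M,\dots,2^n-1\}$, and must keep the bookkeeping of the nested controls consistent so that each angle matches the leaf count at its node. Both are routine once the recursion is set up, and the final index shift $j\mapsto j+1$ converting the range $\{0,\dots,M-1\}$ to $\{1,\dots,M\}$ is immaterial.
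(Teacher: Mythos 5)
Your proposal is correct, but it takes a genuinely different route from the paper. The paper's proof is non-constructive about the amplitudes: it applies $H^{\otimes m}$ on $m=\lceil\log_2(M+1)\rceil$ qubits, uses a comparator circuit ${\rm CMP}$ against a register holding $\ket{M}$ to mark the indices $i\le M$, and then, because the post-selection probability $M/2^m$ is known exactly and is $\Theta(1)$, invokes exact amplitude amplification (Theorem 4 of~\cite{BHM+00}) to remove the measurement and make the preparation deterministic. Your construction instead hard-wires the amplitudes into a binary-tree cascade of $R_y$ rotations \`a la Grover--Rudolph~\cite{GR02}, which is exact and measurement-free by construction and never touches amplitude amplification, so it is arguably more elementary and self-contained. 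What the paper's approach buys in exchange is that $M$ enters only as \emph{data} in a quantum register: the ${\rm CMP}$ gate can be controlled coherently on a superposition of different values of $M$, and the paper exploits exactly this feature later (in the proof of \thm{moment}, where the cluster sizes $M_1,\ldots,M_{M'}$ are loaded coherently). In your construction the rotation angles $\cos^2(\theta/2)=2^{k}/M'$ are classical functions of the binary digits of $M$, precomputed offline, so extending it to a coherently specified $M$ would require reversible on-the-fly computation of $\arccos$ of ratios of leaf counts---possible, but no longer the ``routine bookkeeping'' you describe. As a proof of the lemma as stated (fixed $M$), your argument is complete; the one caveat worth stating explicitly is that the multi-controlled rotations at depth $k$ need controls on all previous ``recursing'' qubits (or a running flag ancilla), which is the $O({\rm polylog})$ overhead you already acknowledge.
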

\begin{proof}
Our proof proceeds by first showing a non--deterministic protocol for  preparing the state in question and then showing that amplitude amplification can be used make the state preparation process deterministic.

Let $m=\lceil \log_2( M+1)\rceil$, and let $\ket{M}$ be the computational basis state that stores $M$ as a binary string.
The proof follows from the fact that the following circuit
    \[
      \Qcircuit @R 1em @C 1.5em {
                        \lstick{\ket{0^{\otimes m}}} 	&	\gate{H^{\otimes m}}	&\ctrlo{3}	&\multigate{2}{{\rm CMP}}&\qw\\
		  \lstick{\ket{M}} 	&	\qw 	&\qw	&\ghost{{\rm CMP}}&\qw\\
		  \lstick{\ket{0}} 	&	\qw	&\qw	&\ghost{{\rm CMP}}&\meter\\
		  \lstick{\ket{0}}			&	\qw				&\targ		&\qw&\meter
}				
    \]
prepares the desired state given measurement outcome of $(0,0)$, which occurs with probability $\frac{M}{2^m}$.  
Here the operation ${\rm CMP}$ obeys
\begin{equation}
{\rm CMP}\ket{i}\ket{M}\ket{0}=\begin{cases} \ket{i}\ket{M}\ket{0},&  i\le M\\ \ket{i}\ket{M}\ket{1},& i> M \end{cases}.
\end{equation}
Here ${\rm CMP}$ can be implemented using the circuit in~\fig{CMP}.
\begin{figure}[t!]
    \[
      \Qcircuit @R 1em @C 1.5em {
                         	\lstick{\ket{i_p}}&	\ctrl{1}	& \qw		&	\ctrl{1}	&\qw\\
		 	\lstick{\ket{M_p}}&	\ctrlo{1}	&\qw		&	\ctrlo{1}	&\qw\\
			&	\targ		&\ctrl{1}	&	\targ		&\qw\\
			&	\ctrlo{-1}	&\targ		&	\ctrlo{-1}	&\qw
}				
    \]
\caption{Circuit for performing CMP illustrated for a single qubit inputs $\ket{i_p}$ and $\ket{M_p}$ after repeating this circuit $n$ times, the lower most register will contain $\ket{1}$ if $i>M$.\label{fig:CMP}}
\end{figure}
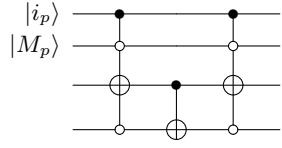
Hence the state can be prepared efficiently and with high probability if measurements are used.  
Also, note that the quantum control on the value of $M_p$ can be replaced with classical control in cases where a quantum superposition over different values of $M$ is not needed.

Since the success probability is known, the success probability can be boosted to certainty through amplitude amplification which requires $\Theta\left(\sqrt{\frac{2^m}{M}}\right)=\Theta(1)$ applications of CMP according to Theorem 4 of~\cite{BHM+00}.
%\comment{KMS:  May be clearer to name the circuit, put it in a figure, and refer back to it.}  
This means that the measurement can be removed in the state preparation step without sacrificing the efficiency of the algorithm.
\end{proof}
%Alternatively, the above lemma follows from noting that the desired state is a martrix product state with bond dimension $1$, and hence can be efficiently prepared.
Another important result is the method of D\"urr and H\o yer which is given as the following lemma~\cite{DH96}.
\begin{lemma}
[D\" urr H\o yer]  The expected number of Grover iterations needed to learn $\min \{y_i: i =1,\ldots,M\}$  is bounded above by
$$
\frac{45}{2} \sqrt{M}.
$$\label{lem:durr}
\end{lemma}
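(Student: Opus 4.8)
The plan is to recall the structure of the D\"urr--H\o yer algorithm and bound its expected cost by decomposing the total number of Grover iterations into a sum of per--stage costs, each weighted by the probability that the stage is reached. The algorithm maintains a threshold index $y$, initialized uniformly at random over $\{1,\ldots,M\}$, whose associated value $y_y$ serves as a cutoff. At each stage one applies the Boyer--Brassard--H\o yer--Tapp (BBHT) exponential quantum search to look for an index $i$ with $y_i < y_y$, and whenever such an index is found the threshold is updated to it. Because Grover amplification treats all marked items symmetrically, the index returned by a successful search is uniformly distributed over the currently marked set. First I would relabel the data by rank, with rank $1$ the true minimum, so that when the threshold has rank $r$ there are exactly $r-1$ marked elements and a successful search jumps to a uniformly random rank in $\{1,\ldots,r-1\}$.

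The first ingredient I would invoke is the known BBHT cost bound: when $1 \le t$ of the $M$ items are marked, the expected number of Grover iterations to return one of them is at most $\tfrac{9}{2}\sqrt{M/t}$. The second, and more delicate, ingredient is a combinatorial identity for the probability $q_r$ that the element of rank $r$ is ever adopted as the threshold. Conditioning on the initial draw and on which larger--ranked threshold last jumped to rank $r$ gives the recursion
\[
q_r = \frac{1}{M} + \sum_{s=r+1}^{M} \frac{q_s}{s-1},
\]
since a threshold of rank $s$ jumps to a fixed rank $r<s$ with probability $1/(s-1)$, and these jump events are mutually exclusive across $s$ within a single run. Starting from $q_M = 1/M$, an easy induction gives the telescoping collapse $q_r = 1/r$ for every $r$; this exact $1/r$ visit law is the crux of the argument.

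With these two facts in hand I would bound the expected total number of iterations by summing the per--stage BBHT cost against the visit probabilities,
\[
\bar T \;\le\; \sum_{r=2}^{M} q_r \cdot \frac{9}{2}\sqrt{\frac{M}{r-1}} \;=\; \frac{9}{2}\sqrt{M}\sum_{r=2}^{M} \frac{1}{r\sqrt{r-1}},
\]
where only ranks $r\ge 2$ contribute a genuine search cost (at rank $1$ there are no marked elements). The inequality $\tfrac{1}{r\sqrt{r-1}}\le (r-1)^{-3/2}$ bounds the tail of the series by $\zeta(3/2)$, an absolute constant, so that $\bar T \in O(\sqrt M)$. Tracking the numerical factor, together with the overhead from BBHT rounds that fail to return a marked element (and must be restarted) and from the iterations spent confirming termination once rank $1$ is reached, yields the stated bound $\tfrac{45}{2}\sqrt{M}$.

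I expect the main obstacle to be precisely this last bookkeeping of constants: the $1/r$ visit law and the $\sqrt{M/t}$ BBHT scaling are clean, but pinning the coefficient at $45/2$ requires care in accounting for the constant success probability of each individual BBHT search and for the unproductive iterations at rank $1$. The elegant and somewhat surprising step, by contrast, is the exact telescoping of the visit probabilities to $1/r$, which makes the weighted sum converge and keeps the overall scaling at $\sqrt{M}$.
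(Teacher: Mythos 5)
Your proposal cannot be compared against a proof in the paper because there is none: the paper imports this lemma verbatim by citation from D\"urr and H\o yer~\cite{DH96}, so the relevant benchmark is their original argument---which is essentially what you have reconstructed. Your three ingredients (the BBHT bound of $\tfrac{9}{2}\sqrt{M/t}$ expected iterations with $t$ marked items, the exact $1/r$ law for the probability that the rank-$r$ element ever becomes the threshold, and the weighted sum $\tfrac{9}{2}\sqrt{M}\sum_{r\ge 2}\tfrac{1}{r\sqrt{r-1}}$) are precisely Lemmas 1 and 2 of that paper, and your recursion-plus-telescoping derivation of $q_r=1/r$ is a correct, mildly different route to their induction on the number of elements. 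On the constant: your $\zeta(3/2)$ bound gives roughly $\tfrac{9}{2}\,\zeta(3/2)\sqrt{M}\approx 11.8\,\sqrt{M}$ (D\"urr--H\o yer use the sharper estimate $\sum_{r\ge 2}\tfrac{1}{r\sqrt{r-1}}\le\tfrac{5}{2}$ to get $\tfrac{45}{4}\sqrt{M}$), and the stated $\tfrac{45}{2}\sqrt{M}$ is exactly this expectation doubled---the factor of two being the slack that absorbs the termination/timeout accounting at rank $1$ which you correctly flag as the remaining bookkeeping, so your argument closes.
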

Nayak and Wu show that this algorithm is also near--optimal by providing a matching lower bound of $\Omega(\sqrt{M})$ for minimum finding (proven using the polynomial method) \cite{NW99}.

\subsection{Proof of Theorem 1}
In order to find the minimum value of a set of different quantum amplitudes using~\lem{durr}  we need to be able to perform iterations of Grover's algorithm using the result of~\thm{AE}.  
This cannot be done directly (with high probability) because the traditional approach to amplitude estimation is not reversible.  
We provide below a reversible algorithm that uses a coherent form of majority voting to obtain a reversible analog for algorithms like amplitude estimation.

\begin{lemma}
Let $\mathcal{A}$ be a unitary operation that maps $\ket{0^{\otimes n}}\mapsto \sqrt{a}\ket{y}+\sqrt{1-|a|} \ket{y^{\perp}}$ for $1/2 <|a_0|\le |a|\le 1$ using $Q$ queries then there exists a deterministic algorithm such that for any $\Delta>0$ there exists an integer $k$ and a state $\ket{\Psi}$ can be produced that obeys $\|\ket{\Psi}-\ket{0^{\otimes nk}}\ket{y}\|_2\le \sqrt{2\Delta}$ using a number of queries bounded above by
$$
2Q\left\lceil\frac{\ln(1/\Delta)}{2\left(|a_0|-\frac{1}{2} \right)^2}\right\rceil.
$$
\label{lem:expamp}
\end{lemma}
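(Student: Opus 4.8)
The plan is to take $k$ coherent copies of $\mathcal{A}\ket{0^{\otimes n}}$, append a fresh output register, apply a reversible query-free majority operation $\mathcal{M}$ that writes the mode of the $k$ copy registers into the output, and finally uncompute the copies with $(\mathcal{A}^\dagger)^{\otimes k}$. The factor $2Q$ in the query bound comes precisely from the forward preparation $\mathcal{A}^{\otimes k}$ ($kQ$ queries) together with the uncomputation $(\mathcal{A}^\dagger)^{\otimes k}$ ($kQ$ queries), since $\mathcal{M}$ is pure arithmetic and makes no oracle calls. Thus the real content is (i) choosing $k$ so that the mode equals $y$ with probability at least $1-\Delta$, and (ii) showing the coherent construction converts that probability into the stated $\sqrt{2\Delta}$ trace-distance error.

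First I would fix $k$ by a Hoeffding bound. Writing $\ket{\psi}:=\mathcal{A}\ket{0^{\otimes n}}$, each copy, if measured in the computational basis, yields the string $y$ with probability $|a|\ge|a_0|>1/2$ and some other string otherwise (the orthogonal mass $\ket{y^\perp}$ may be spread over many strings, which only helps). Letting $N_y$ be the number of copies equal to $y$, the event that the mode equals $y$ contains the event $N_y>k/2$, since if $y$ occurs in a strict majority of copies no other value can, making $y$ the unique mode. Hoeffding's inequality then gives $\Pr[N_y\le k/2]\le e^{-2k(|a_0|-1/2)^2}$, so choosing $k=\lceil \ln(1/\Delta)/(2(|a_0|-1/2)^2)\rceil$ forces $P:=\Pr[\text{mode}=y]\ge 1-\Delta$, which matches the $k$ implicit in the stated query count.

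Second, I would make the coherent bookkeeping precise. Let $\Pi$ be the projector onto computational-basis strings of the $k$ copy registers whose mode is $y$, so that $\|\Pi\ket{\psi}^{\otimes k}\|_2^2=P\ge1-\Delta$. After $\mathcal{M}$ the state splits as $(\Pi\ket{\psi}^{\otimes k})\ket{y}+\mathcal{M}(I-\Pi)\ket{\psi}^{\otimes k}\ket{0}$, where the second piece carries an output value different from $y$ and is therefore orthogonal to everything holding $\ket{y}$ in the output register. Applying $(\mathcal{A}^\dagger)^{\otimes k}$ on the copy registers only, and using $(\mathcal{A}^\dagger)^{\otimes k}\ket{\psi}^{\otimes k}=\ket{0^{\otimes nk}}$, I would write $(\mathcal{A}^\dagger)^{\otimes k}\Pi\ket{\psi}^{\otimes k}=\ket{0^{\otimes nk}}-(\mathcal{A}^\dagger)^{\otimes k}(I-\Pi)\ket{\psi}^{\otimes k}$. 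Subtracting the target $\ket{0^{\otimes nk}}\ket{y}$ then leaves two mutually orthogonal residual terms, one with output $\ket{y}$ of norm $\|(I-\Pi)\ket{\psi}^{\otimes k}\|_2=\sqrt{1-P}$ and one with output different from $y$ of the same norm, the norms being preserved because $\mathcal{A}$ and $\mathcal{M}$ are unitary. By the Pythagorean theorem the total squared error is $2(1-P)\le2\Delta$, giving $\|\ket{\Psi}-\ket{0^{\otimes nk}}\ket{y}\|_2\le\sqrt{2\Delta}$ as required.

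I expect the only genuine subtlety to be the orthogonality bookkeeping in the final step: keeping careful track that the two residual errors live in the disjoint output sectors $\ket{y}$ and its complement, so that they add in quadrature. This is exactly what produces the factor $2$ under the square root rather than a looser additive bound, and it hinges on $\mathcal{A}^\dagger$ acting only on the copy registers so that it cannot reintroduce $\ket{y}$ into the ``bad'' branch. Everything else — the Hoeffding estimate and norm preservation under $\mathcal{A}^{\otimes k}$ and $\mathcal{M}$ — is routine. I would also remark that the median may replace the mode with the same majority-event containment, so the scheme is insensitive to which symmetric statistic $\mathcal{M}$ computes.
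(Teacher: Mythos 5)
Your proposal is correct and follows essentially the same route as the paper's proof: prepare $k$ copies with $\mathcal{A}^{\otimes k}$, coherently compute a majority statistic (the paper uses the median and notes the mode works equally well), invoke Hoeffding's inequality with the identical constant to fix $k$, uncompute with $(\mathcal{A}^\dagger)^{\otimes k}$ for the $2Qk$ query count, and bound the error by splitting the deviation into two orthogonal residuals of norm $\sqrt{1-P}$ each, which add in quadrature to give $\sqrt{2\Delta}$. Your projector-based bookkeeping with $\Pi$ is a cleaner rendering of the paper's decomposition $A\ket{\Psi}+\sqrt{1-|A|^2}\ket{\Phi}$, but it is the same argument.
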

\begin{proof}
The basic idea behind the algorithm is to prepare $k$ copies of the state $\sqrt{a}\ket{y}+\sqrt{1-a} \ket{y^{\perp}}$, and then coherently compute the median via a reversible circuit and uncompute the $k$ resource states used to find the median of the values of $y$.  First, let $\mathcal{M}$ be a circuit that performs
\begin{equation}
\mathcal{M}:\ket{y_1}\cdots \ket{y_k} \ket{0} \mapsto \ket{y_1} \cdots \ket{y_k} \ket{\bar{y}},
\end{equation}
where we use $\bar{y_k}$ to denote the median.  This transformation can be performed by implementing a sort algorithm using $O(kn\log(k))$ operations and hence is efficient.   

The initial state for this part of the protocol is of the form
$$(\sqrt a\ket{y}+\sqrt{1-|a|}\ket{y^\perp})^{\otimes k}.$$
We can therefore partition the $k$--fold tensor product as a sum of two disjoint sets: the sum of states with median $y$ and another sum of states with median not equal to $y$.  We denote these two sums as $\ket{\Psi}$ and $\ket{\Phi}$ respectively, which is equivalent to expressing
\begin{equation}
(\sqrt a\ket{y}+\sqrt{1-|a|}\ket{y^\perp})^{\otimes k}:=A\ket{\Psi} +\sqrt{1-|A|^2} \ket{\Phi},\label{eq:psidef}
\end{equation}
for some value of $A$.
A direct consequence of~\eq{psidef} is that there exists (a possibly entangled state) $\ket{\Phi; y^\perp}$ such that
\begin{equation}
\mathcal{M} (\sqrt a\ket{y}+\sqrt{1-|a|}\ket{y^\perp})^{\otimes k}=A\ket{\Psi}\ket{y} +\sqrt{1-|A|^2} \ket{\Phi;y^\perp},\label{eq:derandomalg}
\end{equation}
where $A\ket{\Psi} +\sqrt{1-|A|^2} \ket{\Phi}:=(\sqrt a\ket{y}+\sqrt{1-|a|}\ket{y^\perp})^{\otimes k}$ and $\ket{\Psi}$ represents the subspace where the median is $y$ and $\ket{\Phi}$ is its compliment
Our goal is now to show that $|A|^2>1-\Delta$ for $k$ sufficiently large.
%\comment{KMS: Above sentence is a lot in one.  Can you divide it more clearly or indent to make more readable?}

To see this, let us imagine measuring the first register in the computational basis.  
The probability of obtaining $p$ results, from the $k$ resulting bit strings, that are not $y$ is  given by the binomial theorem:
\begin{equation}
P(p) = \binom{k}{p}|a|^{p}|1-|a||^{k-p}.\label{eq:binomP}
\end{equation}
Now we can compute the probability that a measurement of the last register will not yield $y$ by observing the fact that in any sequence of measurements that contains more than $k/2$ $y$--outcomes, the median must be $k/2$.  Therefore the probability that the computed value of the median is not $y$ is at most the probability that the measured results contain no more than $k/2$ $y$ outcomes.  This is given by~\eq{binomP} to be
\begin{equation}
P(y^\perp) \le \sum_{p=0}^{\lfloor k/2\rfloor} \binom{k}{p} |a|^{p}|1-|a||^{k-p}.\label{eq:binomP2}
\end{equation}
Using Hoeffding's inequality on~\eq{binomP2} and $|a| \ge |a_0| >1/2$ we find that
\begin{equation}
P(y^{\perp})\le \exp\left(\frac{-2\left(k|a|-\frac{k}{2}\right)^2}{k} \right)= \exp\left({-2k\left(|a_0|-\frac{1}{2}\right)^2} \right).\label{eq:binomP3}
\end{equation}
Eq.~\eq{binomP3} therefore implies that $P(y^\perp)\le \Delta$ if
\begin{equation}
k\ge \frac{\ln\left(\frac{1}{\Delta} \right)}{2\left(|a_0|-\frac{1}{2}\right)^2}.\label{eq:kbound}
\end{equation}
Next, by applying $\mathcal{A}^{\dagger\otimes k}$ to the first register, we obtain
\begin{align}
\mathcal{A}^{\dagger \otimes k} \left(A\ket{\Psi}\ket{y} +\sqrt{1-|A|^2} \ket{\Phi;y^\perp}\right) &= \mathcal{A}^{\dagger \otimes k} \left(A\ket{\Psi}\ket{y} +\sqrt{1-|A|^2} \ket{\Phi}\ket{y}\right)+\mathcal{A}^{\dagger \otimes k}\left(\sqrt{1-|A|^2} \left(\ket{\Phi;y^\perp} -\ket{\Phi}\ket{y}\right)\right)\nonumber\\
&=\ket{0^{\otimes nk}}\ket{y}+\mathcal{A}^{\dagger \otimes k}\left(\sqrt{1-|A|^2} \left(\ket{\Phi;y^\perp} -\ket{\Phi}\ket{y}\right)\right).\label{eq:adaggererror}
\end{align}
Note that  $\braketb{y}{y^\perp}=0$ and hence $\ket{\Phi;y^\perp}$ is orthogonal to $\ket{\Phi}\ket{y^\perp}$.
If $|\cdot|$ is taken to be the $2$--norm then~\eq{adaggererror} gives that 
\begin{equation}
\left|\mathcal{A}^{\dagger \otimes k} \left(A\ket{\Psi}\ket{y} +\sqrt{1-|A|^2} \ket{\Phi;y^\perp}\right) -  \ket{0^{\otimes nk}}\ket{y}\right|\le \sqrt{2(1-|A|^2)}\le \sqrt{2\Delta},
\end{equation}
since $P(y^{\perp}):=1-|A|^2\le \Delta$ for $k$ chosen as per~\eq{kbound}.
The result then follows after noting that $k$ must be chosen to be an integer and that the total number of queries made to prepare the state is $2Qk$.
%This implies that $P(y)\ge 1-\Delta$, because  the probability of measuring to median to be $y$ or not $y$ is $1$.  This further implies that the value of $A$ in~\eq{derandomalg}, for $k$ chosen as per~\eq{kbound}, obeys
%\begin{equation}
%|A|\ge \sqrt{1-\Delta}\ge 1-\Delta.
%\end{equation}
\end{proof}
\lem{expamp} shows that coherent majority voting can be used to remove the measurements used in algorithms such as amplitude estimation at the price of introducing a small amount of error in the resultant state.  
We can use such a protocol in the D\"urr H\o yer algorithm to find the minimum value of all possible outputs of the algorithm, as shown in the following corollary.
\begin{corollary}
Assume that for any $j=1,\ldots M$, a unitary transformation $\ket{j}\ket{0^{\otimes n}}\mapsto \ket{j}\left(\sqrt{a}\ket{y_j}+\sqrt{1-|a|} \ket{y^{\perp}_j}\right)$ for $1/2 <|a_0|\le |a|\le 1$ can be performed using $Q$ queries then the expected number of queries made to find $\min_j y_j$ with failure probabilty at most $\delta_0$ is bounded above by
$$
90\sqrt{M}Q\left\lceil\frac{\ln\left(\frac{81M(\ln(M) +\gamma)}{\delta_0} \right)}{2\left(|a_0|-\frac{1}{2} \right)^2}\right\rceil.
$$
\label{cor:expamp}
\end{corollary}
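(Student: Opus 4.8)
The plan is to convert the bounded--error amplitude--estimation oracle into a reversible, approximately--correct oracle using the coherent majority--voting scheme of \lem{expamp}, and then to feed that oracle into the D\"urr H\o yer minimum--finding routine of \lem{durr}. Concretely, the hypothesised transformation $\ket{j}\ket{0^{\otimes n}}\mapsto \ket{j}(\sqrt a\ket{y_j}+\sqrt{1-|a|}\ket{y_j^\perp})$, controlled on the register $\ket j$, plays the role of $\mathcal{A}$ in \lem{expamp}. Applying that lemma for a suitable $\Delta$ produces, for each $j$, a state within $\sqrt{2\Delta}$ in the $2$--norm of $\ket{j}\ket{0^{\otimes nk}}\ket{y_j}$ at a cost of $2Qk$ queries, where $k=\lceil \ln(1/\Delta)/(2(|a_0|-\tfrac12)^2)\rceil$. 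By linearity and orthogonality of the $\ket j$, the same $\sqrt{2\Delta}$ bound holds when this approximate oracle acts on an arbitrary superposition over $j$, so it is a genuine $\sqrt{2\Delta}$--approximation to the ideal oracle $\ket j\ket 0\mapsto\ket j\ket{y_j}$ that D\"urr H\o yer requires.

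Second, I would assemble the query count. The marking step inside each Grover iteration computes $\bar y_j$, compares it to the running threshold, and uncomputes $\bar y_j$, so it invokes the approximate oracle twice, i.e. $2\cdot 2Qk=4Qk$ queries. \lem{durr} bounds the expected number of Grover iterations by $\tfrac{45}{2}\sqrt M$, and since none of the earlier steps (deterministic state preparation via \lem{stateprep}, amplitude estimation, and majority voting) involves a measurement, the D\"urr H\o yer routine runs unmodified. Multiplying gives an expected $\tfrac{45}{2}\sqrt M\cdot 4Qk=90\sqrt M\,Q\,k$ queries, which is precisely the claimed prefactor once $\Delta$ is fixed.

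Third, I would fix $\Delta$ through an error budget. Under the pessimistic accounting, the algorithm fails only if the approximate oracle ever reports an incorrect value, and each median computation is wrong with probability at most $\Delta$, so it suffices to union--bound over all opportunities for such an error. The search visits a decreasing sequence of thresholds whose length equals the number of left--to--right minima among the visited indices; in expectation this is the harmonic number $H_M=\ln M+\gamma+o(1)$, giving $O(\ln M+\gamma)$ effective rounds, and within each round the oracle must correctly classify all $M$ candidate indices. Thus the total number of failure opportunities is $O(M(\ln M+\gamma))$; tracking the constants (including the slack needed to control the random iteration count via Markov's inequality together with the D\"urr H\o yer success constant) yields the bound $81M(\ln M+\gamma)$. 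Setting $\Delta=\delta_0/(81M(\ln M+\gamma))$ then makes the union--bounded failure probability at most $\delta_0$ and produces the value of $k$, hence of the logarithm, appearing in the statement.

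The main obstacle is the error analysis of the third step rather than the bookkeeping of the first two. One must argue rigorously that a coherent, fixed approximate oracle applied across randomly many iterations degrades the final measurement statistics by no more than the union bound suggests, while keeping the dependence on $\delta_0$ \emph{linear} (inside the logarithm) rather than the quadratic dependence that a naive $2$--norm propagation of the per--application error $\sqrt{2\Delta}$ over $\Theta(\sqrt M)$ applications would give. The delicate points are (i) re--expressing the accumulated error as a sum over discrete ``wrong--median'' events so that a union bound applies, (ii) bounding the expected number of rounds by the harmonic number, and (iii) combining the expectation in the query count with the union bound over failure events despite the randomness in both the iteration and round counts, e.g. via Wald's identity and Markov's inequality. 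Everything else follows by substitution.
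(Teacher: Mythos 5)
Your first two steps coincide with the paper's own proof: \lem{expamp} is used to turn the hypothesised transformation into a reversible $\sqrt{2\Delta}$--approximate oracle at a cost of $2Qk$ queries per invocation with $k=\lceil \ln(1/\Delta)/(2(|a_0|-\tfrac12)^2)\rceil$, and this oracle is fed into the D\"urr--H\o yer routine of \lem{durr}, giving $\tfrac{45}{2}\sqrt{M}$ expected Grover iterations, two approximate--oracle invocations per iteration, and hence $90\sqrt{M}Qk$ expected queries. (The paper attributes the factor of two per iteration to the reflection $\mathcal{A}S_0\mathcal{A}^\dagger$ about the initial state, with the marking reflection costing nothing, rather than to compute/uncompute of the median; the count is identical.)

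The genuine gap is in your third step, and it is precisely the point you yourself flag as ``the main obstacle'': you never derive the quantity $81M(\ln M+\gamma)$, you reverse--engineer it (``tracking the constants \ldots yields''). Moreover, your proposed route --- $H_M$ threshold updates times $M$ classifications per round, patched with Wald's identity and Markov's inequality --- does not correspond to how the quantum algorithm spends its queries: within a round the approximate oracle is applied coherently only $O(\sqrt{M/(r-1)})$ times, not once per candidate, and there is no discrete event ``the oracle classified candidate $i$'' available to union--bound over. The paper's resolution is the missing idea: when the search is for an element of rank at least $r$, the number of approximate--oracle calls in that stage is at most $9\sqrt{M/(r-1)}$ (a bound taken from Brassard et al.\ and D\"urr--H\o yer); within the stage the erroneous amplitude adds subadditively, giving accumulated amplitude at most $9\sqrt{\Delta M/(r-1)}$, and because each stage ends in a measurement, the stage fails with probability at most the \emph{square} of that amplitude, $81\Delta M/(r-1)$. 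Summing over $r=2,\ldots,M$ gives $81\Delta M H_{M-1}\le 81\Delta M(\ln M+\gamma)$, which simultaneously produces the constant $81=9^2$, the harmonic number, and the linear (rather than quadratic) dependence on $\delta_0$ that concerned you; setting $\Delta=\delta_0/(81M(\ln M+\gamma))$ then finishes the proof. Your rounds--times--candidates count reproduces the same formula only coincidentally, and as stated it cannot be turned into a derivation of the claimed bound.
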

\begin{proof}

\lem{durr} states that at most $45\sqrt{M} /2$ applications of Grover's search are required, which requires $45\sqrt{M}$ queries to an (approximate) oracle that prepares each $y_j$ since two queries are required per Grover iteration (two are required to perform the reflection about the initial state and none are required to reflect about the space orthogonal to the marked state here).  \lem{expamp} therefore says that the cost of performing this portion of the algorithm is
\begin{equation}
N_{\rm queries}\le 90\sqrt{M}Q\left\lceil\frac{\ln(1/\Delta)}{2\left(|a_0|-\frac{1}{2} \right)^2}\right\rceil.\label{eq:nqueries}
\end{equation}

Next, we need to find a value of $\Delta$ that will make the failure probability for this approximate oracle at most $\delta_0$.    
%\comment{KMS: Reword above sentence.}
Now let us assume the worst case scenario that if the measurement of $y_j$ fails to output the desired value even once, then the entire algorithm fails.  We then upper bound the probability of failure by summing the probability of failure in each of the steps in the search.  Assuming that the algorithm is searching for an element of rank at least $r$ (in the sense of D\"urr H{\o}yer~\cite{DH96}) then the number of calls to the oracle yielding $y_j$ is at most~\cite{BHM+00}
$$
9\sqrt{\frac{M}{r-1}}.
$$
This means that the amplitude of the erroneous component of the state (using subadditivity of quantum errors) is at most
$$
9\sqrt{\frac{\Delta M}{r-1}}.
$$
The worst case scenario is that the algorithm must search through all $M$ entries (this is extremely unlikely if $M$ is large because the average complexity is $O(\sqrt{M})$).  This means that the probability of at least one failed observation occuring  is at most
\begin{equation}
\sum_{r=2}^M \frac{81\Delta M}{r-1}=81MH_{M-1}\le 81M(\ln(M)+\gamma).
\end{equation}
Here $H_{M-1}$ is the $M-1^{\rm th}$ harmonic number and $\gamma$ is Euler's constant.
Therefore if we want the total probability of error to be at most $\delta_0$ then it suffices to choose
\begin{equation}
\Delta=\frac{\delta_0}{81M(\ln(M)+\gamma)}.\label{eq:Deltabd}
\end{equation}
Then combining~\eq{nqueries} and~\eq{Deltabd} gives us that the average query complexity obeys
\begin{equation}
N_{\rm queries}\le 90\sqrt{M} Q\left\lceil\frac{\ln\left(\frac{81M(\ln(M) +\gamma)}{\delta_0} \right)}{2(|a_0| -1/2)^2} \right\rceil.\label{eq:nqueries2}
\end{equation}
\end{proof}
Note that we want to maximize the value of $\sin^2(\pi y_j/R)$ that is yielded by the amplitude estimation algorithm for each $j$.  This maximization is equivalent to minimizing $|R/2-y_j|$.  Given that $y_j$ is returned coherently by our de--randomized amplitude estimation circuit, a measurement--free circuit can be used that computes $|R/2-y_j|$ for any input $y_j$.  This requires no further oracle calls.  Hence~\cor{expamp} applies to our circumstances with no further modification.

The results of \thm{ip} and \thm{moment} follow directly from~\cor{expamp} by substitution of appropriate values of $Q$ and $|a_0|$.  The remaining work focuses on devising an appropriate state preparation algorithm that can be used in~\thm{AE}.

\begin{lemma}\label{lem:density}
Let $\vec{v}_j$ be $d$--sparse and assume that the quantum computer has access to $\mathcal{O}$ and $\mathcal{F}$ then a unitary  transformation exists that can be implemented efficiently using $3$ oracle calls and, for all $j$, maps
$$
\ket{j}\ket{0}\mapsto\frac{1}{\sqrt{d}}\ket{j} \sum_{i=1}^{d} \ket{f(j,i)}\left(\sqrt{1-\frac{r_{jf(j,i)}^2}{r_{j\max}^2}}e^{-i\phi_{jf(j,i)}}\ket{0}+\frac{r_{jf(j,i)}e^{i\phi_{jf(j,i)}}}{r_{j\max}}e^{i\phi_{jf(j,i)}}\ket{1} \right).
$$
\end{lemma}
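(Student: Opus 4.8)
The plan is to build the target state in three stages, spending exactly one oracle query at each stage and using only non--oracle primitives (superposition preparation, reversible arithmetic, and a controlled single--qubit rotation) in between. Concretely, I would start from $\ket{j}\ket{0}$, regarding the second register as a block of fresh ancillas to be partitioned into a location register, a transient value register, and the final flag qubit, and produce the stated amplitude encoding of $\vec v_j$ while returning the value register to $\ket{0}$ so that it can be discarded.

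First I would invoke \lem{stateprep} (with $M\to d$) to deterministically prepare a uniform superposition over the index labels $i\in\{1,\ldots,d\}$, giving $\frac{1}{\sqrt d}\ket{j}\sum_{i=1}^{d}\ket{i}$; this uses no oracle calls. I would then apply $\mathcal{F}$, which acts in place as $\ket{j}\ket{i}\mapsto\ket{j}\ket{f(j,i)}$, to obtain $\frac{1}{\sqrt d}\ket{j}\sum_{i=1}^{d}\ket{f(j,i)}$. Because $f(j,\cdot)$ is injective onto the support of $\vec v_j$, this is precisely a uniform superposition over the $d$ nonzero locations, and no leftover index register survives the in--place action. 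This is the first oracle query. Next I would append the value register in $\ket{0}$ and apply $\mathcal{O}$ (query two) to reach $\frac{1}{\sqrt d}\ket{j}\sum_{i}\ket{f(j,i)}\ket{v_{jf(j,i)}}$.

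Conditioned on the classical bit string $v_{jf(j,i)}=r_{jf(j,i)}e^{i\phi_{jf(j,i)}}$ now stored in the value register, and using the user--supplied bound $r_{\max}$, I would apply a controlled single--qubit rotation to the flag qubit that maps $\ket{0}\mapsto \sqrt{1-r_{jf(j,i)}^2/r_{\max}^2}\,e^{-i\phi_{jf(j,i)}}\ket{0}+\frac{v_{jf(j,i)}}{r_{\max}}\ket{1}$, exactly the amplitude pattern of~\eq{amppsi1}. Since the two amplitudes have squared moduli summing to one, such a single--qubit unitary always exists, and the required angle $\arcsin(r/r_{\max})$ and phase $\phi$ are computable from the stored bits by reversible arithmetic, so the controlled rotation is efficient and costs no oracle query. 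Finally I would apply $\mathcal{O}^\dagger$ (query three) to uncompute the value register back to $\ket{0}$: the flag qubit already carries the value--dependent amplitude, so this uncomputation is clean and leaves the value ancilla in a product $\ket{0}$ that can be dropped. Counting $\mathcal{F}$, $\mathcal{O}$, and $\mathcal{O}^\dagger$ gives the claimed three oracle calls.

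The one genuinely nontrivial point is the controlled rotation: it must reproduce not only the magnitudes $\sqrt{1-r^2/r_{\max}^2}$ and $r/r_{\max}$ but also the branch phases $e^{-i\phi}$ and $e^{i\phi}$, which I would handle by extracting the magnitude and argument of $v_{jf(j,i)}$ into angle and phase sub--registers and applying the corresponding $R_y$ and $R_z$ rotations controlled on them. I expect this phase--faithful rotation, together with the observation that $\mathcal{O}^\dagger$ acts only on the value register and hence does not disturb the already--rotated flag qubit, to be the crux; the remainder is bookkeeping of the oracle count and of the efficiency of the intervening arithmetic. (I note in passing that the displayed $\ket{1}$ amplitude carries a redundant extra factor $e^{i\phi_{jf(j,i)}}$; the intended value is $v_{jf(j,i)}/r_{\max}$ as in~\eq{amppsi1}, which is what the rotation produces.)
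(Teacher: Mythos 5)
Your proposal is correct and takes essentially the same route as the paper's own proof: prepare the uniform superposition over indices via \lem{stateprep}, apply $\mathcal{F}$ in place, query $\mathcal{O}$ to load $v_{jf(j,i)}$, perform value-controlled $R_y$ and $R_z$ rotations on the flag qubit, and uncompute the value register with $\mathcal{O}^\dagger$, for three oracle calls in total. Your parenthetical remark is also accurate — the displayed lemma carries a redundant extra factor of $e^{i\phi_{jf(j,i)}}$, and the intended $\ket{1}$ amplitude is $v_{jf(j,i)}/r_{j\max}$ as in~\eq{amppsi1}, which is exactly what the paper's $R_y$-then-$R_z$ construction produces.
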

\begin{proof}
We begin by preparing the state
\begin{equation}
\frac{1}{\sqrt{{d}}} \sum_{i=1}^{d} \ket{j}\ket{i}\ket{0}\ket{0},
\end{equation}
which can be prepared reversibly and efficiently by applying \lem{stateprep}.  

The next step is to apply the oracle $\mathcal{F}$ to the result, this performs
\begin{equation}
\frac{1}{\sqrt{{d}}} \sum_{i=1}^{d} \ket{j}\ket{i}\ket{0}\ket{0}\mapsto \frac{1}{\sqrt{{d}}} \sum_{i=1}^{d} \ket{j}\ket{f(j,i)}\ket{0}\ket{0}.
\end{equation}
Then querying $O$ implements
\begin{equation}
\frac{1}{\sqrt{{d}}} \sum_{i=1}^{d} \ket{j}\ket{f(j,i)}\ket{0}\ket{0} \mapsto \frac{1}{\sqrt{{d}}} \sum_{i=1}^{d}\label{eq:b20} \ket{j}\ket{f(j,i)}\ket{v_{jf(j,i)}}\ket{0}.
\end{equation}

By applying $R_y(2\sin^{-1}(r_{jf(ji)}/r_{j\max}))$ on the final qubit in~\eq{b20}, we obtain

\begin{equation}
\frac{1}{\sqrt{{d}}} \sum_{i=1}^{d} \ket{j}\ket{f(j,i)}\ket{v_{jf(j,i)}}\ket{0}\mapsto\frac{1}{\sqrt{d}}\ket{j} \sum_{i=1}^{d} \ket{f(j,i)}\ket{v_{jf(j,i)}}\left(\sqrt{1-\frac{r_{jf(j,i)}^2}{r_{j\max}^2}}\ket{0}+\frac{r_{jf(j,i)}}{r_{j\max}}\ket{1} \right).\label{eq:b21}\end{equation}
The result then follows by applying $R_z(2\phi_{jf(j,i)})$ to the last qubit in~\eq{b21} and using  $O^{\dagger}$ to clean the ancilla register containing $\ket{v_{jf(j,i)}}$.  Three queries are used in this process.
\end{proof}

%\section{Proof of Theorem 1}

Next we use the swap test to provide a method to compute the inner product between two vectors.    The test is implemented by the circuit in~\fig{swap} for arbitrary states $\ket{\phi}$ and $\ket{\psi}$.
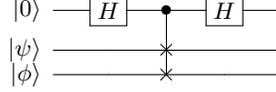
\begin{figure}
    \[
      \Qcircuit @R 1em @C 1.5em {
                        \lstick{\ket{0}} 	&	\gate{H}	&\ctrl{2}	&\gate{H}&\qw 	\\
		  \lstick{\ket{\psi}}	&\qw&\qswap		&	\qw&\qw\\
		  \lstick{\ket{\phi}}	&\qw&\qswap		&	\qw&\qw
}				
    \]
\caption{The swap test~\cite{BCW+01}.  The probability of measuring the top qubit to be zero is $1/2 + |\!\braketb{\phi}{\psi}\!|^2/2$, which allows statistical testing to be used to efficiently discriminate the states.\label{fig:swap}}
\end{figure}
The resultant state before measurement is
$$
\frac{1}{2} \ket{0}(\ket{\phi}\ket{\psi} + \ket{\psi}\ket{\phi})+\frac{1}{2} \ket{1}(\ket{\phi}\ket{\psi} - \ket{\psi}\ket{\phi}),
$$
and the probability of measuring the first qubit to be $1$ is $1/2 - |\braketb{\phi}{\psi}|^2/2$.  We do not ignore this measurement since we want to use the swap test within the Grover iterations used in~\thm{AE}.

\begin{lemma}
For any fixed $\epsilon>0$ and any pair of $d$--sparse unit vectors $\vec{u}\in \mathbb{C}^n$ and $\vec{v}_j\in \mathbb{C}^n$ a state of the form $\sqrt{|A|}\ket{\Psi}\ket{y}+ \sqrt{1-|A|}\ket{\Phi;y_\perp}$ can be efficiently prepared where $y$ encodes $|\braketb{u}{v_j}|^2$ within error $\epsilon$ and $|A| \ge 8/\pi^2$ using a number of queries that is bounded above by
$$Q\le 12\left\lceil\frac{4\pi(\pi+1)d^2r_{0\max}^2 r_{j \max}^2}{\epsilon}\right\rceil,$$
where $|\braketb{i}{v_j}|\le r_{j\max}$ for any $i\ge 0$.\label{lem:ipamp}
\end{lemma}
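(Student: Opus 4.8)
The plan is to assemble the three ingredients already in hand — the amplitude-encoding state preparation of \lem{density}, the measurement-free swap test of \fig{swap}, and the amplitude estimation guarantee of \thm{AE} — into a single reversible circuit $\mathcal{A}$ whose ``good'' amplitude encodes $|\braketb{u}{v_j}|^2$. First I would prepare, for the fixed pair $(\vec u,\vec v_j)$, the two states displayed in~\eq{amppsi1}. Each follows from a minor rearrangement of the output of \lem{density} together with the deterministic index preparation of \lem{stateprep}, so that one state carries the amplitude-encoded qubit in its first slot with an ancilla fixed to $\ket 1$, while the other carries it in its second slot; each preparation costs three oracle calls, for six calls in total.

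Next I would feed these states into the swap test, where the key step is the overlap calculation. Because the fixed $\ket 1$ ancillas project onto the $\ket 1$ branch of the amplitude-encoded qubits, the inner product collapses to $\braketb{\phi}{\psi}=\frac{1}{d\,r_{0\max}r_{j\max}}\sum_i \bar v_{0i} v_{ji}=\frac{\braketb{u}{v_j}}{d\,r_{0\max}r_{j\max}}$, and the phases $e^{-i\phi}$ attached to the $\ket 0$ branches never contribute. Hence the probability that the swap-test qubit reads $0$ is $P(0)=\tfrac12+\tfrac12|\braketb{\phi}{\psi}|^2$, so that $|\braketb{u}{v_j}|^2=(2P(0)-1)\,d^2 r_{0\max}^2 r_{j\max}^2$. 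Calling the whole preparation-plus-swap-test circuit $\mathcal{A}$, its success amplitude squared is exactly $P(0)$.

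I would then apply amplitude estimation (\thm{AE}) with $k=1$ to $\mathcal A$. Since $\mathcal A$ is unitary and its inverse is available, each of the $L$ Grover iterations invokes $\mathcal A$ and $\mathcal A^\dagger$ once, costing $12$ oracle calls per iteration and $12L$ overall; the estimate $\tilde P(0)$ is returned in a register $\ket y$ with probability at least $8/\pi^2$, which I identify with $|A|$ and with the good subspace $\ket\Psi$. A reversible arithmetic circuit using no further oracle calls then maps $\tilde P(0)\mapsto(2\tilde P(0)-1)d^2 r_{0\max}^2 r_{j\max}^2$, producing the claimed form $\sqrt{|A|}\ket\Psi\ket y+\sqrt{1-|A|}\ket{\Phi;y_\perp}$ with $y$ encoding $|\braketb{u}{v_j}|^2$. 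To fix $L$ I would use the \thm{AE} bound $\pi/L+\pi^2/L^2\le\pi(\pi+1)/L$ on the error in $P(0)$; since the distance estimate scales this error by $2d^2 r_{0\max}^2 r_{j\max}^2$, demanding error at most $\epsilon/2$ (reserving the remaining $\epsilon/2$ for the coherent-majority-voting stage that \thm{ip} applies downstream) forces $L\ge 4\pi(\pi+1)d^2 r_{0\max}^2 r_{j\max}^2/\epsilon$, exactly as in~\eq{rbd}, and multiplying by $12$ yields the stated bound on $Q$.

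I expect the main obstacle to be bookkeeping rather than anything conceptual: matching the output register layout of \lem{density} to the precise form of~\eq{amppsi1} so that the overlap really equals $\braketb{u}{v_j}/(d\,r_{0\max}r_{j\max})$, and propagating the two \emph{separate} magnitude bounds $r_{0\max}$ and $r_{j\max}$ through the error analysis, since the main text treats only the symmetric case $r_{\max}^4$. I should also check that the $\epsilon/2$ split between the amplitude-estimation error here and the subsequent majority-voting error is consistent with the way \thm{ip} consumes this lemma.
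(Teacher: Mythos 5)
Your proposal follows essentially the same route as the paper's proof: the same pair of states from~\eq{amppsi1} prepared via \lem{density} at six queries, the same ancilla-placement trick so that only the $\ket{1}\ket{1}$ branch contributes to $\braketb{\phi}{\psi}$, the same identity $|\braketb{u}{v_j}|^2=(2P(0)-1)d^2r_{0\max}^2r_{j\max}^2$, and the same coherent amplitude estimation with a $12$-query Grover iteration leading to~\eq{rbd}. The one point to correct is your accounting of the second $\epsilon/2$: it is \emph{not} reserved for the downstream coherent majority voting, since that stage (\lem{expamp}) degrades only the success amplitude — it is charged to $\delta_0$ through $\Delta$ — and never perturbs the value stored in $\ket{y}$. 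In the paper the other half of the error budget covers the imprecision of the rotations $R_y(2\sin^{-1}(r_{ji}/r_{j\max}))$ used in the state preparation, an error source your proposal leaves unbounded; the paper closes this by arguing at the end of its proof that taking $n'\in O(\log(1/\epsilon))$ bits and a polynomial-cost evaluation of $\sin^{-1}(\cdot)$ keeps this contribution below $\epsilon/2$.
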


\begin{proof}
The state $\ket{\phi}\otimes \ket{\psi}$ can be prepared using \lem{density} and $n$ CNOT gates where
\begin{align}
\ket{\psi}&:=\frac{1}{\sqrt{d}}\ket{j} \sum_{i=1}^d \ket{f(j,i)} \ket{0^{\otimes n'}} \left(\sqrt{1-\frac{r_{jf(j,i)}^2}{r_{j \max}^2}}e^{-i \phi_{jf(j,i)}}\ket{0}  + \frac{v_{jf(j,i)}}{r_{j \max}}\ket{1}\right)\ket{1}.\label{eq:amppsi}\\
\ket{\phi}&:=\frac{1}{\sqrt{d}}\ket{j} \sum_{i=1}^d \ket{f(0,i)} \ket{0^{\otimes n'}}\ket{1} \left(\sqrt{1-\frac{r_{0f(0,i)}^2}{r_{0 \max}^2}}e^{-i \phi_{0f(0,i)}}\ket{0}  + \frac{v_{0f(0,i)}}{r_{0 \max}}\ket{1}\right).\label{eq:ampphi}
\end{align}
It is important to note that the qubit that encodes the component $v_{jf(j,i)}$ and $v_{0f(0,i)}$ differs between $\ket{\psi}$ and $\ket{\phi}$.  This is done in order to ensure that the inner product between the two vectors corresponds to the true inner product between the vectors.  Specifically,
\begin{equation}
\braketb{\phi}{\psi}= \frac{1}{d} \sum_i \frac{v_{ji}v_{0i}^*}{r_{j \max} r_{0 \max}}= \frac{\braketb{v_0}{v_j}}{d r_{j \max} r_{0 \max}}.
\end{equation}
Note that we do not directly apply the swap test between the two states here because an undesirable contribution to the inner product of the form $\sqrt{1-\frac{r_{jf(j,i)}^2}{r_{j \max}^2}}e^{-i \phi_{jf(j,i)}}\sqrt{1-\frac{r_{0f(0,i)}^2}{r_{0 \max}^2}}e^{i \phi_{0f(0,i)}}$ would arise.  We remove the possibility of such terms appearing by adding an ancilla qubit in~\eq{amppsi} and~\eq{ampphi} that selects out only the component that gives us information about the inner product of $\vec{v}_j$ and $\vec{u}$.

The probability of measuring $0$ in the swap test is $|A|= \frac{1}{2}(1+ |\braketb{\phi}{\psi}|^2)$, which implies that
\begin{equation}
(2|A|-1)d^2r_{j \max}^2r_{0 \max}^2=|\braketb{u}{v_j}|^2.\label{eq:ipsquared}
\end{equation}
At this point we could learn $A$ by sampling from the distribution given by the swap test, but it is more efficient to use amplitude estimation.  Amplitude estimation, in effect, uses a controlled Grover search oracle.  In this case the Grover oracle requires that we implement a reflection about the initial state and also reflect about the space orthogonal to the target state.  We refer to this reflection as $S_\chi$.  Unlike Grover's problem, the reflection about the target state is trivial since the target state is obtained when the swap test yields $0$.  This means that no oracle calls are needed to implement the reflection about the final state.  The reflection about the initial state is of the form $\mathcal{A} S_0 \mathcal{A}^\dagger$, where $\mathcal{A}$ is an algorithm that maps $\ket{0^{2n' +2n+4}}\rightarrow \ket{\phi}\ket{\psi}$, and $S_0$ is of the form
\begin{equation}
S_0\ket{x} = \begin{cases}\ket{x} &, x\ne 0 \\ -\ket{x} & ,x=0 \end{cases}.
\end{equation}
This can be implemented using a multi--controlled $Z$ gate, and hence is efficient.  The prior steps show that $\mathcal{A}$ can be implemented using $6$ oracle calls: \lem{density} implies that three queries are needed for the preparation of $\ket{\psi}$ and three more are needed for the preparation of $\ket{\phi}$.  This implies that a step of Grover's algorithm, given by $\mathcal{A}S_0\mathcal{A}^\dagger S_\chi$, can be implemented using $12$ oracle queries.

Amplitude estimation requires applying an iteration of Grover's algorithm in a controlled fashion at most $R$ times (where $R$ is the Hilbert--space dimension of the register that contains the output of the amplitude estimation algorithm).  The controlled version of $\mathcal{A}S_0\mathcal{A}^\dagger S_\chi$ requires no additional oracle calls because $\mathcal{A}S_0\mathcal{A}^\dagger$ can be made controllable by introducing a controlled version of $S_0$ and using a controlled $R_y$ rotation in the implementation of $\mathcal{A}$ (given by~\lem{density}); furthermore, both $S_0$ and  $S_\chi$ do not require oracle calls and hence can be made controllable at no additional cost. 

The error in the resultant estimate of $P(0)$ after applying amplitude estimation is with probability at least $8/\pi^2$ at most~\cite{BHM+00}
\begin{equation}
|A -\tilde A| \le \frac{\pi}{R}+ \frac{\pi^2}{R^2}.\label{eq:B6}
\end{equation}
This means that $|P(0) -\tilde P(0)|\le \delta$ if
\begin{equation}
R\ge \frac{\pi(\pi+1)}{\delta}.
\end{equation}

Now given that we want the total error, with probability at least $8/\pi^2$, to be $\epsilon/2$ (the factor of $1/2$ is due to the fact that the calculation of $\sin^{-1}(r_{ji}/r_{j \max})$ is inexact) then~\eq{ipsquared} gives us that choosing $\delta$ to be
\begin{equation}
\delta = \frac{\epsilon}{4d^2r_{0 \max}^2r_{j \max}^2},\label{eq:deltabd}
\end{equation}
is sufficient.  Eq.~\eq{deltabd} then gives us that it suffices to take a number of steps of Grover's algorithm that obeys
\begin{equation}
R\ge \left\lceil\frac{4\pi(\pi+1)d^2r_{0 \max}^2r_{j \max}^2}{\epsilon}\right\rceil.\label{eq:rbd}
\end{equation}
Since each Grover iteration requires $12$ applications of the oracles $\mathcal{F}$ or $\mathcal{O}$ the query complexity of the algorithm is
\begin{equation}
Q=12R\le12\left\lceil\frac{4\pi(\pi+1)d^2r_{0 \max}^2r_{j \max}^2}{\epsilon}\right\rceil,\label{eq:B10}
\end{equation}
as claimed.

Now lastly, we need to show that the error in the resultant probabilities from inexactly evaluating $\sin^{-1}(r_{ji}/r_{j \max})$ can be made less than $\epsilon/2$ at polynomial cost.  As shown previously, if $n'\in O(\log(1/\epsilon))$ and if $\ket{\tilde \phi}$ and $\ket{\tilde \psi}$ are the approximate versions of the two inner products then $$\left|\left|\braketb{\phi\Big.}{\psi}\right|^2 - \left|\braketb{\tilde \phi}{\tilde{\psi}}\right|^2\right|\in O(\epsilon).$$ This means that since $\sin^{-1}( \cdot )$ is efficient (it can be computed easily using a Taylor series expansion or approximation by Chebyshev polynomials), the cost of making the numerical error sufficiently small is at most polynomial.
\end{proof}

The proof of \thm{ip} now trivially follows:

\begin{proofof}{Theorem 1}
Proof follows as an immediate consequence of \cor{expamp} and~\lem{ipamp} using $|a_0| \ge 8/\pi^2$ and $r_{j,\max}\le r_{\max}$.
\end{proofof}

\subsection{Proof of Theorem 2}
The structure of the proof of \thm{moment} is similar to that of \thm{ip}.  The biggest difference is that the swap test is not used to compute the Euclidean distance in this case.  We use the method of Childs and Wiebe~\cite{CW12} to perform this state preparation task.  To see how their method works, let us assume that we have access to an oracle $U$ such that for any $j$
\begin{equation}
U\ket{j}\ket{0}= \ket{j}\ket{v_j}.
\end{equation}
Then for any unitary $V\in \mathbb{C}^{\log_2 M \times \log_2 M}$, the following circuit

    \[
      \Qcircuit @R 1em @C 1.5em {
                        \lstick{\ket{0^{\otimes \log_2 M}}} 	&	\gate{V}	&\multigate{1}{U}	&\gate{V^\dagger} 	&\meter\\
		  \lstick{\ket{0^{\otimes n}}}			&	\qw		&\ghost{U}		&\qw				&\qw
},				
    \]
has the property that the probability of the measurement yielding $0$ is $\|\sum_{j=0}^M |V_{j,0}|^2\vec{v_j} \|$.  Thus the Euclidean distance can be computed by this approach by setting $\ket{v_0} = -\ket{u}$ and choosing the unitary $V$ appropriately.
We employ a variant of this in the following lemma, which explicitly shows how to construct the required states.
\begin{lemma}\label{lem:centroid}
For any fixed $\epsilon>0$, the quantities a state of the form $\sqrt{|A|} \ket{\Psi}\ket{y}\ket{y'} +\sqrt{1-|A|} \ket{\Phi_{\rm bad}}$ can be efficiently and reversibly prepared such that $y$ encodes $\|-\ket{u} +\frac{1}{M} \sum_j \ket{v_j}\|_2^2$ within error $\epsilon$ and $y'$ encodes $\frac{1}{M}\sum_p \|-\ket{v_p} +\frac{1}{M} \sum_j \ket{v_j}\|_2^2$ within error $\epsilon$ such that $|A|\ge 64/\pi^4\approx 2/3$ using a number of queries bounded above by
$$
Q\le 10\left\lceil\frac{8\pi(\pi+1)d r_{\max}^2}{\epsilon}\right\rceil,
$$
where $r_{\max}\ge \max_j r_{j \max}$.  %The number of one and two qubit operations needed to achieve this is $O({\rm poly}(\log(NM)/\epsilon))$.
\end{lemma}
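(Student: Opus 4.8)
The plan is to mirror the proof of \lem{ipamp}, but to replace the swap test with the linear-combinations-of-unitaries (LCU) circuit of Childs and Wiebe~\cite{CW12} and to run two amplitude-estimation subroutines that produce the registers $\ket y$ and $\ket{y'}$. First I would use \lem{density} to realize, with a constant number of oracle calls, the encoding $\ket j\ket 0 \mapsto \ket j\ket{v_j}$, and \lem{V} to build the unitary $V$ with $|V_{00}|^2 = 1/2$ and $|V_{j0}|^2 = 1/(2M)$ for $j \ge 1$ using no oracle calls. Setting $\vec v_0 = -\vec u$ and inserting these into the LCU circuit, \eq{distance} gives a success probability $P(0)$ with $\|\vec u - \frac1M\sum_{j\ge1}\vec v_j\|_2^2 = 4 d r_{\max}^2 P(0)$, exactly as quoted in the main text.

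Next I would apply amplitude estimation (\thm{AE}, $k=1$) to $P(0)$ and write the outcome into $\ket y$. Demanding that the distance be resolved within $\epsilon$, I propagate the AE error bound $|P(0)-\tilde P(0)| \le \pi/R + \pi^2/R^2$ through the factor $4 d r_{\max}^2$, reserving $\epsilon/2$ for the inexact evaluation of $\sin^{-1}$ exactly as in \lem{ipamp}; this forces $R \ge \lceil 8\pi(\pi+1) d r_{\max}^2/\epsilon\rceil$. Since $V$ is free and each invocation of $\mathcal A$ and $\mathcal A^\dagger$ built from \lem{density} costs a constant number of oracle calls, one Grover iteration costs a constant number of queries, and counting gives the claimed bound $Q \le 10R$. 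Because the entire construction (state preparation, $V$, and AE, which is just phase estimation) is unitary, the output is prepared reversibly, as required for the later coherent majority vote of \lem{expamp}.

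The second, and conceptually harder, ingredient is computing $\sigma_m = \frac1M\sum_p \|-\vec v_p + \frac1M\sum_j \vec v_j\|_2^2$ coherently within the same error budget. I would run an analogous LCU-plus-AE subroutine in which the distinguished slot, weighted by $|V_{00}|^2=1/2$, holds $-\vec v_p$ controlled on an index register prepared in the uniform superposition $\frac{1}{\sqrt M}\sum_p \ket p$, while the remaining slots hold the centroid $\frac1M\sum_j\vec v_j$; conditioned on $p$ the branch success probability is proportional to $\|\vec v_p - \frac1M\sum_j\vec v_j\|_2^2$, so averaging over $p$ makes the overall success probability proportional to $\sigma_m$ with the same normalization $4 d r_{\max}^2$. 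Amplitude estimation then writes the result into $\ket{y'}$. (Alternatively one may invoke the parallel-axis identity $\sigma_m = 1 - \|\frac1M\sum_j \vec v_j\|_2^2$, valid for unit vectors, and estimate the single norm $\|\frac1M\sum_j\vec v_j\|_2^2$, but the superposition-over-$p$ route keeps the two circuits uniform.) Finally, because each of the two amplitude-estimation procedures succeeds (per \thm{AE}) with probability at least $8/\pi^2$ and both must succeed to produce the correct pair $(y,y')$, the amplitude on the good subspace obeys $|A| \ge (8/\pi^2)^2 = 64/\pi^4 \approx 2/3$, matching the claim.

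I expect the main obstacle to be the variance computation: one must verify that averaging the branch-wise success probabilities over the superposition of $p$ really returns $\sigma_m$ with the correct normalization, that introducing the controlled $-\vec v_p$ slot leaves the circuit reversible and does not inflate the per-iteration query count beyond the stated constant, and that the $\sin^{-1}$-induced numerical error is controlled simultaneously in both $y$ and $y'$. Bounding the denominator error is the delicate part, since in \thm{moment} the final ratio $\|\cdot\|_2^2/\sigma_m$ propagates these errors nonlinearly.
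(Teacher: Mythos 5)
Your proposal follows essentially the same route as the paper's proof: the LCU circuit of~\cite{CW12} with $V$ from \lem{V} for the distance, amplitude estimation with the error budget split as in \lem{ipamp} giving $R\ge\lceil 8\pi(\pi+1)dr_{\max}^2/\epsilon\rceil$, and the product of two AE success probabilities yielding $|A|\ge(8/\pi^2)^2$. Your superposition-over-$p$ construction for the variance is precisely what the paper implements via its controlled oracle $W$ (which swaps in $\ket{v_{p,i}}$ on the $j=0$ branch), and the per-iteration query counts you defer to ``counting'' are exactly the paper's $6R$ (variance, three queries per state preparation) plus $4R$ (distance, two queries per state preparation), giving the stated $10R$.
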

\begin{proof}
First let us define an oracle $W$ such that
\begin{equation}
W\ket{j}\ket{p}\ket{i}\ket{0}\ket{0} = \begin{cases} \ket{0}\ket{p}\ket{i}\ket{v_{p,i}}\ket{v_{0,i}}, &j=0\\\ket{j}\ket{p}\ket{i}\ket{v_{ji}}\ket{v_{pi}}, &{\rm otherwise} \end{cases}\label{eq:W}
\end{equation}
%A query to $W$ can be implemented via
%    \[
%      \Qcircuit @R 1em @C 1.5em {
%                        \lstick{\ket{j}} 	&\multigate{1}{F}	&	\multigate{2}{O} 	&\qswap 	&\qw		&\multigate{2}{O}	&\qswap 	&\qw			&\ctrlo{2}&\qw\\
%		\lstick{\ket{i}}	&\ghost{F}		&	\ghost{O}		&\qw\qwx	&\qw		&\ghost{O}		&\qw\qwx	&\qw			&\qw&\qw\\
%		\lstick{\ket{0}}	&\qw			&	\ghost{O}		&\qw\qwx	&\qswap	&\ghost{O}		&\qw\qwx	&\qswap		&\qswap&\qw\\
%		\lstick{\ket{p}}	&\qw			&	\qw			&\qswap\qwx&\qw\qwx	&\qw					&\qswap\qwx&\qw\qwx		&\qw\qwx			&\qw\\
%		\lstick{\ket{0}}	&\qw			&	\qw			&\qw		&\qswap\qwx&\qw					&\qw		&\qswap\qwx&\qswap\qwx&\qw
%}				
%    \]
A query to $W$ can be implemented via
    \[
      \Qcircuit @R 1em @C 1.5em {
                        \lstick{\ket{j}} 	&	\multigate{2}{O} 	&\qswap 	&\qw		&\multigate{2}{O}	&\qswap 	&\qw			&\ctrlo{2}&\qw\\
		\lstick{\ket{i}}	&	\ghost{O}		&\qw\qwx	&\qw		&\ghost{O}		&\qw\qwx	&\qw			&\qw&\qw\\
		\lstick{\ket{0}}	&	\ghost{O}		&\qw\qwx	&\qswap	&\ghost{O}		&\qw\qwx	&\qswap		&\qswap&\qw\\
		\lstick{\ket{p}}	&	\qw			&\qswap\qwx&\qw\qwx	&\qw					&\qswap\qwx&\qw\qwx		&\qw\qwx&\qw\\
		\lstick{\ket{0}}	&	\qw			&\qw		&\qswap\qwx&\qw					&\qw		&\qswap\qwx&\qswap\qwx&\qw
}				
    \]

We see that the following transformation is efficient and can be performed using one query to $\mathcal{F}$ by applying $V$ to the first register, and applying~\lem{stateprep} to the second and third registers 
\begin{equation}
\ket{0^{\otimes n}}\ket{0^{\otimes m}}\ket{0^{\otimes m}}\ket{0^{\otimes n'}} \ket{0} \mapsto \frac{1}{\sqrt{Md}}\sum_{j=1}^{M}\sum_{p=1}^M \sum_{i=1}^{d} V_{j0} \ket{j}\ket{f(j,i,p)}\ket{p}\ket{0^{\otimes n'}} \ket{0}.
\end{equation}
Here we take $f(j,i,p) = f(j,i)$ if $j\ge 1$ and $f(j,i,p)= f(p,i)$ if $j=0$; furthermore, we use the convention that $\ket{\vec{v}_0}=-\vec{u}$ and $\ket{\vec{v}_0^{(p)}} = - \vec{v}_p$.  We also take $m=\lceil \log_2 M \rceil$ and $n'$ to be the number of bits needed to store the components of each $\vec{v}_j$.

The following state can be implemented efficiently within error at most $\epsilon/2$ using $3$ oracle calls by applying~\lem{density} with the modification that $W$ is used in the place of $O$ and then applying $V^\dagger$:
\begin{equation}
\frac{1}{\sqrt{Md}}\sum_{q,j=0}^M\sum_{i=1}^d \sum_{p=1}^M V_{qj}^\dagger V_{j0}\ket{q}\ket{f(j,i,p)}\ket{p}\left(\sqrt{1-\left(\frac{r_{jf(j,i,p)}^{(p)}}{r_{\max}}\right)^2}e^{-i\phi^{(p)}_{jf(j,i,p)}}\ket{0} + \frac{r^{(p)}_{jf(j,i,p)}}{r_{\max}}e^{i\phi^{(p)}_{jf(j,i,p)}} \ket{1} \right)\ket{\Theta(j,i,p)},\label{eq:state}
\end{equation}
where $\ket{\Theta(j,i,p)}$ is a computational basis state that stores the ancilla qubits prepared by $W$.
The qubit register containing $\ket{\Theta(j,i,p)}$ does not need to be cleaned since these qubits do not affect the trace.

We then use the definition of $V$ in~\lem{V} the probability of measuring the first register in the state $\ket{0^{\otimes n}}$ and the second--last register in the state $\ket{1}$ is:
\begin{align}
{\rm Tr}&\left(\frac{1}{d M r_{\max}^2}\sum_{i,i'} \sum_{j,j'} \sum_{p,p'} V_{j0}V^\dagger_{0j} V_{j'0}^* V^{\dagger *}_{0j'} v_{jp,i} v_{j'p',i}^*\ketbra{i}{i'} \otimes\ketbra{p}{p'} \right)\nonumber\\
&=\left(\frac{1}{d M r_{\max}^2}\sum_{i} \sum_{j,j'} \sum_{p} V_{j0}V^\dagger_{0j} V_{j'0}^* V^{\dagger *}_{0j'} \braketb{i}{v_j^{(p)}}\braketb{v_{j'}^{(p)} }{i} \right)\nonumber\\
&=\left(\frac{1}{Md r_{\max}^2}  \sum_{p} \left| \frac{-1}{2}\ket{v_p} +\frac{1}{2M} \sum_{j\ge 1} \ket{v_j}\right|^2 \right)
\end{align}
We drop the state $\ket{\Theta(j,i,p)}$ above because it does not affect the trace.
Thus the mean square distance between each $\vec{v_j}$ and the centroid is 
\begin{equation}
\frac{1}{M}\sum_p\left| -\ket{v_p} +\frac{1}{M} \sum_{j\ge 1} \ket{v_j}\right|^2=4d r_{\max}^2 P(0).
\end{equation}
Three queries are needed to draw a sample from this distribution.

The distance can be computed similarly, except $O$ can be queried directly instead of $W$ and the ``p''--register can be eliminated since we do not need to average over different distances.  This saves one additional query, and hence it is straightforward to verify that the relationship between the distance squared and the probability of success is
\begin{equation}
\left| -\ket{v_p} +\frac{1}{M} \sum_{j\ge 1} \ket{v_j}\right|^2=4d r_{\max}^2 P(0).
\end{equation}
Two queries are needed to draw a sample from this distribution.

Similar to the proof of~\lem{ipamp}, we use amplitude estimation to estimate $P(0)$ in both cases.  By following the same arguments used in~\eq{B6} to~\eq{B10} coherent AE can be used to prepare a state with probability of the form $\sqrt{|A|}\ket{\Psi_{\rm good}}\ket{y'}+ \sqrt{1-|A|}\ket{\Psi_{\rm bad};y'_\perp}$ where $|A|\le 8/\pi^2$ using a number of queries bounded above by
\begin{equation}
6\left\lceil\frac{8\pi(\pi+1) d r_{\rm max}^2}{\epsilon}\right\rceil.\label{eq:cost}
\end{equation}
Similarly, the cost of preparing a state of the form  $\sqrt{|A|}\ket{\Psi}\ket{y}+ \sqrt{1-|A|}\ket{\Phi;y_\perp}$ where $|A|\le 8/\pi^2$ and $\ket{y}$ encodes $|\vec{u}-{\rm mean}(\{\vec{v}_j\})|$ is
\begin{equation}
4\left\lceil\frac{8\pi(\pi+1) d r_{\rm max}^2}{\epsilon}\right\rceil.\label{eq:cost2}
\end{equation}
Therefore, by combining~\eq{cost} and~\eq{cost2} we see that a state of the form $\sqrt{|A|}\ket{\Psi}\ket{y}\ket{y'}+ \sqrt{1-|A|}\ket{\Phi_{\rm bad}}$ can be constructed where $|A|\le (8/\pi^2)^2\approx 2/3$ using a number of oracle calls bounded above by
\begin{equation}
10\left\lceil\frac{8\pi(\pi+1) d r_{\rm max}^2}{\epsilon}\right\rceil,\label{eq:cost2b}
\end{equation}
\end{proof}

\thm{moment} follows trivially from the above results and the proof of this result is given below.
\begin{proofof}{Theorem 2}
Proof follows as a trivial consequence of taking $M=M'$ in~\cor{expamp}, applying~\lem{centroid} and observing that dividing the calculated distance by $\sigma_m$ is efficient irrespective of whether $M'>1$ or $M'=1$.  Note that the upper bound is not tight in cases where $M'=1$ because $\sigma_m$ does not need to be computed in such cases; nonetheless, removing this cost only reduces the expected query complexity by a constant factor so we do not change the theorem statement for simplicity.  Also, using a non--constant value for $M_1,\ldots,M_{M'}$ does not change the problem since the state preparation method of \lem{stateprep} takes $M$ as an input state that can be set to $M_1,\ldots,M_{M'}$ coherently.
\end{proofof}

With these results in hand we can now prove~\cor{moment}, which gives the query complexity of performing an iteration of $k$--means.
\begin{proofof}{\cor{moment}}
The first fact to notice is that unlike~\thm{moment}, the algorithm for $k$--means does not require that we normalize the distance.  This reduces the cost of the algorithm by a factor of $4/10$ using~\eq{cost} and~\eq{cost2}.  Also because the algorithm succeeds with probability $\pi/8$ rather than $(\pi/8)^2$, the costs are further reduced by a factor of $((\pi/8)^2-1/2)^2/((\pi/8)-1/2)^2$.
Thus we see from these observations that the total cost of the algorithm is 
$$360M\sqrt{k} \left\lceil\frac{8\pi(\pi+1) d r_{\max}^2}{\epsilon} \right\rceil\left\lceil\frac{\log\left(\frac{81k(\log(k) +\gamma)}{\delta_0} \right)}{2((8/\pi^2) -1/2)^2} \right\rceil.$$
\end{proofof}

\section{Justification for normalizing distance}\label{app:ratio}
An important question remains: when is the normalized distance between $\vec{u}$ and the cluster centroid a useful statistic in a machine learning task?  
Of course, as mentioned earlier, this statistic is not always optimal.
In cases where the training data points live on a complicated manifold, there will be many points that are close to the cluster centroid yet are not in the cluster.  
Even in such circumstances, the normalized distance leads to an upper bound on the probability that $\vec{u}$ is in the cluster.

For concreteness, let us assume that 
\begin{align}
|\vec{u} - {\rm mean}(\{A\})| &= \xi_A,\nonumber\\
|\vec{u} - {\rm mean}(\{B\})| &= \xi_B.
\end{align}
If we then define the intra--cluster variances to be $\sigma_A^2$ and $\sigma_B^2$ for clusters $\{A\}$ and $\{B\}$, then Chebyshev's inequality states that regardless of the underlying distributions of the clusters that for any point $x$
\begin{align}
\Pr(|x-{\rm mean}(\{A\})| \ge \xi_A| x\in \{A\}) &\le \frac{\sigma_A^2}{\xi_A^2},\nonumber\\
\Pr(|x-{\rm mean}(\{B\})| \ge \xi_B| x\in \{B\}) &\le \frac{\sigma_B^2}{\xi_B^2}.\label{eq:chebyshev}
\end{align}
Eq.~\eq{chebyshev} tells us that if the normalized distance is large then the probability that the point is in the corresponding cluster is small.

Unfortunately,~\eq{chebyshev} does not necessarily provide us with enough information to merit use in a decision problem because there is no guarantee that the inequalities are tight.  If Chebyshev's inequality is tight then basing a decision on the normalized distance is equivalent to the likelihood--ratio test, which is widely used in hypothesis testing.

\begin{theorem}\label{thm:justify}
Assume there exist positive numbers $a,b,\alpha,\beta$ such that for all $\chi\ge \min\{\xi_A,\xi_B \}$
\begin{align*}
a \frac{\sigma_A^2}{\chi^2}\le& \Pr(|x-{\rm mean}(\{A\})| \ge \chi|x\in \{A\})\le \alpha \frac{\sigma_A^2}{\chi^2}\\
b \frac{\sigma_B^2}{\chi^2}\le& \Pr(|x-{\rm mean}(\{B\})| \ge \chi|x\in \{B\})\le \beta \frac{\sigma_B^2}{\chi^2},
\end{align*}
and either $a\ge \beta$ or $\alpha \le b$, then using the normalized distance to the cluster centroid to decide whether $\vec{u}\in \{A\}$ or $\vec{u}\in \{B\}$ is equivalent to using the likelihood ratio test.
\end{theorem}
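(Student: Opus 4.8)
The plan is to make both decision rules fully explicit as comparisons of two scalar quantities and then show that their outcomes coincide by sandwiching the likelihood ratio between multiples of the ratio of normalized distances. First I would fix notation: write $s_A := \sigma_A^2/\xi_A^2$ and $s_B := \sigma_B^2/\xi_B^2$ for the reciprocals of the two normalized distances, so that the normalized--distance rule assigns $\vec{u}$ to $\{A\}$ precisely when $s_A \ge s_B$ (the smaller normalized distance wins). Next I would identify the likelihoods that enter the ratio test: interpreting the tail mass $P_A := \Pr(|x-{\rm mean}(\{A\})|\ge \xi_A \mid x\in\{A\})$, and likewise $P_B$, as the likelihood that the observed deviation is consistent with membership in the respective cluster, the likelihood--ratio test assigns $\vec{u}$ to $\{A\}$ exactly when $\Lambda := P_A/P_B \ge 1$. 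The content of the theorem is then that the sign of $\log\Lambda$ tracks the sign of $s_A-s_B$.

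The central step would be to invoke the hypotheses at $\chi=\xi_A$ and $\chi=\xi_B$; these evaluations are legitimate because $\xi_A,\xi_B\ge\min\{\xi_A,\xi_B\}$, the range over which the two--sided bounds are assumed to hold. This yields $a\,s_A\le P_A\le \alpha\,s_A$ and $b\,s_B\le P_B\le \beta\,s_B$, and dividing the extreme cases gives the sandwich
$$
\frac{a}{\beta}\,\frac{s_A}{s_B}\;\le\;\Lambda\;\le\;\frac{\alpha}{b}\,\frac{s_A}{s_B}.
$$
From here I would split on the disjunctive hypothesis. If $a\ge\beta$ and the normalized--distance rule selects $\{A\}$ (so $s_A/s_B\ge 1$), the left inequality forces $\Lambda\ge a/\beta\ge 1$, so the ratio test also selects $\{A\}$. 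Symmetrically, if $\alpha\le b$ and the normalized--distance rule selects $\{B\}$ (so $s_A/s_B\le 1$), the right inequality forces $\Lambda\le\alpha/b\le 1$, so the ratio test selects $\{B\}$. In either regime the two rules return the same label, which is the claimed equivalence; the tie $s_A=s_B$ is handled by the same inequalities together with the convention used to break ties.

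I expect the main obstacle to be conceptual rather than computational: pinning down exactly which tail probabilities play the role of the competing likelihoods, so that the Chebyshev--type scaling $\sim\sigma^2/\chi^2$ is the correct surrogate for the likelihood, and then matching the direction of each one--sided bound (a lower bound on the numerator paired with an upper bound on the denominator, or vice versa) to the correct decision direction. Once the sandwich bound above is in place the inequality chasing is routine, but the bookkeeping around the disjunction---noting that $a\ge\beta$ controls the \emph{assign to} $\{A\}$ branch while $\alpha\le b$ controls the \emph{assign to} $\{B\}$ branch, and that the two conditions correspond to genuinely different regimes given the implicit orderings $a\le\alpha$ and $b\le\beta$---is where the care is needed.
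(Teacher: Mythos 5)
Your proof takes essentially the same route as the paper's: the paper likewise identifies the competing likelihoods with the tail probabilities evaluated at $\chi=\xi_A$ and $\chi=\xi_B$, pairs the lower bound $a\,\sigma_A^2/\xi_A^2\le P_A$ with the upper bound $P_B\le\beta\,\sigma_B^2/\xi_B^2$ (and the reverse pairing for the $\{B\}$ branch) to control the likelihood ratio, and splits on $a\ge\beta$ versus $\alpha\le b$ exactly as you do. The only cosmetic differences are that you package the two one-sided comparisons as a single sandwich on $\Lambda$ and explicitly flag the tie case $s_A=s_B$, which the paper glosses over.
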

\begin{proof}
The likelihood ratio test concludes that $\vec{u}$ should be assigned to $A$ if 
\begin{equation}
\frac{\Pr(|\vec{u}-{\rm mean}(\{A\})| \ge \chi|x\in \{A\})}{ \Pr(|\vec{u}-{\rm mean}(\{B\})| \ge \chi|x\in \{B\})}> 1.\label{eq:prbound}
\end{equation}
Our assumptions show that~\eq{prbound} is implied if
\begin{equation}
\frac{a \frac{\sigma_A^2}{\xi_A^2}}{\beta \frac{\sigma_B^2}{\xi_B^2}}> 1\Rightarrow \frac{a}{\beta}> \left(\left(\frac{\sigma_B}{\xi_B}\right)\left(\frac{\sigma_A}{\xi_A}\right)^{-1} \right)^2.\label{eq:likelihoodA}
\end{equation}
If the normalized distance is used as the classification decision, then $\vec{u}$ is assigned to $\{A\}$ if $\left(\left(\frac{\sigma_B}{\xi_B}\right)\left(\frac{\sigma_A}{\xi_A}\right)^{-1} \right)\le 1$.  Therefore the two tests make the same assignment if $a\ge\beta$.

The likelihood ratio test similarly assigns $\vec{u}$ to $\{B\}$ if 
\begin{equation}
\frac{\Pr(|\vec{u}-{\rm mean}(\{A\})| \ge \chi|x\in \{A\})}{ \Pr(|\vec{u}-{\rm mean}(\{B\})| \ge \chi|x\in \{B\})}< 1,
\end{equation}
which, similar to~\eq{likelihoodA} is implied by $\frac{\alpha}{b}< \left(\left(\frac{\sigma_B}{\xi_B}\right)\left(\frac{\sigma_A}{\xi_A}\right)^{-1} \right)^2$ and is further equivalent to the distance--based assignment if $\alpha \le b$.
\end{proof}

\thm{justify} shows that the validity of distance--based assignment depends strongly on the tightness of Chebyshev's bound; however, it is not necessarily clear a priori whether lower bounds on $\Pr(|\vec{u}-{\rm mean}(\{A\})| \ge \chi|x\in \{A\})$ and $\Pr(|\vec{u}-{\rm mean}(\{B\})| \ge \chi|x\in \{B\})$ exist for values of $a$ and $b$ that are non--zero.  Such bounds clearly exist if, for example, $\{A\}$ and $\{B\}$ are both drawn from Gaussian distributions. This follows because $\Pr(|\vec{u}-{\rm mean}(\{B\})| \ge \chi|x\in \{B\})$ can be upper-- and lower--bounded by a function of the distance and appropriate values of $a$ and $\alpha$ can be extracted from the covariance matrix.  Since both distributions are the same in this case, $a=b$, and $\alpha=\beta$, the normalized distance is well motivated if $\{A\}$ and $\{B\}$ are drawn from two Gaussian distributions whose centroids are sufficiently distant from each other.

\section{Sensitivity of decision problem}\label{app:conc}
Although our quantum algorithms for computing the inner product and Euclidean distance provide better scaling with $N$ and $M$ (the dimension of the vectors and the number of vectors in the training set) than their classical analogs, the quantum algorithms introduce a $O(1/\epsilon)$ scaling with the noise tolerance (where $\epsilon$ is the error tolerance in the distance computation).  If the typical distances in the assignment set shrink as $1/N^{\gamma}$ for positive integer $\gamma$, then it is possible that the savings provided by using a quantum computer could be negated because $\epsilon^{-1}\in \Omega(N^\gamma)$ in such cases.

We will now show that in ``typical'' cases where the vectors are uniformly distributed over the unit sphere that $\epsilon \in \Theta(1/\sqrt{N})$ will suffice with high probability.  Since our nearest--neighbor algorithms scale as $\sqrt{M}/\epsilon$ (which in the Euclidean case corresponds to $M'=M$) this implies that our algorithms' cost scales as $O(\sqrt{NM})$. 
This scales quadratically better than its classical analog or Monte--Carlo sampling.  

This result is similar to concentration of measure arguments over the hypersphere, which show that almost all unit vectors are concentrated in a band of width $O(1/\sqrt{N})$ about any equator of the hypersphere \cite{Led05}.  The concentrated band of vectors is the origin of the ``curse of dimensionality'', which implies that almost all random unit vectors are within a Euclidean distance of $\sqrt{2} - O(1/\sqrt{N})$ of any fixed unit vector.  This means that the underlying distribution of distances in nearest--neighbor learning tends to flatten as $N\rightarrow \infty$, implying that very accurate estimates of the distances may be needed in high--dimensional spaces.  

For Haar--random vectors, it can be shown that the probability distribution for the magnitude of each component of the vector is (to within negligible error in the limit of large $N$) independently distributed and has a probability density of~\cite{PR03}
\begin{equation}
\rho(|v_{jk}|=r)= 2(N-1)r(1-r^2)^{N-2},
\end{equation}
which after a substitution gives
\begin{equation}
\rho(\sqrt{N}|v_{jk}|=u)= \frac{2(N-1)u}{\sqrt{N}}\left(1-\frac{u^2}{N}\right)^{N-2}\sim 2u\sqrt{N} e^{-u^2}=2rNe^{-Nr^2}.\label{eq:r}
\end{equation}
Chebyshev's inequality then can be used to show that with high probability $r\in \Theta(1/\sqrt{N})$ for each component, and~\eq{r} shows that the distribution varies smoothly and hence it is highly probable that the differences between any two components of such random vectors are $\Theta(1/\sqrt{N})$.

Since the Haar measure is invariant under unitary transformation, we can take $\vec{u}$ to be a basis vector without loss of generality.  Then if we define $\vec{w}$ and $\vec{z}$ to be the two closest vectors we see that with high probability in the limit as $N\rightarrow \infty$
\begin{align}
|\vec{u} -\vec{w}|^2 &= (1-\vec{w}_1)^2 + \sum_{j=2}^N \vec{w}_j^2 \nonumber\\
&= (1-\vec{w}_1)^2 + (1-|\vec{w}_1|^2)\nonumber\\
&= 1 +  (1-\vec{w}_1)^2 + O(1/N)\nonumber\\
& = 2 + \Theta(1/\sqrt{N}),
\end{align}
where the last line follows from the observation that with high probability  $|\vec{w}_1|\in \Theta(1/\sqrt{N})$.  By repeating the same argument we see that
\begin{equation}
|\vec{u} -\vec{z}|^2  = 2 +\Theta(1/\sqrt{N}),
\end{equation}
and hence from the fact that the distribution of distances is smooth and the components of $\vec{w}$ and $\vec{z}$ are independent we see that
\begin{equation}
|\vec{u} -\vec{z}|^2-|\vec{u} -\vec{w}|^2  \in \Theta(1/\sqrt{N}).
\end{equation}
This suggests that $\epsilon\in O( 1/\sqrt{N})$ for the case where the members of the training set are Haar--random vectors.  We demonstrate this point numerically below.

%Concentration of measure arguments are sufficient to say that the distance gap between $\vec{u}$ and the closest vector in the assignment set is $O(1/\sqrt{N})$, but it is not sufficient to guarantee that this gap is non--zero since both vectors could be in the intersection of a cone with $\vec{u}$ at the point and the unit hypersphere.  Since the concentration of measure argument does not give the distribution of distances within this band, we turn to numerical evidence to show that the gap in the distances between the closest vector to $\vec{u}$ and the second--closest vector to $\vec{u}$ is $O(1/\sqrt{N})$.

\begin{figure}[h!]
\centering
%\begin{minipage}{0.45\linewidth}
\includegraphics[width=0.5\linewidth]{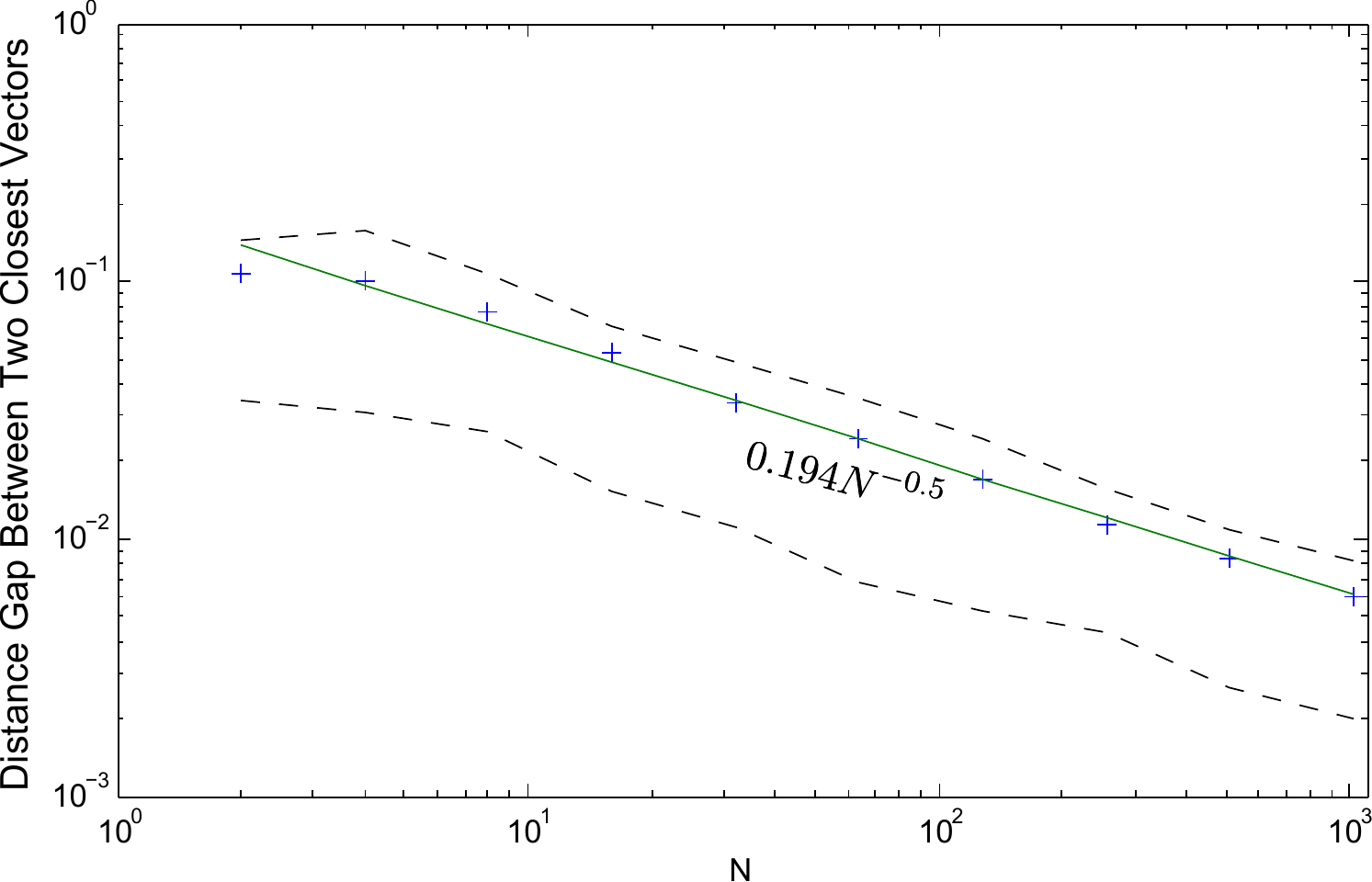}
%\end{minipage}
%\hspace{0.5cm}
%\begin{minipage}{0.45\linewidth}
%\includegraphics[width=\linewidth]{pert_sensitivity_goe.pdf}
%\end{minipage}
\caption{\label{fig:sensitivity} Mean difference in Euclidean distance between two closest vectors to $\vec{u}$ for vectors randomly chosen according to the Haar measure.  The plot is computed for $M=100$ using $100$ trials per value of $N$.  The blue crosses show the mean values of the distances and the dashed lines give a $50\%$ confidence interval on the distance.  The green line gives the the best powerlaw fit to the data.}
\end{figure}

We generate the data in~\fig{sensitivity} by generating a large set of random vectors chosen uniformly with respect to the Haar measure.  We take $\ket{u}=\ket{0}$ in all these examples without loss of generality because the measure is rotationally invariant.  We then compute for each $j=1,\ldots,M$ $|\vec{u}-\vec{w}_j|_2$ and sort the distances between $\vec{u}$ and each of the randomly drawn vectors.  Finally, we compute the distance gap, or the difference between the two smallest distances, and repeat this $100$ times in order to get reliable statistics about these differences.  

\fig{sensitivity} shows that the difference between these two distances tends to be on the order of $1/\sqrt{N}$ as anticipated from concentration of measure arguments.  It is easy to see from Taylor's theorem that the differences in the square distances is also $O(1/\sqrt{N})$. 
Hence taking $\epsilon\in O(1/\sqrt{N})$ will suffice with high--probability since an error of this scale or smaller will not be sufficient to affect the decision about the identity of the nearest vector.

In contrast, the scaling with $M$ is much less interesting because the volume expands exponentially with $N$.
Hence it takes a very large number of points to densely cover a hypersphere.  For this reason, we focus on the scaling with $N$ rather than $M$.
However, for problems with small $N$, large $M$, and no discernable boundaries between $U$ and $V$, this issue could potentially be problematic.

We can now estimate the regime in which our quantum algorithms will have a definite advantage over a brute--force classical computation.  We assume that $\epsilon=1/\sqrt{N}$, $\delta_0=0.5$, $dr_{\max}^2=1$ and $M'=M$.  We then numerically compute the points where the upper bounds on the query complexity in \thm{ip} and \thm{moment} equal the cost of a brute--force classical computation.  We use these points to estimate the regime where our quantum algorithms be cost--advantageous over classical brute--force classification.  
As seen in \fig{costregion}, our quantum algorithms exhibit superior time complexities for a large range of $M$ and $N$ values.
This trade--off point for the Euclidean method occurs when $M \approx 10^{16} N^{-1.07}$, and $M\approx 2\times 10^{14} N^{-1.08}$ for the inner--product method.
%\comment{KMS: Compute slope to give a relation between N and M? Or trade-off between N and M for where quantum is better?  Give this relation explicitly in the text.}

It is important to note that the upper bounds on the query complexity are not expected to be tight, which means that we cannot say with confidence that our quantum algorithms will not be beneficial if the upper bounds are less than $NM$.  
Tighter bounds on the query complexity of the algorithm may be needed in order to give a better estimate of the performance of our algorithm in typical applications.

\begin{figure}[h!]
\centering
%\begin{minipage}{0.45\linewidth}
\includegraphics[width=0.5\linewidth]{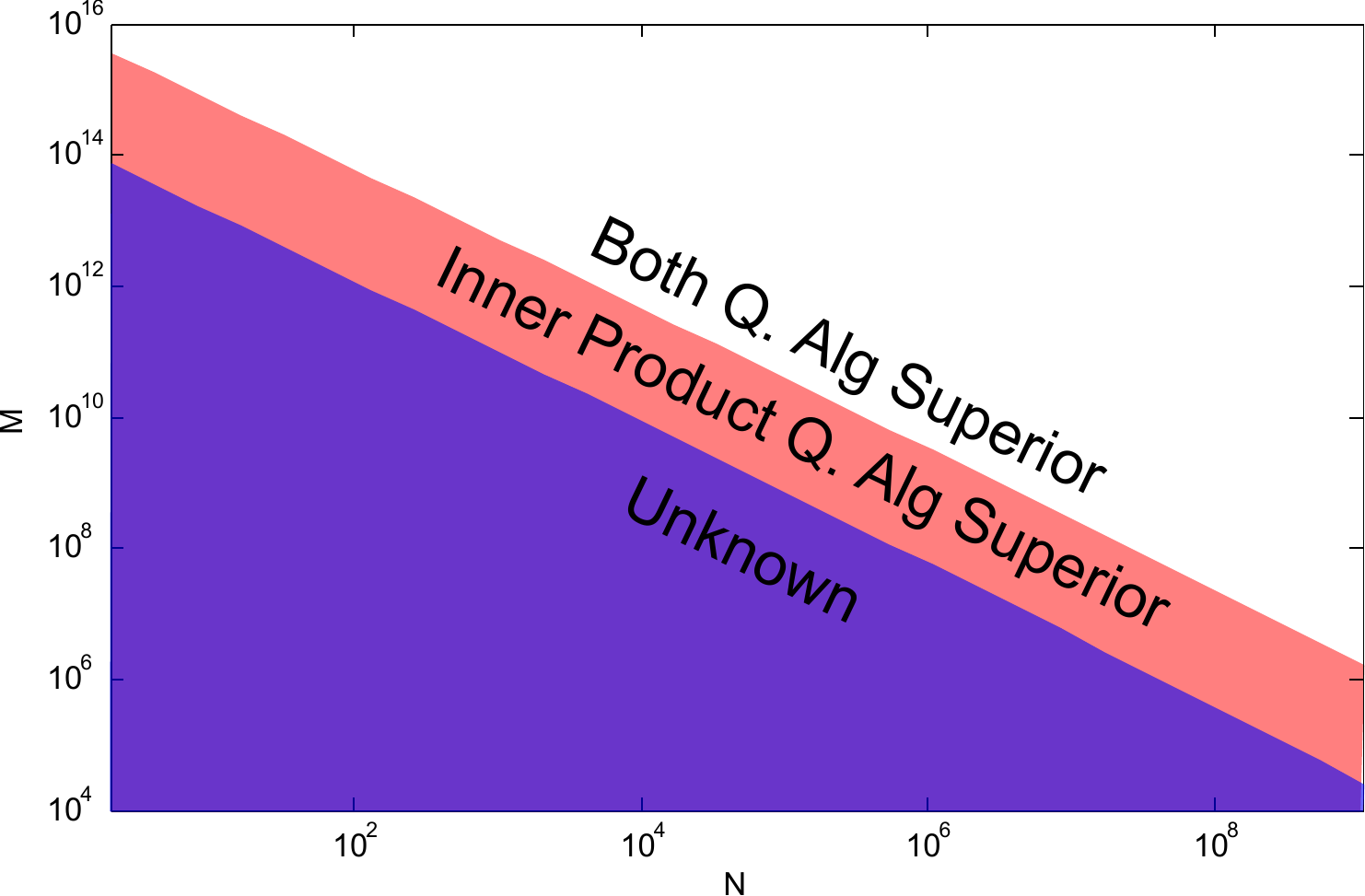}
%\end{minipage}
%\hspace{0.5cm}
%\begin{minipage}{0.45\linewidth}
%\includegraphics[width=\linewidth]{pert_sensitivity_goe.pdf}
%\end{minipage}
\caption{\label{fig:costregion} Estimated regions where our quantum algorithms are cost-advantageous over a brute--force classical calculation.  The shaded regions represent the parameter space where the upper bounds from \thm{ip} and \thm{moment} are greater than the brute--force classical cost of $NM$. }
\end{figure}

\section{$k$--nearest--neighbor classification}\label{app:kNN}
In practice, nearest--neighbor classification runs into difficulties when faced with errors in the labels in the training set or when faced with data sets that have a significant number of outliers.  One way to combat these issues is to use $k$--nearest--neighbor classification where the class label is chosen to be the mode of the labels of the $k$ closest training vectors to the test vector.  This makes the assignment much more robust to both errors by averaging over the labels of the closest vectors to the test vector.

The cost of generalizing a nearest--neighbor classification algorithm to a $k$--nearest--neighbor algorithm is given below.
\begin{lemma}
Assume that the query complexity of classifying a test vector using either the nearest--neighbor Euclidean method or the inner--product method is $Q$, then the query complexity of performing either algorithm as a $k$--nearest--neighbor algorithm is at most $kQ$.
\end{lemma}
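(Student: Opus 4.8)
The plan is to reduce $k$--nearest--neighbor classification to $k$ successive applications of the single nearest--neighbor subroutine. Recall that the $k$--nearest--neighbor rule assigns $\vec{u}$ the mode of the labels of the $k$ training vectors closest to $\vec{u}$. Since the label of a training vector is determined by its (classically known) cluster membership, it suffices to identify the indices of the $k$ smallest distances $\{|\vec{v}_j - \vec{u}|\}$ and then take a majority vote. Neither the majority vote nor the label lookup requires any query to $\mathcal{O}$ or $\mathcal{F}$, so the entire oracle cost is incurred in locating these $k$ indices.

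First I would run the nearest--neighbor algorithm once, exactly as in \thm{ip} or \thm{moment}, to obtain the index $j_1={\rm argmin}_j |\vec{v}_j - \vec{u}|$ at cost at most $Q$. To obtain the second--closest vector I would rerun the identical algorithm, but with the D\"urr H\o yer minimum--finding step (\lem{durr}) restricted to search only over the complement of $\{j_1\}$. Concretely, the internal Grover oracle of the minimum--finding routine marks those indices whose stored distance lies below the current threshold; I augment this marking predicate with the additional condition $j\notin\{j_1\}$, which is a comparison of the index register against a classical list of the previously located indices. This predicate acts only on the index register and on fresh ancillas, so it is implemented with \emph{no} additional calls to $\mathcal{O}$ or $\mathcal{F}$; the distance--computing subroutine used inside each Grover iteration is left completely unchanged. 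Iterating, the $\ell$--th run excludes the already--found set $\{j_1,\dots,j_{\ell-1}\}$ and returns $j_\ell$, the $\ell$--th closest vector, at cost at most $Q$.

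Summing over $\ell=1,\dots,k$, the total query complexity is at most $kQ$, as claimed. The step I expect to require the most care is verifying that excluding the previously located indices genuinely incurs no oracle cost and does not disturb the measurement--free, coherent structure on which the distance estimation and minimum--finding rely. This holds because the excluded--index test is a predicate on the index register alone, computable from a constant--size classical register, and because the $k$ runs are performed sequentially and independently: one reads out $j_\ell$ from the output of the $\ell$--th minimum--finding run before launching run $\ell+1$, so no coherence needs to be maintained across runs. A minor accompanying point is that the overall failure probability is controlled by a union bound over the $k$ runs, which only rescales the logarithmic factor already absorbed into $Q$ and hence does not affect the stated $kQ$ bound.
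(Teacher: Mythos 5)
Your proposal is correct and follows essentially the same approach as the paper's proof: both run the algorithm $k$ times sequentially, modifying the Grover marking oracle inside the D\"urr--H\o yer step to exclude the classically known, previously found indices at zero additional oracle cost, so that each run costs at most $Q$ (using $\sqrt{M-k'}\le\sqrt{M}$) and the total is $kQ$. Your added remark on controlling the overall failure probability via a union bound is a sensible refinement that the paper's proof leaves implicit.
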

\begin{proof}
Our proof proceeds inductively.  The claim is clear for the case where $k=1$, which demonstrates the base case.  Now assume that we know the $k'$ nearest--neighbors to the test vector and wish to find the $k'+1$ vector.  This can be done by applying the same steps used in either classifier after removing the first $k'$ vectors from the set.  The complexity of performing D\"urr H\o yer optimization over the distances computed for this reduced set of vectors is given by~\lem{durr} to be at most $$\frac{45}{2} \sqrt{M-k'}\le \frac{45}{2}\sqrt{M}.$$
This implies that the cost of doing so is at most $Q$ because the only step of the algorithm that needs to change is the search at the outermost loop of either of our algorithms, hence the total cost is at most $(k'+1)Q$ queries.
Thus if we can remove these vectors from consideration then it follows by induction that the cost of nearest--neighbor classification is at most $kQ$ queries.

The vectors do not need to be directly removed from consideration.  Rather, the Grover oracle used in the D\"urr H\o yer algorithm can be modified to ignore the $k'$ closest vectors from consideration.  Let $\{{\rm marked}\}$ denote the marked set for the Grover oracle, then it is straightforward to construct a circuit that marks the set $\{\rm marked\}\setminus \{\vec{v}_{1},\ldots,\vec{v}_{k'}\}$ where we assume without loss of generality that $\vec{v}_1,\ldots,\vec{v}_{k'}$ are the $k'$ closest vectors.  Since these vectors are known classically, we do not need any further queries to remove these vectors from consideration in the modified Grover oracle.  Thus we can assume that the first $k'$ vectors can be effectively removed from the training set without changing the number of queries, which completes our proof.
\end{proof}

Other algorithms for computing the $k$ nearest--neighbors exist.  The neighbors can be individually estimated using the methods of~\cite{NW99}, which provides an $O(\sqrt{k}\log(k){\rm loglog}(k))$ algorithm for approximating the smallest element.  Although not directly more efficient, since all $k$ such values must be found, it is better suited for parallel quantum computation because each of the $k$ nearest--neighbors can be found independently. Another heuristic approach would be to use the D\"urr H\o yer algorithm to focus on a subset of points and then sample from those points to find the $k$--nearest vectors.

\section{Pseudocode}\label{app:pseudo}
For clarity, we provide pseudocode explaining how to perform several of the more important steps in the quantum algorithm including coherent amplitude estimation and calculation of distances using the inner product and Euclidean methods.  In particular, we explicitly show how to compute distances explicitly using the inner product method for non--positive vectors.
\begin{figure}[h!]
\includegraphics[width=0.9\linewidth]{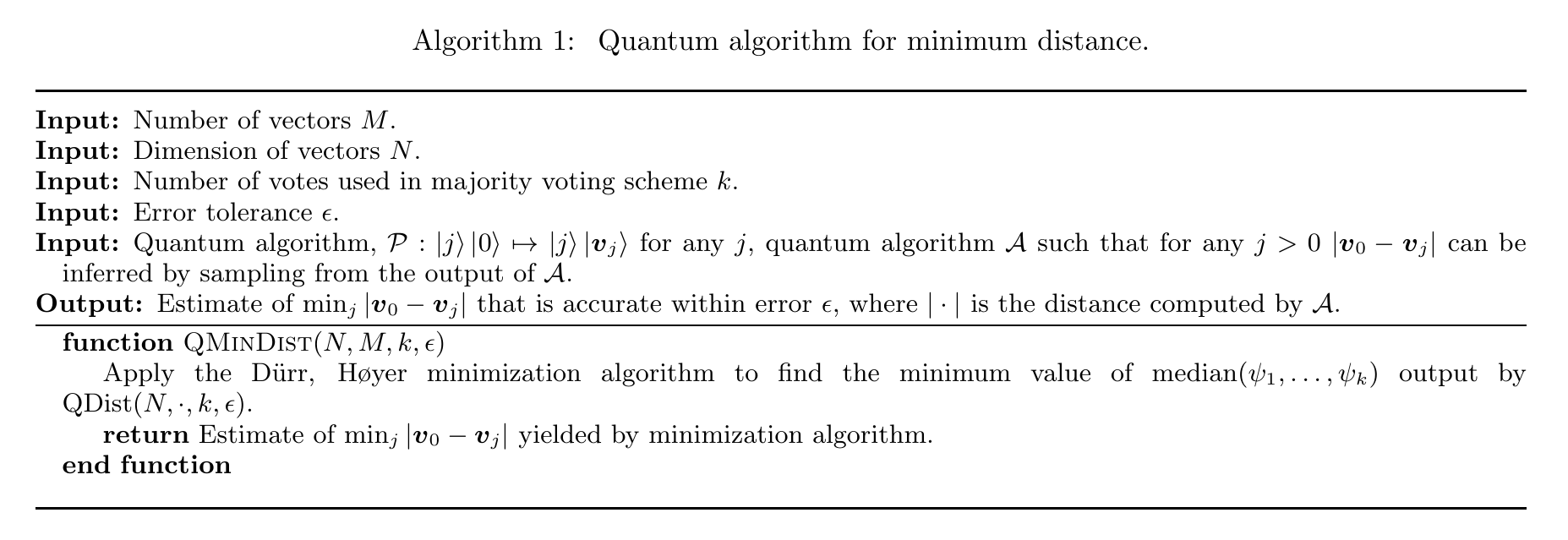}
\end{figure}
\begin{figure}[h!]
\includegraphics[width=0.9\linewidth]{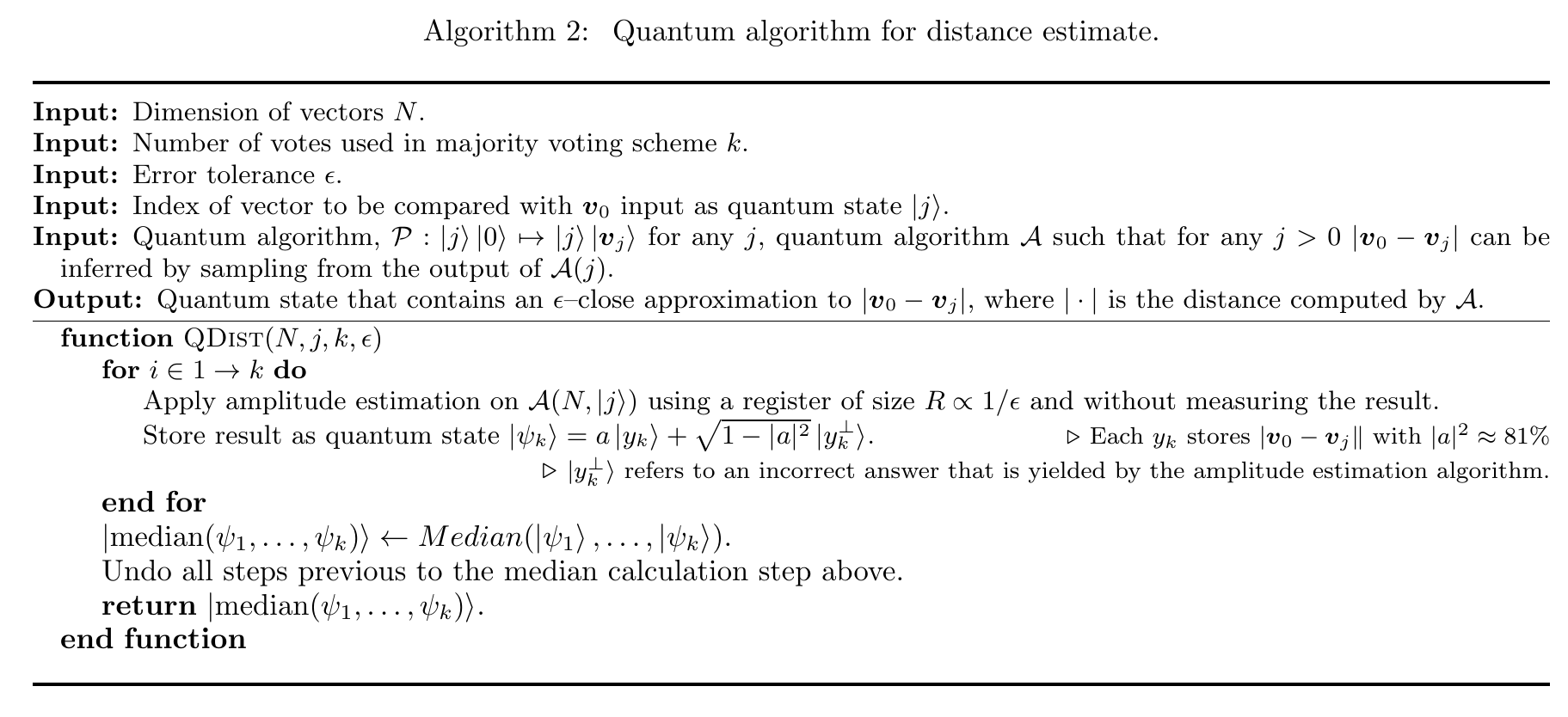}
\end{figure}
\begin{figure}[h!]
\includegraphics[width=0.9\linewidth]{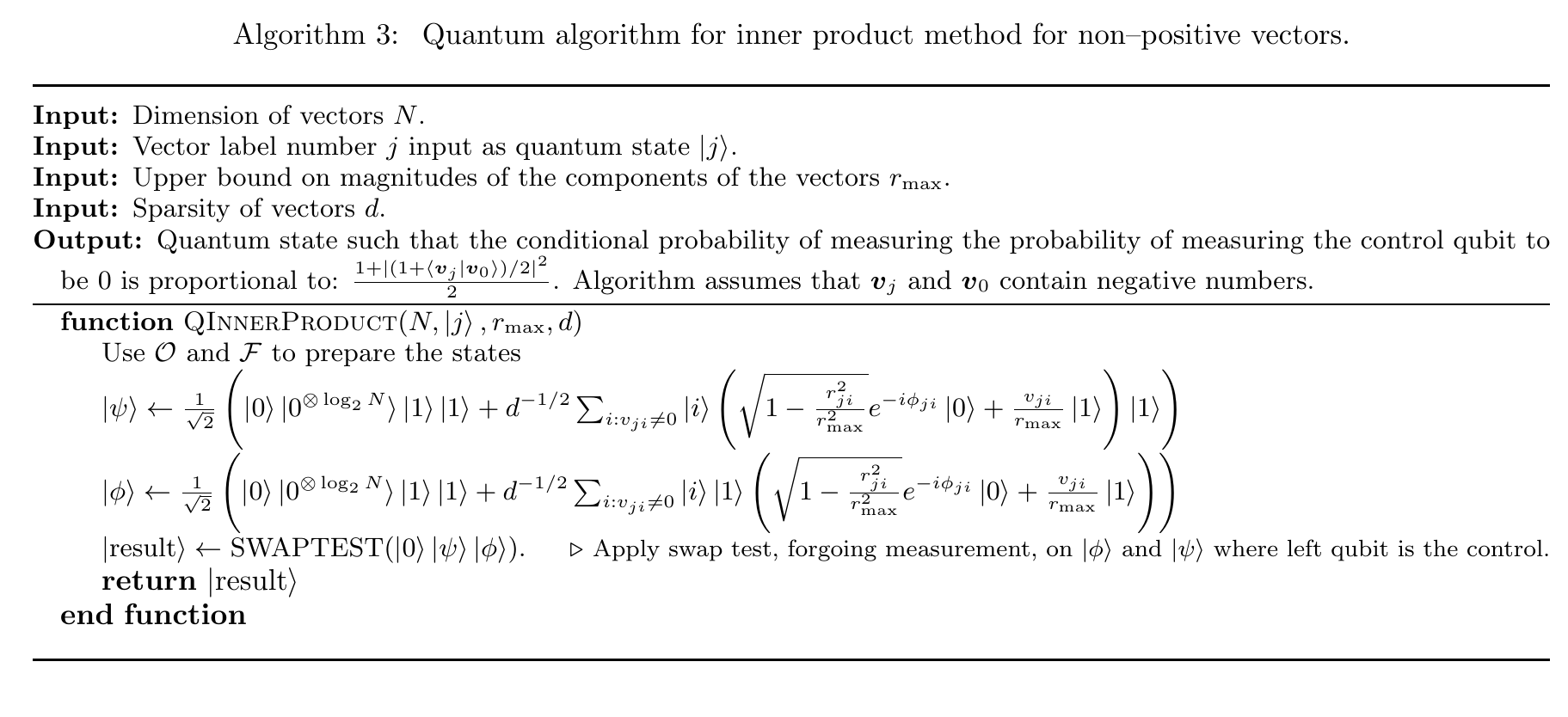}
\end{figure}
\begin{figure}[h!]
\includegraphics[width=0.9\linewidth]{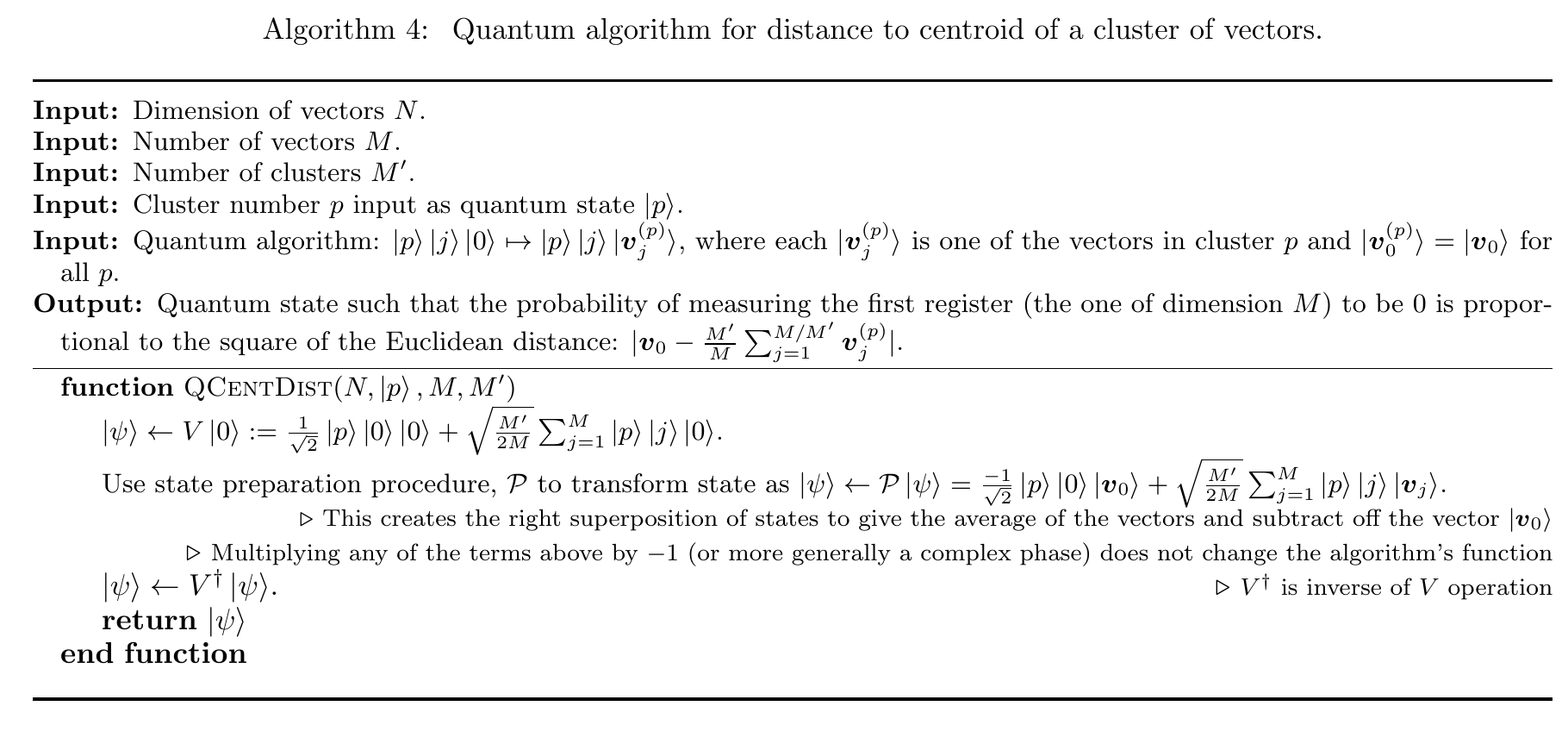}
\end{figure}
\begin{figure}[h!]
\includegraphics[width=0.9\linewidth]{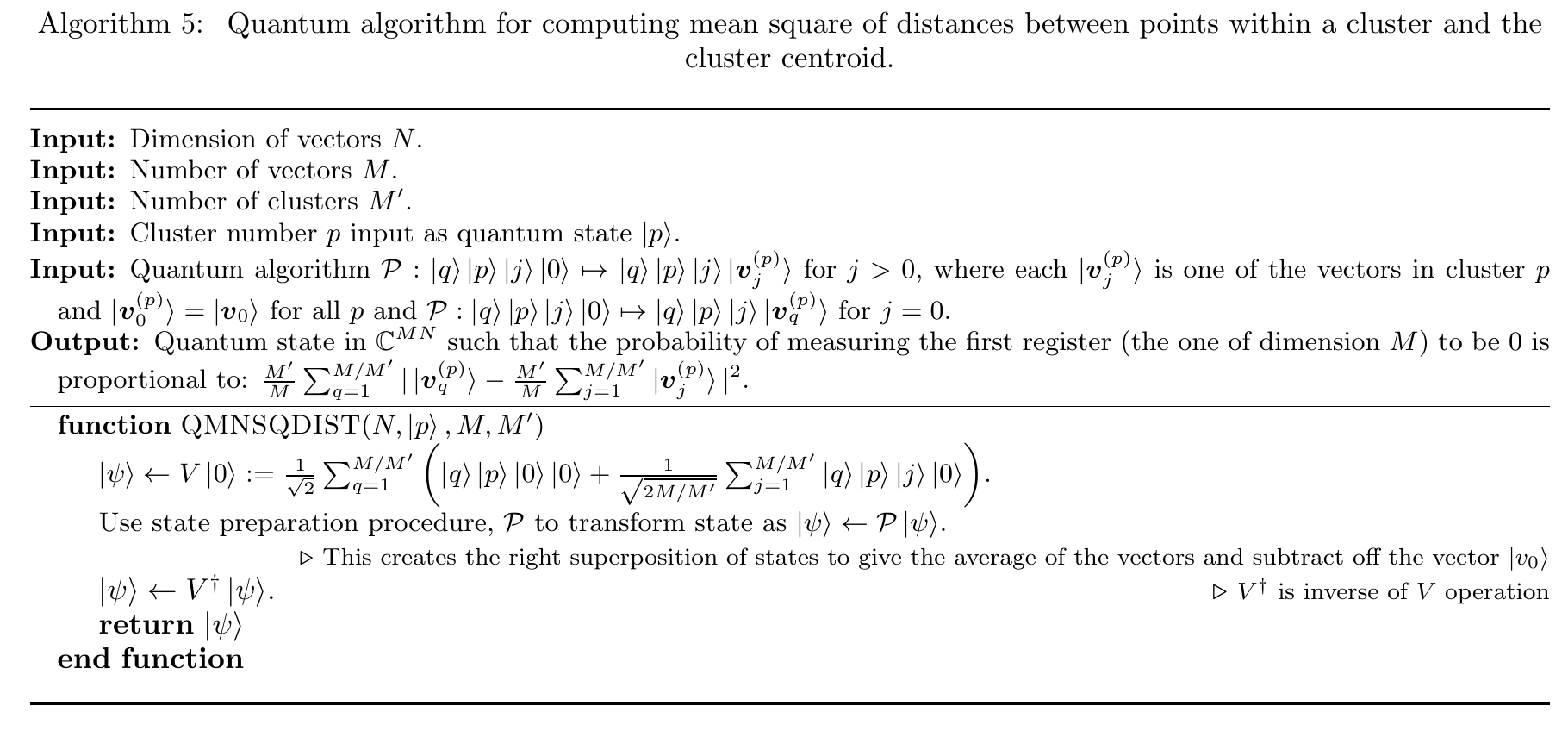}
\end{figure}
\end{document}